\newcommand{\tr}{\mathrm{tr}}
\newcommand{\cov}{\mathrm{cov}}
\providecommand{\BOXEDSPECIAL}[4]{\hbox to #2{\raise #3\hbox to #2{\null #1\hfil}}}
\chardef\@x10\chardef\@xv60
\def\tcitime{
\def\@time{%
  \@minute\time\@hour\@minute\divide\@hour\@xv
  \ifnum\@hour<\@x 0\fi\the\@hour:%
  \multiply\@hour\@xv\advance\@minute-\@hour
  \ifnum\@minute<\@x 0\fi\the\@minute
  }}%
\def\QCTOpt[#1]#2{%
  \def\QCTOptB{#1}
  \def\QCTOptA{#2}
}
\def\QCTNOpt#1{%
  \def\QCTOptA{#1}
  \let\QCTOptB\empty
}
\def\Qct{%
  \@ifnextchar[{%
    \QCTOpt}{\QCTNOpt}
}
\def\QCBOpt[#1]#2{%
  \def\QCBOptB{#1}
  \def\QCBOptA{#2}
}
\def\QCBNOpt#1{%
  \def\QCBOptA{#1}
  \let\QCBOptB\empty
}
\def\Qcb{%
  \@ifnextchar[{%
    \QCBOpt}{\QCBNOpt}
}
\def\PrepCapArgs{%
  \ifx\QCBOptA\empty
    \ifx\QCTOptA\empty
      {}%
    \else
      \ifx\QCTOptB\empty
        {\QCTOptA}%
      \else
        [\QCTOptB]{\QCTOptA}%
      \fi
    \fi
  \else
    \ifx\QCBOptA\empty
      {}%
    \else
      \ifx\QCBOptB\empty
        {\QCBOptA}%
      \else
        [\QCBOptB]{\QCBOptA}%
      \fi
    \fi
  \fi
}
\def\GRAPHICSPS#1{%
 \ifcase\GRAPHICSTYPE
   \special{ps: #1}%
 \or
   \special{language "PS", include "#1"}%
 \fi
}%
\def\graffile#1#2#3#4#5{%
    \bgroup
    \leavevmode
    \@ifundefined{bbl@deactivate}{\def~{\string~}}{\activesoff}
    \raise -#4 \BOXTHEFRAME{%
       \BOXEDSPECIAL{#1}{#2}{#3}{#5}}%
    \egroup
}%
\def\draftbox#1#2#3#4{%
 \leavevmode\raise -#4 \hbox{%
  \frame{\rlap{\protect\tiny #1}\hbox to #2%
   {\vrule height#3 width\z@ depth\z@\hfil}%
  }%
 }%
}%
\newif\ifwasdraft
\def\GRAPHIC#1#2#3#4#5{%
 \ifnum\draft=\@ne\draftbox{#2}{#3}{#4}{#5}%
  \else\graffile{#1}{#3}{#4}{#5}{#2}%
  \fi
 }%
\def\addtoLaTeXparams#1{%
    \edef\LaTeXparams{\LaTeXparams #1}}%
\newif\ifBoxFrame \BoxFramefalse
\newif\ifOverFrame \OverFramefalse
\newif\ifUnderFrame \UnderFramefalse
\def\BOXTHEFRAME#1{%
   \hbox{%
      \ifBoxFrame
         \frame{#1}%
      \else
         {#1}%
      \fi
   }%
}
\def\doFRAMEparams#1{\BoxFramefalse\OverFramefalse\UnderFramefalse\readFRAMEparams#1\end}%
\def\readFRAMEparams#1{%
 \ifx#1\end%
  \let\next=\relax
  \else
  \ifx#1i\dispkind=\z@\fi
  \ifx#1d\dispkind=\@ne\fi
  \ifx#1f\dispkind=\tw@\fi
  \ifx#1t\addtoLaTeXparams{t}\fi
  \ifx#1b\addtoLaTeXparams{b}\fi
  \ifx#1p\addtoLaTeXparams{p}\fi
  \ifx#1h\addtoLaTeXparams{h}\fi
  \ifx#1X\BoxFrametrue\fi
  \ifx#1O\OverFrametrue\fi
  \ifx#1U\UnderFrametrue\fi
  \ifx#1w
    \ifnum\draft=1\wasdrafttrue\else\wasdraftfalse\fi
    \draft=\@ne
  \fi
  \let\next=\readFRAMEparams
  \fi
 \next
 }%
\def\IFRAME#1#2#3#4#5#6{%
      \bgroup
      \let\QCTOptA\empty
      \let\QCTOptB\empty
      \let\QCBOptA\empty
      \let\QCBOptB\empty
      #6%
      \parindent=0pt%
      \leftskip=0pt
      \rightskip=0pt
      \setbox0 = \hbox{\QCBOptA}%
      \@tempdima = #1\relax
      \ifOverFrame
          \typeout{This is not implemented yet}%
          \show\HELP
      \else
         \ifdim\wd0>\@tempdima
            \advance\@tempdima by \@tempdima
            \ifdim\wd0 >\@tempdima
               \textwidth=\@tempdima
               \setbox1 =\vbox{%
                  \noindent\hbox to \@tempdima{\hfill\GRAPHIC{#5}{#4}{#1}{#2}{#3}\hfill}\\%
                  \noindent\hbox to \@tempdima{\parbox[b]{\@tempdima}{\QCBOptA}}%
               }%
               \wd1=\@tempdima
            \else
               \textwidth=\wd0
               \setbox1 =\vbox{%
                 \noindent\hbox to \wd0{\hfill\GRAPHIC{#5}{#4}{#1}{#2}{#3}\hfill}\\%
                 \noindent\hbox{\QCBOptA}%
               }%
               \wd1=\wd0
            \fi
         \else
            \ifdim\wd0>0pt
              \hsize=\@tempdima
              \setbox1 =\vbox{%
                \unskip\GRAPHIC{#5}{#4}{#1}{#2}{0pt}%
                \break
                \unskip\hbox to \@tempdima{\hfill \QCBOptA\hfill}%
              }%
              \wd1=\@tempdima
           \else
              \hsize=\@tempdima
              \setbox1 =\vbox{%
                \unskip\GRAPHIC{#5}{#4}{#1}{#2}{0pt}%
              }%
              \wd1=\@tempdima
           \fi
         \fi
         \@tempdimb=\ht1
         \advance\@tempdimb by \dp1
         \advance\@tempdimb by -#2%
         \advance\@tempdimb by #3%
         \leavevmode
         \raise -\@tempdimb \hbox{\box1}%
      \fi
      \egroup%
}%
\def\DFRAME#1#2#3#4#5{%
 \begin{center}
     \let\QCTOptA\empty
     \let\QCTOptB\empty
     \let\QCBOptA\empty
     \let\QCBOptB\empty
     \ifOverFrame 
        #5\QCTOptA\par
     \fi
     \GRAPHIC{#4}{#3}{#1}{#2}{\z@}
     \ifUnderFrame 
        \nobreak\par\nobreak#5\QCBOptA
     \fi
 \end{center}%
 }%
\def\FFRAME#1#2#3#4#5#6#7{%
 \begin{figure}[#1]%
  \let\QCTOptA\empty
  \let\QCTOptB\empty
  \let\QCBOptA\empty
  \let\QCBOptB\empty
  \ifOverFrame
    #4
    \ifx\QCTOptA\empty
    \else
      \ifx\QCTOptB\empty
        \caption{\QCTOptA}%
      \else
        \caption[\QCTOptB]{\QCTOptA}%
      \fi
    \fi
    \ifUnderFrame\else
      \label{#5}%
    \fi
  \else
    \UnderFrametrue%
  \fi
  \begin{center}\GRAPHIC{#7}{#6}{#2}{#3}{\z@}\end{center}%
  \ifUnderFrame
    #4
    \ifx\QCBOptA\empty
      \caption{}%
    \else
      \ifx\QCBOptB\empty
        \caption{\QCBOptA}%
      \else
        \caption[\QCBOptB]{\QCBOptA}%
      \fi
    \fi
    \label{#5}%
  \fi
  \end{figure}%
 }%
\def\makeactives{
  \catcode`\"=\active
  \catcode`\;=\active
  \catcode`\:=\active
  \catcode`\'=\active
  \catcode`\~=\active
}
   \gdef\activesoff{%
      \def"{\string"}
      \def;{\string;}
      \def:{\string:}
      \def'{\string'}
    }
\def\FRAME#1#2#3#4#5#6#7#8{%
 \bgroup
 \ifnum\draft=\@ne
   \wasdrafttrue
 \else
   \wasdraftfalse%
 \fi
 \def\LaTeXparams{}%
 \dispkind=\z@
 \def\LaTeXparams{}%
 \doFRAMEparams{#1}%
 \ifnum\dispkind=\z@\IFRAME{#2}{#3}{#4}{#7}{#8}{#5}\else
  \ifnum\dispkind=\@ne\DFRAME{#2}{#3}{#7}{#8}{#5}\else
   \ifnum\dispkind=\tw@
    \edef\@tempa{\noexpand\FFRAME{\LaTeXparams}}%
    \@tempa{#2}{#3}{#5}{#6}{#7}{#8}%
    \fi
   \fi
  \fi
  \ifwasdraft\draft=1\else\draft=0\fi{}%
  \egroup
 }%
\def\TEXUX#1{"texux"}
\def\func#1{\mathop{\rm #1}\nolimits}%
\long\def\QQQ#1#2{%
     \long\expandafter\def\csname#1\endcsname{#2}}%
\long\def\QQA#1#2{}%
\newcommand{\QTR}[2]{\csname text#1\endcsname{#2}}
\def\EXPAND#1[#2]#3{}%
\def\NOEXPAND#1[#2]#3{}%
\def\LaTeXparent#1{}%
\def\ChildStyles#1{}%
\def\ChildDefaults#1{}%
\def\QTagDef#1#2#3{}%
  \providecommand{\UNICODE}[2][]{}
\def\QQfnmark#1{\footnotemark}
 \def\abstract{%
  \if@twocolumn
   \section*{Abstract (Not appropriate in this style!)}%
   \else \small 
   \begin{center}{\bf Abstract\vspace{-.5em}\vspace{\z@}}\end{center}%
   \quotation 
   \fi
  }%
   \def\registered{\relax\ifmmode{}\r@gistered
                    \else$\m@th\r@gistered$\fi}%
 \def\r@gistered{^{\ooalign
  {\hfil\raise.07ex\hbox{$\scriptstyle\rm\text{R}$}\hfil\crcr
  \mathhexbox20D}}}}{}%
\newdimen\theight
\def\Column{%
 \vadjust{\setbox\z@=\hbox{\scriptsize\quad\quad tcol}%
  \theight=\ht\z@\advance\theight by \dp\z@\advance\theight by \lineskip
  \kern -\theight \vbox to \theight{%
   \rightline{\rlap{\box\z@}}%
   \vss
   }%
  }%
 }%
\def\qed{%
 \ifhmode\unskip\nobreak\fi\ifmmode\ifinner\else\hskip5\p@\fi\fi
 \hbox{\hskip5\p@\vrule width4\p@ height6\p@ depth1.5\p@\hskip\p@}%
 }%
\def\miss{\hbox{\vrule height2\p@ width 2\p@ depth\z@}}%
\def\tcol#1{{\baselineskip=6\p@ \vcenter{#1}} \Column}  %
\def\newfmtname{LaTeX2e}
  \DeclareOldFontCommand{\rm}{\normalfont\rmfamily}{\mathrm}
  \DeclareOldFontCommand{\sf}{\normalfont\sffamily}{\mathsf}
  \DeclareOldFontCommand{\tt}{\normalfont\ttfamily}{\mathtt}
  \DeclareOldFontCommand{\bf}{\normalfont\bfseries}{\mathbf}
  \DeclareOldFontCommand{\it}{\normalfont\itshape}{\mathit}
  \DeclareOldFontCommand{\sl}{\normalfont\slshape}{\@nomath\sl}
  \DeclareOldFontCommand{\sc}{\normalfont\scshape}{\@nomath\sc}
  \newcounter{equationnumber}  
  \def\mathletters{%
     \addtocounter{equation}{1}
     \edef\@currentlabel{\theequation}%
     \setcounter{equationnumber}{\c@equation}
     \setcounter{equation}{0}%
     \edef\theequation{\@currentlabel\noexpand\alph{equation}}%
  }
    \def\BibTeX{{\rm B\kern-.05em{\sc i\kern-.025em b}\kern-.08em
                 T\kern-.1667em\lower.7ex\hbox{E}\kern-.125emX}}}{}%
\def\AmS{{\protect\usefont{OMS}{cmsy}{m}{n}%
                A\kern-.1667em\lower.5ex\hbox{M}\kern-.125emS}}}{}%
\def\@@eqncr{\let\@tempa\relax
    \ifcase\@eqcnt \def\@tempa{& & &}\or \def\@tempa{& &}%
      \else \def\@tempa{&}\fi
     \@tempa
     \if@eqnsw
        \iftag@
           \@taggnum
        \else
           \@eqnnum\stepcounter{equation}%
        \fi
     \fi
     \global\tag@false
     \global\@eqnswtrue
     \global\@eqcnt\z@\cr}
\def\TCItag{\@ifnextchar*{\@TCItagstar}{\@TCItag}}
\def\@TCItag#1{%
    \global\tag@true
    \global\def\@taggnum{(#1)}}
\def\@TCItagstar*#1{%
    \global\tag@true
    \global\def\@taggnum{#1}}
\theoremstyle{plain}
\theoremstyle{definition}
\newtheorem {assumption}{Assumption}
\numberwithin {equation}{section} 
\numberwithin {theorem}{section} 
\begin{document}

\begin{frontmatter}
\title{Optimal Break Tests for Large Linear Time Series Models}
\runtitle{Optimal Break Tests}

\begin{aug}
\author[A]{\fnms{Abhimanyu}~\snm{Gupta} \ead[label=e1]{abhimanyu.g@queensu.ca}} \and
\author[B]{\fnms{Myung Hwan}~\snm{Seo}\ead[label=e2]{myunghseo@snu.ac.kr}}

\address[A]{Department of Economics, Queen's University, Dunning Hall, 94 University Avenue, Kingston, Ontario, K7L 3N6, Canada\printead[presep={ ,\ }]{e1}}

\address[B]{Department of Economics, Seoul National University, Gwan-Ak Ro 1, Seoul, Korea\printead[presep={,\ }]{e2}}
\end{aug}

\begin{abstract}
We develop a class of optimal tests for a structural break occurring at an unknown date in infinite and growing-order time series regression models, such as AR($\infty$), linear regression with increasingly many covariates, and nonparametric regression. Under an auxiliary i.i.d. Gaussian error assumption, we derive an average power optimal test, establishing a growing-dimensional analog of the exponential tests of \cite{Andrews1994a} to handle identification failure under the null hypothesis of no break. Relaxing the i.i.d. Gaussian assumption to a more general dependence structure, we establish a functional central limit theorem for the underlying stochastic processes, which features an extra high-order serial dependence term due to the growing dimension. We robustify our test both against this term  and finite sample bias and illustrate its excellent performance and practical relevance in a Monte Carlo study and a real data empirical example. 
\end{abstract}

\begin{keyword}[class=MSC]
\kwd[Primary ]{62M10}
\kwd{62G10}
\kwd[; secondary ]{62R07}
\end{keyword}

\begin{keyword}
\kwd{Optimal break test, Growing number of restrictions, Non-
linear high-order serial dependence, Nonparametric regression, Infinite-
order autoregression}
\end{keyword}

\end{frontmatter}

\section{Introduction}
\label{sec:intro}

This paper uses a growing-dimension asymptotics approach to develop a class of asymptotically valid tests for the stability of infinite or growing dimensional linear regression models, which we will call `large linear models', when the change point is unknown. These tests are motivated by optimality ideas explored in the parametric literature and are indeed shown to possess analogous optimal properties under certain conditions. Our terminology of infinite-dimensional models includes both autoregressive models of infinite-order and nonparametric regression. This paper is thus a contribution to the recently growing interest in making inference on a parameter vector whose dimension is not finite.

Growing-dimension asymptotics have been studied in the statistical literature since at least the important work of \cite{huber1973robust}, but the sequences of experiments in \cite{LeCam1960} constitute an even earlier reference. Our growing-dimensional approach is general enough to include many important models as special cases. For example, we cover the autoregressive model of infinite-order, AR($\infty$), one of the most important models in time series analysis. Since a nonparametric regression function can be represented as an infinite-order linear regression in a linear sieve space, our methods cover this case as well. We allow for a triangular array structure and can thus also accommodate linear regression with growing dimension, but being explicit about the approximation error allows the coverage of true infinite dimensional models such as AR($\infty$) and nonparametric regression. Our focus on linear regression facilitates clarity of exposition, although extensions are feasible to more general settings like nonlinear regression, maximum likelihood, or generalized method of moments (GMM). 

While we focus on testing for the presence of a structural break in the AR($\infty$) model, nonparametric regression and growing dimension linear regression,  we note that there is a thriving literature on the estimation of high-dimensional change point models, often with LASSO-type methods, 
e.g. \cite{Chan2014}; \cite{Li2016}; \cite{lee2016lasso}; \cite{wang2018high}; \cite{lee2018oracle}; \cite{Safikhani2022}; \cite{Wang2024}. This topic has also attracted much attention also in the fixed-dimensional parametric setting, e.g. \cite{bai1998estimating}, among many others. 
Our framework is also different from another closely related literature, which explores testing of stability of means or variance-covariances of multiple time series, e.g. \cite{aue2009break, cho2015multiple, wang2022inference, li2024detection}. In a parametric setting, \cite{Andrews1993} considers testing under a GMM framework, which has been extended to the case of a break occurring anywhere in the sample by \cite{hidalgo2013testing}, and to the systems of equations setting by \cite{qu2007estimating}. \cite{Andrews1994a} and \cite{elliott2006efficient} explore certain optimality properties which form the basis for the tests considered in this paper.

More generally, testing for structural stability is a heavily studied subject and a huge literature has evolved, as reviewed by \cite{perron2006dealing,aue2013structural}. In the infinite and growing dimensional-setting, however, the literature is not as rich. \cite{Wu1993} consider testing for jumps in a regression function using kernels and under a fixed design, \cite{vogt2015testing} studied break testing in nonparametric kernel regression, and \cite{Mohr2019} consider testing in a time series regression model using marked empirical processes of residuals, while \cite{Hidalgo2016} use a non-smoothing approach to test the stability of a nonparametric regression function.

To explain our approach further, we view our methodology as approximating an infinite-dimensional testing problem with a sequence of finite-dimensional problems. In case of the
nonparametric regression or AR($\infty$), this is based on sieve or series estimation which entails approximating the infinite dimensional primitive with a linear combination of increasingly many basis functions. On the other hand, if the model of interest is an increasing dimensional one, the approach is interpreted as inference on a sequence of increasingly complex models. Because the sequence of basis functions (nonparametric regression), lags (AR($\infty$)) or covariates (increasing dimension regression) that underlies our approach is not of fixed dimension,
one cannot simple apply the theory of finite-dimensional parametric
problems. Instead, the `usual' parametric test statistics must be adjusted to take into account the increasing dimensional nature of the problem, as shown by \cite{Portnoy1988}. \cite{Murphy1993} used a similar approach in a proportional hazards model while \cite%
{Hong1995} propose an appropriately standardized test statistic that provides a consistent specification test based on series estimation. \cite{Fan2001} provides a Wilks Phenonemon type result based on a standardized statistic. 


Our testing problem also features a loss of identification under the null of no structural break.
Thus, we consider a stochastic process of recentered and renormalized quadratic forms indexed by the unidentified parameter that captures the change point as a fraction of the sample size. We establish the weak convergence of this stochastic process as both the sample size $T$ and the number $p $ of restrictions grow to infinity simultaneously and construct test statistics based on a weighted exponential transformation motivated by the optimal tests in \cite{Andrews1994a}. Under an i.i.d. Gaussian design, we establish a weighted power optimality for our class of tests in the same spirit.  

Optimality typically requires a finite
number of directions in which the null may be violated, see e.g. Section
14.6 of \cite{Lehmann2006}. To account for the infinite-dimensional aspect
of the problem, we adopt the spirit of sieve regression and
assume that we can direct our test to have reasonable power against an
increasing number of approximating directions, while the remaining `tail'
directions can be neglected. Among the approximating directions, we assign
weights inspired by \cite{Andrews1994a}. In parametric structural break
testing, \cite{elliott2006efficient} found that most optimal tests are
similar as long as they are optimal against alternatives of similar average
deviation from the null. When $p$ grows to infinity, small changes in each coefficient
can aggregate, so that individually negligible changes can add up to a non-negligible change as a whole. 
Indeed, we show that the determinant of our local power is the average of all the changes.

In the general model without the i.i.d. Gaussianity assumption, we observe that an important feature of the limiting process is the presence of a \emph{nonlinear high order long run variance} (HLV)  involving the errors and regressors in a nonlinear fashion, \cite{Gupta2023}. 
We also show that the sequential limit fails to uncover this nonlinear serial correlation, where 
the limits are taken sequentially first as $T\rightarrow \infty $ and then as $p\rightarrow \infty .$ 
We robustify our test against this HLV factor using a random scaling method as opposed to resorting to consistent estimation. This yields a pivotal test and is in the spirit of the self-normalization approach, see e.g. \cite{shao2010testing, Zhang2018}. In addition, we incorporate a bootstrap bias correction that controls size effectively in finite samples while preserving power.

We also examine the finite-sample performance of our test in a simulation study. We find that our test exhibits very good size control and high power against a variety of alternatives, both in nonparametric regression and long AR estimation. Importantly, we find that failure to correct for the nonlinear serial correlation that we have discovered can lead to seriously distorted sizes. We show that this distortion can be particularly severe as $p,T$ increase to infinity, and the correction we propose dramatically improves matters in many cases. In an empirical illustration with real data we examine the relationship between oil prices and economic output and find that our corrections can yield different conclusions to existing methods.  

The next section introduces the class of optimal tests that motivate our approach, sets out the \emph{mise en sc\`ene} of our structural break model and establishes basic asymptotic results and the optimality property of our tests. In Section \ref{sec:exp_rob_piv} we show how to construct HLV-robust and bias corrected versions of our tests and establish their asymptotic properties, while Section \ref{sec:sequential} shows that taking a sequential limit approach would hide the HLV factor that our simultaneous limit theory reveals. In Section \ref{sec:mc} we demonstrate the desirable finite sample properties of our tests in a set of Monte Carlo simulations. Finally, we illustrate our tests on real data in Section \ref{sec:application}.

\section{Optimal Testing}\label{sec:optimal}

We develop an optimal test for the presence of a structural break in infinite-order regression such as the conditional mean of nonparametric regression models with iid Gaussian errors in terms of a weighted average power. The growing dimensional regression model is a special case of our model. It generalizes conventional optimal structural break testing, which \cite{Andrews1994a} proposed for finite-dimensional regression. Specifically, we first characterize the optimal test under:
\begin{equation}
	y_{Tt}=\boldsymbol{x}_{t}^{\prime }\gamma _{0}+\left(T\sqrt{p}\right)
	^{-1/2}\psi_{Tt} 
	\sum_{j=1}^{\infty }\boldsymbol{x}_{tj}b_{Tj}+\sigma_0 \nu _{t},
	\label{eq:class for optimality}
\end{equation}
which defines a sequence of models where $\psi _{Tt}=\psi -1\left\{ t/T\leq \psi \right\} $ for  $\psi \in \Psi=\left[c_1,c_2\right] \subset (0,1)$,  $\{\nu _{t}\}$ is an i.i.d. sequence of standard normals and is independent of the time series of infinite-dimensional vectors $\left\{ \boldsymbol{x}_{t}\right\} $, while $p$ grows with the sample size $T$.  We drop the subscript $T$ to ease notation hereafter unless it is necessary to avoid confusion. The null hypothesis of interest is given by  
$$H_0 : (b_{1},b_{2},...)'=0, \quad \text{for any } \psi \in \Psi.  $$ 

For given weight functions for $\psi$ and $\boldsymbol{b}= (b_{1},b_{2},...)'$, which we denote by $J\left( \psi \right) $ and $Q\left( \boldsymbol{b}\right) $ respectively, the weighted average power of a test $\mathscr{T}_{T}$ over this class of local alternative models is written as 
\begin{equation*}
	\int \int \int \mathscr{T}_{T}p_{\boldsymbol{b},\psi }\left( Y|\boldsymbol{X}\right)
	dydQ\left( \boldsymbol{b}\right) dJ\left( \psi \right) =\int \mathscr{T}_{T}\int
	\int p_{\boldsymbol{b},\psi }\left( Y|\boldsymbol{X}\right) dQ\left( 
	\boldsymbol{b}\right) dJ\left( \psi \right) dy,
\end{equation*}
where $Y$, $ \boldsymbol{X}$, and $p_{\boldsymbol{b},\psi }\left(Y|\boldsymbol{X}\right)$ denote the collections of $\left\{
y_{t}\right\} $, $\left\{ \boldsymbol{x}_{t}^{\prime }\right\} ,$ and the conditional density, respectively, and the equality holds by Fubini's theorem.  The final expression reveals that an optimal test can be constructed so as to maximize the weighted average likelihood $\int \int
p_{\boldsymbol{b},\psi }\left( Y|\boldsymbol{X}\right) dQ\left( 
\boldsymbol{b}\right) dJ\left( \psi \right) $ for any given probability measures $Q$ and $J$. In this alternative representation, the alternative hypothesis is a simple hypothesis. That is, the alternative hypotheses can be written as
\[H_{A}:\int \int \phi \left( \boldsymbol{x}_{t}^{\prime }\gamma _{0}+\left( T\sqrt{p}\right) ^{-1/2}\psi _{t}\boldsymbol{x}_{t}^{\prime }\boldsymbol{b},\sigma_{0}\right) dQ\left( \boldsymbol{b}\right) dJ\left( \psi \right) ,
\]
where $\phi \left( \mu ,\sigma \right) $ denotes the normal density function with mean $\mu $ and variance $\sigma ^{2}$. Assuming $\gamma _{0}=0$ (without loss of generality) yields $y_{t}=\sigma _{0}\nu _{t}$ under the null. Then, the likelihood ratio (LR) statistic is given by 
\begin{equation*}
	LR_{T}=\int \int \func{exp}\left( -\frac{1}{2\sigma _{0}^{2}}\sum_{t=1}^{T}\left(
	\left( y_{t}-\left( T\sqrt{p}\right) ^{-1/2}\psi _{t}\boldsymbol{x}%
	_{t}^{\prime }\boldsymbol{b}\right) ^{2}-y_{t}^{2}\right) \right) dQ\left( 
	\boldsymbol{b}\right) dJ\left( \psi \right).
\end{equation*}
As noted by \cite{Andrews1994a}, this is a useful feature since optimality can be established for the likelihood ratio test by virtue of the Neyman-Pearson lemma.

Thus, it is important to specify the weights on $\boldsymbol{b}$ and $%
\psi $ in an appropriate fashion. A particular choice of the weighting function was proposed by \cite{Wald1943} for a fixed $\psi $ and extended by \cite{Andrews1994a} and \cite{Song2009} for the case where $\psi $ is not identified. The latter papers consider a finite number of restrictions on functions of $\psi $ and the weight function assigns constant weight on the ellipses in the parameter space of $ \boldsymbol{b} $ that are equally difficult to detect for each $\psi$. As noted by \cite{Andrews1994a}, every locally most powerful invariant test has to exhibit a constant power over these ellipses under the parameterization in \eqref{eq:class for optimality}. Our testing problem concerns an infinite-dimensional restriction that is also a function of $\psi$.  

\sloppy Building on the observation by \cite{Lehmann2006} (Section 14.6) that one cannot obtain reasonable local power across all families of distributions when the restriction concerns an infinite-dimensional parameter, we consider weighting functions that coincide with these ellipses only up to a truncation parameter, which is set as $p$ in \eqref{eq:class for optimality}.  
Specifically,  let $x_{t}$ stand for the first $p$-elements of $\boldsymbol{x}_{t}$ and let $w_{t}$ denote the remaining elements so that $
\boldsymbol{x}_{t}=(x_{t}^{\prime },w_{t}^{\prime })^{\prime }$ and partition $\boldsymbol{b}=\left( \boldsymbol{b}_{1}^{\prime },\boldsymbol{b}
_{2}^{\prime }\right) ^{\prime }$ accordingly. Recall that we drop $T$ in the notation.
Also, write
$
Q\left( \boldsymbol{b}\right) =Q_{1}\left( \boldsymbol{b}_{1}\right)
Q_{2}\left( \boldsymbol{b}_{2}|\boldsymbol{b}_{1}\right) ,
$
where $Q_{1}\left( \boldsymbol{b}_{1}\right) $ is the $p$-dimensional centered multivariate normal distribution function with covariance matrix $%
\mathfrak{c}\sigma_0^{2}\left( \psi \left( 1-\psi \right)
E{x}_{t}{x}_{t}^{\prime }\right) ^{-1}$ for some finite  $\mathfrak{c}$ and $Q_{2}\left( \boldsymbol{b}_{2}|%
\boldsymbol{b}_{1}\right) $ is a conditional distribution  over the support 
$
\left\{ \boldsymbol{b}_{2}:\left\Vert \boldsymbol{b}_{2}\right\Vert _{2}\leq
K\ \text{and }\sqrt{p}\left\Vert \boldsymbol{b}_{2}\right\Vert _{2}=o\left(
1\right) \right\},
$
for $\boldsymbol{b}_{2}$ given $\boldsymbol{b}_{1}$ and some finite $K$. The growing dimensional model sets $\boldsymbol{b}_2 = 0$. The weight $J$ for $\psi $ may be set as uniform on the interval $\Psi $ unless there is a specific set of break points of interest.

The restriction on the decay rate of the infinite-dimensional parameter $\boldsymbol{b}_{2}$  arises naturally from the features of the particular model under consideration. For instance, \cite{berk1974consistent} and \cite{Kuersteiner2005} constrain the autoregressive coefficients to meet the decay rate of $\sum_{j=1}^{\infty
}j\left\vert \gamma _{j}\right\vert <\infty $ for the stationarity and moment
conditions of the process and consistent estimation of the spectral density at the origin. On the other hand, in the nonparametric regression context, the smoothness condition on the regression functions can be imposed by constraining the decay rate of the sieve coefficients. 

To illustrate, consider the Fourier representation $f\left( x\right) =\sum_{j=1}^{\infty }\gamma _{j}\exp
\left( ijx\right) $, for which the $m$-th derivative is given by $f^{\left(
	m\right) }\left( x\right) =\sum_{j=1}^{\infty }\gamma _{j}\left( ij\right)
^{m}\exp \left( ijx\right). $ Thus, it is natural to assume that $%
\sum_{j=1}^{\infty }\left\vert \gamma _{j}\right\vert j^{m}<\infty $ for $%
f^{\left( m\right) }\left( X\right) $ to be well-defined \emph{a.s.} This
also implies that $\gamma _{j}=o\left( j^{-m-1}\right) $. These observations are reflected in our construction of the weight function. The preceding specification of local models with truncation $p$ enables us to construct a class of computationally tractable tests that are asymptotically equivalent to the LR test. We develop such an asymptotically optimal test in the following subsection, extending the finite-dimensional parametric optimality results of \cite%
{Andrews1994a}.
\subsection{Structural Break in Growing and Infinite-Dimensional Regression}
\label{sec:model} 
We consider the linear regression model of growing dimension with a potential structural break at $ t=[T \psi] $,
\begin{equation}
	y_{t} 
	=x_{t}^{\prime }\kappa _{1}+x_{t}^{\prime }\kappa _{2}1\left\{ t/T>\psi
	\right\} +u_{t}.\label{our_strucb_model}
\end{equation}
where $\psi\in\Psi$ is unknown, $x_t$ is a $p$-dimensional observable vector, $\kappa_1, \kappa_2$ are $p$-dimensional unknown parameters, and $p\rightarrow \infty $ as $T\rightarrow \infty $. This is the truncated model we employ to construct our test statistic and $\kappa_2$ corresponds to $\boldsymbol{b}_1$ so that we construct a test for the null hypothesis 
\[
\mathcal{H}_0 : \kappa_2 =0 \quad \text{for any } \psi \in \Psi. 
\]

We will allow for general increasing dimension and nonparametric time series regressions, $E\left( y_{t}|%
\mathfrak{G}_{t-1}\right) $, where $\mathfrak{G}_{t-1}$ denotes the filtration up to time $t-1$, thus generalizing the Gaussian model \eqref{eq:class for optimality}. 
Let  $\nu _{t}=y_{t}-E\left( y_{t}|\mathfrak{G}_{t-1}\right) $, $\gamma_T$
be the best linear predictor of $y_{t}$ given $x_{t}$, and $r_{Tt}=E\left(
y_{t}|\mathfrak{G}_{t-1}\right) -x_{Tt}^{\prime }\gamma_T $.  Then, $u_{t}=r_{Tt}+\nu _{t}$, i.e. the error term comprises of the regression error in addition to the approximation error. Here we use the subscript $T$  to emphasize the the array structure temporarily. 

For example, for growing dimensional regression the model features an increasing number of observable covariates $x_{t}$ as the sample size diverges, a modelling approach that dates back to the pioneering work of \cite{huber1973robust}. In the case of infinite order autoregression, we include the lagged dependent variables, $\left\{y_{t-j}\right\}_{j\geq 1}$, and our approach indicates a truncation at $p$ lags, $p\rightarrow\infty$, as suggested in the classical contributions of \cite{berk1974consistent} and \cite{Shibata1980}, among others. Finally, in the case of nonparametric series regression, the model features a finite
	number of observable covariates $z_{t}$ and
	$
	x_{Tt}:=x_{T}\left( z_{t}\right) :{\mathbb{R}}^{k}\mapsto {\mathbb{R}}^{p}$ are transformations via known basis functions.

Let $x_{t}\left( \psi \right) :=\left(x_{t}^{\prime },x_{t}^{\prime }1\left\{ t/T>\psi \right\} \right) ^{\prime}$ and denote by $ \hat{\kappa }\left( \psi \right) $ and $ \hat{u}_t\left( \psi \right) $ the OLS estimate and the OLS residuals, respectively, from the linear regression with a fixed $\psi$.  
Define
$ 
\hat{P}\left( \psi \right) ={T}^{-1}\sum_{t=1}^{T}x_{t}\left( \psi \right) x_{t}\left( \psi \right) ^{\prime },
$
and let $\hat{\Xi}\left( \psi \right) $ denote an estimator of $%
E\nu _{t}^{2}x_{t}\left( \psi \right) x_{t}^{\prime }\left( \psi
\right) $. One can use $T^{-1}\sum_{t=1}^{T}x_{t}\left( \psi \right) x_{t}\left(
\psi \right) ^{\prime }\hat{u}_{t}\left( \psi \right) ^{2}$ (the Eicker-White formula) or  even
$\hat{\sigma}\left( \psi \right) ^{2}%
\hat{P}\left( \psi \right) $, where $\hat{\sigma}^{2}\left( \psi \right)
=T^{-1}\sum_{t=1}^{T}\hat{u}_{t}\left( \psi \right) ^{2}$, if conditional homoskedasticity obtains. Define the heteroskedasticity robust Wald process as
\begin{equation}
W_{T}\left( \psi \right) :=T\hat{\kappa}_{2}\left( \psi \right) ^{\prime
}\left( S\hat{P}\left( \psi \right) ^{-1}\hat{\Xi}\left( \psi \right) 
\hat{P}\left( \psi \right) ^{-1}S^{\prime }\right) ^{-1}\hat{\kappa}%
_{2}\left( \psi \right) , \quad \psi\in\Psi,  \label{wald_def}
\end{equation}%
where $S=\left( 0_{p\times p}:I_{p}\right) $, with $I_p$ the $p\times p$ identity matrix.
To account for the growth of $ p $ as a function of $ T $, we center and rescale $W_T(\psi)$  as
\begin{equation}
{\mathcal{Z}}_{T}\left (\psi \right ):=\left({W_{T}\left (\psi \right )-p%
}\right)/{\sqrt{2p}}, \quad \psi\in\Psi. \label{Q_n}
\end{equation}
The idea is that this standardization yields a proper limit as $p$ and $T$ diverge to infinity simultaneously, motivated by the intuition that a correctly centered and scaled sequence of chi-squared random variables converges to a normal variate in distribution. It is motivated by \cite{Portnoy1988, DeJong1994, Hong1995, Fan2001, Gupta2023}, for example. They established a CLT for the centered and rescaled statistic $\mathcal{Z}_{T}\left (\psi \right )$ for the case where $\psi$ is fixed at a constant, under various sampling conditions. 

We extend the CLT for time series data in \cite{Gupta2023} to a functional CLT (FCLT) by establishing the weak convergence of  ${\mathcal{Z}}_{T}\left (\psi \right )$ as a process indexed by $\psi\in \Psi$ in order to construct an optimal test for the structural break with an unknown break date.
Define the process 
$$
	\mathcal{Z}(\psi )={\psi^{-1}G(\psi )+(1-\psi)^{-1}\bar { G}\left (\psi \right )-G(1)},$$
where $\left (G\left (\psi \right ),\bar {G}\left (\psi
\right
)\right
)^{\prime }$ is a bivariate Gaussian process with
covariance kernel 
\begin{equation}  \label{C_def}
	{\mathcal{K}}\left (\psi _{1},\psi _{2}\right )=\left ( 
	\begin{array}{cc}
		\left (\min\left\{\psi _{1}, \psi _{2}\right\}\right )^{2} & 1\left \{\psi
		_{1}>\psi _{2}\right \}\left (\psi _{1}-\psi _{2}\right )^{2} \\ 
		1\left \{\psi _{1}<\psi _{2}\right \}\left (\psi _{1}-\psi
		_{2}\right )^{2} & \left (1-\max\left\{\psi _{1}, \psi _{2}\right\}\right
		)^{2}%
	\end{array}
	\right ).
\end{equation}
Observe that the marginal distribution of $ \mathcal{Z}(\psi ) $ is  standard normal.

\sloppy Under the presence of strong high-order serial dependence, the limit of the finite-dimensional distribution of $ \mathcal{Z}_T (\psi)$ features the HLV factor as shown by \cite{Gupta2023}. For convenience, we let $\varsigma_{t}=\Xi^{-1/2}x_{t}\nu_{t}$, where $\Xi$ is defined in Assumption \ref{ass:M_diff}, $\delta_t(\psi)=\left({1}\left\{t/n\leq \psi\right\}-\psi\right)/\sqrt{\psi(1-\psi)}$ and define the HLV factor as
\begin{equation} 
	\upomega=2\lim_{T\rightarrow\infty}\frac{1}{T}\sum_{s,t=2}^T cov \left(q_s(\psi),q_t(\psi)\right),
	\label{V_def1}
\end{equation}%
where $q_t(\psi)=(Tp)^{-1/2}\delta_t(\psi)\varsigma_t'\sum_{s<t}\delta_s(\psi)\varsigma_s$, and any limit stated as `$T\rightarrow\infty$' is taken as both $ T $ and $ p $ grow to infinity simultaneously. A sequential limit is considered in Section \ref{sec:sequential}. 
Since the variable $q_t(\psi)$ is a  nonlinear transformation of the underlying primitive random variables $x_t$ and $\nu_t$ and their entire past, its variance depends on $t$ and contains higher-order autocovariances of the primitive variables or $\varsigma_t$. Higher-order dependence of a martingale difference sequence (mds) process is a common feature of many time series as in e.g. the ARCH process. In this sense it differs from the familiar long-run variance encountered in fixed dimensional time series regressions, which is the limiting sum of all the autocovariances.

Throughout let $C$ denote
a generic finite constant, independent of $T$. 
\begin{assumption}
	\label{ass:errors} The martingale difference sequence $\left\{ \nu
	_{t}\right\} $ satisfies  $\sigma _{t}^{2}\leq C,$ where $E\left(
	\nu _{t}^{2}|\mathfrak{G}_{t-1}\right) =\sigma _{t}^{2},$ and $%
	E\left( \nu _{t}^{4}|\mathfrak{G}_{t-1}\right) \leq C$, a.s.
\end{assumption}


\begin{assumption}
	\label{ass:aprx0} For $a=1,2$, $
		\sup_{t}E\left( r_{Tt}^{2a}\right) =o\left( T^{-1}\right) . $
\end{assumption}

Introduce the $p\times p$ non-random matrix sequences $P$ and $ \Xi $ that satisfy Assumption \ref{ass:M_diff} below and let 
\begin{equation*}
	P(\psi )=\left[ 
	\begin{array}{cc}
		P & (1-\psi )P \\ 
		(1-\psi )P & (1-\psi )P%
	\end{array}%
	\right] ,\;\;\Xi (\psi )=\left[ 
	\begin{array}{cc}
		\Xi & (1-\psi )\Xi \\ 
		(1-\psi )\Xi & (1-\psi )\Xi%
	\end{array}%
	\right] .
\end{equation*}%
Define $\left\Vert A\right\Vert =\left\{ \overline{\text{eig} }(A^{\prime
}A)\right\} ^{\frac{1}{2}}$ for a generic matrix $A$, where $\underline{%
	\text{eig} }$ (respectively $\overline{\text{eig} }$) denotes the smallest
(largest) eigenvalue of a symmetric nonnegative definite matrix. 
\begin{assumption}
	\label{ass:M_diff} 
	(i) $\sup_{i,t}Ex_{ti}^{4}<\infty $ 
    
    (ii) 
    \[
    \sup_{r\in\Psi\cup\{1\}}\left(
		\left\Vert T^{-1}\sum_{t=1}^{[Tr]} x_{t}x_{t}^{\prime }-rP\right\Vert +
		\left\Vert T^{-1}\sum_{t=1}^{[Tr]}x_{t}x_{t}^{\prime }\sigma _{t}^{2}-r \Xi \right\Vert\right)
		=O_{p}\left(
		\tau _{p}\right),
        \]
			\[
            \sup_{r\in\Psi\cup\{1\}}
		\left\Vert \hat{\Xi}\left( r \right)
		-\Xi \left( r \right) \right\Vert =O_{p}\left( \zeta_{p}\right) ,
        \]
    and	$\underline {\text{eig} }
		\left ({P}\right )> \mu_T,\underline {\text{eig} }
		\left ({\Xi}\right )> \mu_T,$
	for some positive,  $ \tau_{p} $, $\zeta_p$ and $\mu_T$, satisfying
	\begin{equation}
		\mu_T^{-4} \sqrt{p}\left( \mu_T^{-1}\tau _{p}+\zeta_{p}\right)+\mu_T^{-6} p^{-1} \rightarrow 0, \text{ as } T\rightarrow \infty  .  \label{rate:Q_weak_conv}
	\end{equation}
	(iii)  $\varlimsup _{T\rightarrow \infty }\overline {%
		\text{eig} }\left (P\right )<\infty $, 
	$\varlimsup _{T\rightarrow \infty }%
	\overline {\text{eig} }\left (\Xi \right )<\infty $.	
\end{assumption}
Primitive conditions and expressions for $\tau _{p}$ and $\zeta_p$ are given in Propositions B1 and B2 of \cite{Gupta2023}. Recall that the eigenvalues of the Kronecker product of two symmetric matrices are the products of their eigenvalues, and $ \psi $ is bounded away from zero and one. Thus, $P(\psi)$ and $\Xi(\psi)$ inherit the eigenvalue restrictions on $P$ and $\Xi$ in Assumption \ref{ass:M_diff} $(ii)$ and $(iii)$, up to positive constants.

Let $ \mathfrak{H}_t $ denote a filtration for $ \varsigma_{t} $, $\Phi_t=E\left(\varsigma_{t} \varsigma_{t}'|\mathfrak{H}_{t-1}\right)$,  and $\Theta_s=\sum_{t_1=1}^{s-1}\sum_{t_2=1}^{s-1} \varsigma_{t_1} \varsigma_{t_2}'$. The filtration $ \mathfrak{H}_t $ need not be $ \mathfrak{G}_t $ but a simpler one as long as it makes $ \varsigma_t $ an mds.  Indeed, some conditions may be easier to verify  under simpler filtrations. 
The next assumption introduces the HLV factor $\upomega$ formally.

\begin{assumption}\label{ass:MCLT}
 $T^{-\varphi}\max_{1\leq t \leq T}\overline{\text{eig}}\left(\Phi_t\right)+T^{-\lambda}\max_{1\leq t\leq T} E\left(\left(\varsigma_t '\varsigma_t\right)^2|\mathfrak{H}_{t-1}\right)=o_p (1) $, for some $\varphi\in[0,1/3)$ and some $\lambda\in[0,1-\varphi)$, $\sum_{t=1}^T\sum_{s=1}^{t-1}\cov\left(\tr\left(\Phi_{t}\Theta_{t}\right),\tr\left(\Phi_{s}\Theta_{s}\right)\right)=o(T^{4}p^{2})$,
and there exists $ \upomega $ such that for $ m $ that is proportional to $ T $ and $ l=0 $ or $ [T\psi] $
\begin{equation} 
		\lim_{T \rightarrow \infty } \frac{1}{mp}tr\sum_{t_{1}=1}^{m-1}%
		\sum_{t_{2}=1}^{m-1 }E\left( \varsigma_{m+l}\varsigma_{m+l}^{\prime
		}\varsigma_{t_{1}+l}\varsigma_{t_{2}+l}^{\prime }\right)  
		= \upomega,
		\label{V_partial_def}
	\end{equation}
uniformly in $\psi\in\Psi$.
\end{assumption}
As we discussed subsequently to \eqref{V_def1}, this quantity in the asymptotic distribution differs from the conventional long-run variance and consists of fourth-order cross-moments of a time series. Then, a resampling algorithm matching only up to second order moments such as the wild bootstrap would not be valid. Also note that if the $v_t$ are i.i.d. Gaussian then $\upomega=1$. 

For mean zero random variables $a_{1i},a_{2j},a_{3k},a_{4l}$, let $\mathrm{cum}_{ijkl}\left(a_{1i},a_{2j},a_{3k},a_{4l}\right)$ denote the fourth cumulant. 
\begin{assumption}\label{ass:autocovandcumulant}
	$\left\{x_{ti}\nu_t\right\}_{t\in\mathbb{Z}}$ is fourth order stationary for all $i=1,\ldots,p$. Furthermore, $\sup_{i,j=1,\ldots,p}\sum_{t=-\infty}^{\infty}\left\vert \upgamma_{ij}(t)\right\vert<\infty$, where $\upgamma_{ij}(t)=E\left(x_{r,i}\nu_{r}x_{r+t,j}\nu_{r+t}\right)$ for integer $r$, and $\sup_{i,j,k,l=1,\ldots,p}\sum_{t_1,t_2,t_3=-T}^T \left\vert\mathrm{cum}_{ijkl}\left(x_{0,i}\nu_{0},x_{t_1,j}\nu_{t_1},x_{t_2,k}\nu_{t_2},x_{t_3,l}\nu_{t_3}\right)\right\vert=O\left(T^2\right)$.
\end{assumption}
This assumption controls the temporal dependence in $\left\{x_t\nu_t\right\}$ and is discussed in \cite{Andrews1991a}, for example, wherein sufficient conditions for it to hold are also provided. The next two theorems establish the stochastic equicontinuity of the process $\mathcal{Z}_T(\psi)$ and its weak convergence.
Let `$\Rightarrow$' denote weak convergence in $\ell^{\infty}(\Psi)$. 
\begin{theorem}
	\label{thm:null_Chow} Let Assumptions \ref{ass:errors}- \ref{ass:autocovandcumulant} 
	and $\mathcal{H}_{0}$ hold.
	Then $
	{\mathcal{Z}}_{T}(\psi )\Rightarrow \sqrt{\upomega} \mathcal{Z}(\psi)$. 
\end{theorem}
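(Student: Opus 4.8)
The plan is to verify the two standard ingredients of weak convergence in $\ell^{\infty}(\Psi)$ --- convergence of the finite-dimensional distributions of $\mathcal Z_T$ to those of $\sqrt{\upomega}\,\mathcal Z$, and asymptotic equicontinuity of $\{\mathcal Z_T\}$ over the compact set $\Psi$ --- both of which rest on a preliminary linearization of $W_T(\psi)$. Under $\mathcal H_0$ partitioned regression gives $\hat\kappa_2(\psi)=S\hat P(\psi)^{-1}T^{-1}\sum_{t=1}^{T}x_t(\psi)u_t$, the $\kappa_1$-part dropping out because it lies in the column space of $x_t(\psi)$; and a block inversion using the displayed forms of $P(\psi),\Xi(\psi)$ (whose off-diagonal blocks are $(1-\psi)$ times the diagonal ones) yields $SP(\psi)^{-1}\Xi(\psi)=(0:P^{-1}\Xi)$ and $\big(SP(\psi)^{-1}\Xi(\psi)P(\psi)^{-1}S'\big)^{-1}=\psi(1-\psi)P\Xi^{-1}P$. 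Substituting these, replacing $\hat P(\psi),\hat\Xi(\psi)$ by $P(\psi),\Xi(\psi)$ and $u_t$ by $\nu_t$, collapses the Wald process to $W_T(\psi)=\|\xi_T(\psi)\|^{2}+R_T(\psi)$ with $\xi_T(\psi)=T^{-1/2}\sum_t\delta_t(\psi)\varsigma_t$. The remainder $R_T(\psi)$ is $o_p(\sqrt p)$ uniformly in $\psi\in\Psi$: the matrix-replacement error is controlled by Assumption \ref{ass:M_diff}(ii)--(iii) together with the rate \eqref{rate:Q_weak_conv}, the supremum over $r\in\Psi\cup\{1\}$ in (ii) providing uniformity, and the contribution of $r_{Tt}$ inside $u_t$ is negligible by Assumption \ref{ass:aprx0}, the exponent $2a$, $a=1,2$, being precisely what Cauchy--Schwarz against $\sup_{i,t}Ex_{ti}^{4}<\infty$ requires. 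It thus suffices to prove the theorem with $\|\xi_T(\psi)\|^{2}$ replacing $W_T(\psi)$ in \eqref{Q_n}.

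For the finite-dimensional distributions, split $\mathcal Z_T(\psi)$ into the diagonal term $(T^{-1}\sum_t\delta_t(\psi)^{2}\|\varsigma_t\|^{2}-p)/\sqrt{2p}$ --- which is $o_p(1)$ by the conditional-moment bounds $T^{-\varphi}\max_t\overline{\mathrm{eig}}(\Phi_t)+T^{-\lambda}\max_tE((\varsigma_t'\varsigma_t)^{2}|\mathfrak H_{t-1})=o_p(1)$ of Assumption \ref{ass:MCLT} and the rate \eqref{rate:Q_weak_conv} --- and the off-diagonal term $\sum_t Q_{Tt}(\psi)$, where $Q_{Tt}(\psi)=\sqrt2\,(T\sqrt p)^{-1}\delta_t(\psi)\varsigma_t'\sum_{s<t}\delta_s(\psi)\varsigma_s=\sqrt{2/T}\,q_t(\psi)$ is a square-integrable martingale difference array in $t$ with respect to $\mathfrak H_t$. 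For a fixed grid $\psi_1,\dots,\psi_k$ and arbitrary coefficients $a_j$, a martingale central limit theorem applies to $\sum_t\sum_j a_j Q_{Tt}(\psi_j)$: the conditional Lindeberg condition follows from $\sum_t E Q_{Tt}(\psi_j)^{4}\to0$, which the constraints $\varphi<1/3$, $\lambda<1-\varphi$ guarantee, and the sum of conditional variances --- whose expectation is $2(T^{2}p)^{-1}\sum_t\delta_t(\psi_{j_1})\delta_t(\psi_{j_2})\tr\sum_{s_1,s_2<t}\delta_{s_1}(\psi_{j_1})\delta_{s_2}(\psi_{j_2})E(\varsigma_t\varsigma_t'\varsigma_{s_2}\varsigma_{s_1}')$ --- converges in probability, concentration around the mean coming from the $o(T^{4}p^{2})$ covariance bound of Assumption \ref{ass:MCLT} and the mean being evaluated by splitting $\sum_t$ into the two regimes where $\delta_t(\cdot)$ is constant, invoking the Cesàro-type limit \eqref{V_partial_def} with $l=0$ and $l=[T\psi]$, and discarding far-lag contributions via the fourth-cumulant summability of Assumption \ref{ass:autocovandcumulant}. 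To identify the limit, put $B_T(r)=T^{-1/2}\sum_{t\le[Tr]}\varsigma_t$ and $\bar B_T(\psi)=B_T(1)-B_T(\psi)$, so that $\xi_T(\psi)=\{(1-\psi)B_T(\psi)-\psi\bar B_T(\psi)\}/\sqrt{\psi(1-\psi)}$, which with $B_T(1)=B_T(\psi)+\bar B_T(\psi)$ gives the identity
\[
	\|\xi_T(\psi)\|^{2}-p=\psi^{-1}\big(\|B_T(\psi)\|^{2}-\psi p\big)+(1-\psi)^{-1}\big(\|\bar B_T(\psi)\|^{2}-(1-\psi)p\big)-\big(\|B_T(1)\|^{2}-p\big),
\]
whence $\mathcal Z_T(\psi)=\psi^{-1}g_T(\psi)+(1-\psi)^{-1}\bar g_T(\psi)-g_T(1)+o_p(1)$ with $g_T(r),\bar g_T(\psi)$ the normalized quadratic forms in $B_T(r),\bar B_T(\psi)$; computing the second moments of the underlying double sums reproduces $\upomega$ times the kernel $\mathcal K$ of \eqref{C_def}, so the finite-dimensional distributions of $\mathcal Z_T$ are those of $\sqrt\upomega\,\mathcal Z$, with $\upomega=1$ under i.i.d.\ Gaussian errors.

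For asymptotic equicontinuity I would establish a moment inequality for increments, for instance $E\big[(\mathcal Z_T(\psi_1)-\mathcal Z_T(\psi_2))^{2}(\mathcal Z_T(\psi_2)-\mathcal Z_T(\psi_3))^{2}\big]\le C|\psi_3-\psi_1|^{1+\epsilon}$ for $\psi_1\le\psi_2\le\psi_3$ in $\Psi$, from which tightness in $\ell^{\infty}(\Psi)$ follows by the standard criterion, the limit having continuous sample paths by the form of $\mathcal K$. Since $\mathcal Z_T$ is a normalized quadratic form with $p\to\infty$, these are fourth moments in $\varsigma_t$, made uniform in $p$ by the autocovariance and fourth-cumulant ($O(T^{2})$) summability of Assumption \ref{ass:autocovandcumulant}, the eigenvalue bounds of Assumption \ref{ass:M_diff}(iii), and the conditional-moment bounds of Assumption \ref{ass:MCLT}; the jump of $\psi\mapsto\delta_t(\psi)$ is harmless since it alters only the $O(T|\psi_3-\psi_2|)$ indices lying between $[T\psi_2]$ and $[T\psi_3]$. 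Combining the two parts gives $\mathcal Z_T(\psi)\Rightarrow\sqrt\upomega\,\mathcal Z(\psi)$.

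I expect the main obstacle to be the convergence --- not merely the boundedness --- of the sum of conditional variances in the martingale central limit theorem to the HLV-weighted kernel $\upomega\mathcal K$: this is exactly where the genuinely nonlinear high-order serial dependence enters, the growing dimension precludes the usual fixed-$p$ reasoning, and one must simultaneously show concentration of the conditional variance around its mean and evaluate that mean through the Cesàro limit \eqref{V_partial_def}, keeping track of the four regime combinations induced by the weights $\delta_t(\cdot)$ and controlling the contributions of lags far from $t$ with the cumulant bound. The uniform-in-$p$ fourth-moment control needed for tightness is a closely related secondary difficulty.
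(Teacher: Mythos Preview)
Your proposal is correct and follows essentially the same route as the paper's own proof: the same linearization of $W_T(\psi)$ to $\|\xi_T(\psi)\|^2$ via the block inversion identity $F(\psi)^{-1}=\psi(1-\psi)P\Xi^{-1}P$, the same splitting into diagonal (negligible) and off-diagonal (martingale) parts, and the same algebraic decomposition of the off-diagonal sum into forward and backward partial quadratic forms --- your $g_T(\psi),\bar g_T(\psi)$ are the paper's $\mathcal A_T(\psi),\bar{\mathcal A}_T(\psi)$ up to the diagonal term you have already controlled. The only cosmetic difference is that the paper outsources the tightness argument and the computation of the limiting covariance kernel to \cite{Gupta2023}, whereas you sketch them directly via a Billingsley-type increment bound and the Ces\`aro limit \eqref{V_partial_def}; the ingredients you list (martingale CLT with conditional-variance convergence driven by \eqref{V_partial_def}, the $o(T^4p^2)$ covariance bound for concentration, and the cumulant summability for the far-lag terms) are exactly the ones the cited reference uses.
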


For local power properties, let the sequence of local alternatives be
\begin{equation}
	\mathcal{H}_{\ell }:\kappa _{2\ell }=2^{1/4}\varrho p^{1/4}/\sqrt{T},
	\label{local_alternatives}
\end{equation}%
where $\varrho $ is a unit length $p\times 1$ vector. Due to this unit length condition, the power of $p$ becomes $1/4$ from $-1/4$ in \eqref{eq:class for optimality}.  This sequence converges to the null slower than the usual $1/\sqrt{T}$ parametric rate due to the inflation caused by the $p^{1/4}$ factor. This slower rate is the price to pay for a nonparametric approach. 
\begin{theorem}
	\label{thm:local_power} \sloppy Suppose that Assumptions \ref{ass:errors}- \ref{ass:autocovandcumulant} and $\mathcal{H}_{\ell}$ hold and let $\varrho _{\infty }=\lim_{T\rightarrow \infty }\varrho ^{\prime
	}P\Xi ^{-1}P\varrho $. Then, 
	$
	{\mathcal{Z}}_{T}(\psi )\Rightarrow  \sqrt{\upomega}\mathcal{Z}(\psi)+\varrho _{\infty } {\psi (1-\psi )}
	$.
    
\end{theorem}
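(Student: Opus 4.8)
The plan is to expand the Wald process under $\mathcal{H}_{\ell}$ about the null-case statistic governed by Theorem~\ref{thm:null_Chow} and to read off the additional drift. Set $V_{T}(\psi):=S\hat{P}(\psi)^{-1}\hat{\Xi}(\psi)\hat{P}(\psi)^{-1}S'$, so that $W_{T}(\psi)=T\hat{\kappa}_{2}(\psi)'V_{T}(\psi)^{-1}\hat{\kappa}_{2}(\psi)$. Under $\mathcal{H}_{\ell}$ the fitted regression on $(x_{t}',x_{t}'1\{t/T>\psi\})'$ is correctly specified, so identically $\hat{\kappa}_{2}(\psi)=\kappa_{2\ell}+\eta_{T}(\psi)$, where $\eta_{T}(\psi):=S\hat{P}(\psi)^{-1}T^{-1}\sum_{t=1}^{T}x_{t}(\psi)u_{t}$ depends on $\{x_{t},u_{t}\}$ only; here we take the break in the local DGP to occur at the evaluated fraction $\psi$, a fixed break point $\psi_{0}$ merely rescaling $\kappa_{2\ell}$ by $\min(\psi,\psi_{0})(1-\max(\psi,\psi_{0}))/\{\psi(1-\psi)\}$, which equals $1$ at $\psi=\psi_{0}$. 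Consequently
\begin{equation*}
W_{T}(\psi)=\underbrace{T\eta_{T}(\psi)'V_{T}(\psi)^{-1}\eta_{T}(\psi)}_{=:W_{T}^{0}(\psi)}+\underbrace{2T\kappa_{2\ell}'V_{T}(\psi)^{-1}\eta_{T}(\psi)}_{=:A_{T}(\psi)}+\underbrace{T\kappa_{2\ell}'V_{T}(\psi)^{-1}\kappa_{2\ell}}_{=:B_{T}(\psi)},
\end{equation*}
and $\mathcal{Z}_{T}(\psi)=(W_{T}^{0}(\psi)-p)/\sqrt{2p}+A_{T}(\psi)/\sqrt{2p}+B_{T}(\psi)/\sqrt{2p}$. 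Since $W_{T}^{0}(\psi)$ is exactly the Wald process one obtains under $\mathcal{H}_{0}$ from the same $\{x_{t},u_{t}\}$, and the proof of Theorem~\ref{thm:null_Chow} uses only Assumptions~\ref{ass:errors}--\ref{ass:autocovandcumulant} (themselves posited under $\mathcal{H}_{\ell}$), it gives $(W_{T}^{0}(\psi)-p)/\sqrt{2p}\Rightarrow\sqrt{\upomega}\,\mathcal{Z}(\psi)$; it remains to treat $B_{T}$ and $A_{T}$ and to add the three terms in $\ell^{\infty}(\Psi)$.

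For the drift, $\|\kappa_{2\ell}\|^{2}=\sqrt{2p}/T$ and $\|\varrho\|=1$ give $B_{T}(\psi)/\sqrt{2p}=\varrho'V_{T}(\psi)^{-1}\varrho$. Writing $P(\psi)=M(\psi)\otimes P$ and $\Xi(\psi)=M(\psi)\otimes\Xi$, with $M(\psi)$ the $2\times2$ matrix $M_{11}=1$, $M_{12}=M_{21}=M_{22}=1-\psi$ (so $\det M(\psi)=\psi(1-\psi)$), a direct block computation yields $S P(\psi)^{-1}\Xi(\psi)P(\psi)^{-1}S'=\{\psi(1-\psi)\}^{-1}P^{-1}\Xi P^{-1}$, hence $\{S P(\psi)^{-1}\Xi(\psi)P(\psi)^{-1}S'\}^{-1}=\psi(1-\psi)P\Xi^{-1}P$. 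Assumption~\ref{ass:M_diff} gives $\sup_{\psi}\|\hat{P}(\psi)-P(\psi)\|=O_{p}(\tau_{p})$ and $\sup_{\psi}\|\hat{\Xi}(\psi)-\Xi(\psi)\|=O_{p}(\zeta_{p})$, and with the eigenvalue bounds on $P,\Xi$ (inherited up to constants by $P(\psi),\Xi(\psi)$ because $\psi$ is bounded away from $0$ and $1$) a Neumann-series linearization --- whose remainder is negligible by rate~\eqref{rate:Q_weak_conv} --- propagates these to $\sup_{\psi}\|V_{T}(\psi)^{-1}-\psi(1-\psi)P\Xi^{-1}P\|\to_{p}0$. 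Since $\varrho'P\Xi^{-1}P\varrho\to\varrho_{\infty}$ by hypothesis and $\psi(1-\psi)$ is bounded, $B_{T}(\psi)/\sqrt{2p}=\varrho'V_{T}(\psi)^{-1}\varrho\to\varrho_{\infty}\psi(1-\psi)$ uniformly in $\psi$.

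The cross term $A_{T}$ is the crux. A Cauchy--Schwarz bound in the $V_{T}(\psi)^{-1}$ inner product gives only $|A_{T}(\psi)|\le2\sqrt{B_{T}(\psi)}\,\sqrt{W_{T}^{0}(\psi)}=O_{p}(p^{1/4})\cdot O_{p}(p^{1/2})$, so $A_{T}(\psi)/\sqrt{2p}=O_{p}(p^{1/4})$ and fails; the resolution is that $A_{T}$ is \emph{linear}, not quadratic, in the noise. Write $A_{T}(\psi)=2d_{T}(\psi)'\sum_{t=1}^{T}x_{t}(\psi)u_{t}$ with $d_{T}(\psi):=\hat{P}(\psi)^{-1}S'V_{T}(\psi)^{-1}\kappa_{2\ell}$; the explicit form of $SP(\psi)^{-1}$ from the previous step gives $d_{T}(\psi)'x_{t}(\psi)=\{\kappa_{2\ell}'P\Xi^{-1}x_{t}\}(1\{t/T>\psi\}-(1-\psi))+\mathrm{rem}_{t}(\psi)$. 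Conditioning on $\{x_{t}\}$, the leading part of $A_{T}(\psi)$ is a sum of martingale differences with conditional variance $\asymp\sum_{t}E(\{\kappa_{2\ell}'P\Xi^{-1}x_{t}\}^{2}\sigma_{t}^{2})\asymp T\kappa_{2\ell}'P\Xi^{-1}P\kappa_{2\ell}=\sqrt{2p}\,\varrho'P\Xi^{-1}P\varrho\asymp\sqrt{p}$ --- the partial cancellation of the $\Xi^{-1}$ factors against $E(x_{t}x_{t}'\sigma_{t}^{2})\approx\Xi$ being the same mechanism that keeps $B_{T}/\sqrt{2p}$ bounded --- so $A_{T}(\psi)=O_{p}(p^{1/4})$ and $A_{T}(\psi)/\sqrt{2p}=O_{p}(p^{-1/4})$ pointwise. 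Uniformity over $\psi$ follows from Doob's maximal inequality for the martingale $[T\psi]\mapsto\sum_{t\le[T\psi]}\{\kappa_{2\ell}'P\Xi^{-1}x_{t}\}\nu_{t}$ combined with the uniform, Lipschitz-in-$\psi$ convergence of $d_{T}(\psi)$ from the previous paragraph, while $\mathrm{rem}_{t}(\psi)$ and the approximation-error contribution $\sum_{t}(d_{T}(\psi)'x_{t}(\psi))r_{Tt}$ are handled exactly as in the proof of Theorem~\ref{thm:null_Chow} using $\sup_{t}E r_{Tt}^{2a}=o(T^{-1})$, $p=o(T)$ and rate~\eqref{rate:Q_weak_conv}. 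Adding the three pieces by Slutsky's theorem in $\ell^{\infty}(\Psi)$ --- the tight limit $\sqrt{\upomega}\,\mathcal{Z}$, plus the continuous deterministic shift $\varrho_{\infty}\psi(1-\psi)$, plus a uniformly $o_{p}(1)$ term --- yields $\mathcal{Z}_{T}(\psi)\Rightarrow\sqrt{\upomega}\,\mathcal{Z}(\psi)+\varrho_{\infty}\psi(1-\psi)$. I expect this last step --- establishing $\sup_{\psi}|A_{T}(\psi)|=o_{p}(\sqrt{p})$ --- to be the main obstacle: one must exploit the linearity of $A_{T}$ to beat the crude Cauchy--Schwarz rate and then upgrade to a uniform-in-$\psi$ bound. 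Note that a contiguity/Le~Cam argument is unavailable, since the log-likelihood ratio of $\mathcal{H}_{\ell}$ against $\mathcal{H}_{0}$ has a drift of diverging order $T\|\kappa_{2\ell}\|^{2}=\sqrt{2p}$ --- precisely the aggregation of individually negligible changes that the theorem quantifies.
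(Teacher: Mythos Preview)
Your proposal is correct and takes essentially the same route as the paper: the three-term expansion $W_T=W_T^0+A_T+B_T$, the variance bound on the linear cross term (the paper writes it as $\varrho'A(\psi,\psi_0)'F(\psi)^{-1}L(\psi)L(\psi)'F(\psi)^{-1}A(\psi,\psi_0)\varrho/\sqrt{p}=O_p(p^{-1/2})$), and the block identity $F(\psi)^{-1}=\psi(1-\psi)P\Xi^{-1}P$ for the drift are exactly the paper's ingredients. The only cosmetic difference is that the paper carries a general break point $\psi_0$ throughout via the matrix $A(\psi,\psi_0)=L(\psi)X^{\ast}(\psi_0)$ and shows $\left\Vert A(\psi,\psi_0)-\{\min(\psi,\psi_0)-\psi\psi_0\}\{\psi(1-\psi)\}^{-1}I_p\right\Vert=o_p(1)$ uniformly, which is precisely what justifies your ``merely rescaling $\kappa_{2\ell}$'' remark.
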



\subsection{Asymptotic Optimality}

To construct our test statistics, we exploit an approach to testing that was initiated by \cite{Andrews1994a} for parametric models and has been widely applied e.g. \cite{Song2009}.  
This involves constructing test statistics by weighted exponential transformations of an underlying process. These are motivated by the fact that, under certain simplifying assumptions, tests constructed in this fashion have an optimality property, as we explore in Section \ref{sec:optimal}.
In the parametric setting, \cite{elliott2006efficient} showed that many of the  seemingly different optimal tests are in fact asymptotically equivalent
within a large class of local break models and what matters is the weighted  average size of the breaks, regardless of the number of breaks. These references do not treat the stability of nonparametric or growing dimensional models. 


Accordingly, define the following class of test statistics:
\begin{equation}
	Exp{\mathcal{Z}}_{T}(\mathfrak{c})=\frac{\sqrt{2}}{\mathfrak{c}}\log \int_{\Psi }\func{%
		exp}\left( \frac{\mathfrak{c}}{\sqrt{2}}\mathcal{Z}_{T} \left( \psi \right) \right) dJ\left( \psi \right)  \label{eq:expQn1},
\end{equation}
for a positive $\mathfrak{c}$ and a bounded weight function $J(\cdot)$ such that $\int_{\Psi }dJ\left( \psi \right) =1$.  These are \cite{Andrews1994a}-style test statistics but correctly constructed to account for $p\rightarrow\infty$ as well as the centered and rescaled Wald process $\mathcal{Z}_{T} \left( \psi \right)$ that we introduced above. 

The optimality of the $Exp{\mathcal{Z}}_{T}$ test under the Gaussian design is
formally given by the following equivalence theorem. We will elaborate on the robustification against the HLV factor in the subsequent section.
\begin{assumption}
	\label{ass:opt1} $Ex_{t}^{\prime }x_{t}=O\left( p\right) $, the minimum
	eigenvalue $\underline{\text{eig} }\left(Ex_{t}x_{t}^{\prime }\right)$ is bounded away
	from zero, and $\max_{t}\sup_{\left\vert a\right\vert _{2}=1}E\exp \left(
	\left\vert w_{t}^{\prime }a\right\vert ^{2}\right) <\infty $.
\end{assumption}
\begin{theorem}
	\label{theorem:optimality} Let the specification in (\ref{eq:class for
		optimality}) hold together with Assumptions \ref{ass:errors}-\ref{ass:opt1}. Then, 
	$$
	LR_{T}=e^{-\mathfrak{c}^{2}/4}Exp{\mathcal{Z}}_{T}+o_{p}(1),
	$$
	under both $H_{0}$ and $H_{\ell }$.
\end{theorem}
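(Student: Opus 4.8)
The plan is to evaluate in closed form the inner Gaussian integral $\int\int p_{\boldsymbol b,\psi}(Y\mid\boldsymbol X)\,dQ(\boldsymbol b)\,dJ(\psi)$ to which $LR_{T}$ already reduces, and then perform a careful $\sqrt p$-order expansion in which the diverging $O(\sqrt p)$ contributions of a log-determinant and of a quadratic form cancel, leaving $e^{-\mathfrak c^{2}/4}$ times $\exp(\tfrac{\mathfrak c}{\sqrt2}\mathcal Z_{T}(\psi))$ inside the integral over $\Psi$. First I would expand the square in $LR_{T}$: since $\gamma_{0}=0$ gives $y_{t}=\sigma_{0}\nu_{t}$ under $H_{0}$ (and $y_{t}=\sigma_{0}\nu_{t}+x_{t}'\kappa_{2\ell}1\{t/T>\psi\}$ under $H_{\ell}$), the exponent is, for each fixed $\psi$, a quadratic $a'\boldsymbol b-\tfrac12\boldsymbol b'A\boldsymbol b$ with $a=\sigma_{0}^{-2}(T\sqrt p)^{-1/2}\sum_{t}\psi_{t}\boldsymbol x_{t}y_{t}$ and $A=\sigma_{0}^{-2}(T\sqrt p)^{-1}\sum_{t}\psi_{t}^{2}\boldsymbol x_{t}\boldsymbol x_{t}'$. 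Partitioning $\boldsymbol x_{t}=(x_{t}',w_{t}')'$ and $\boldsymbol b=(\boldsymbol b_{1}',\boldsymbol b_{2}')'$, I would argue that the $Q_{2}(\cdot\mid\boldsymbol b_{1})$-integration contributes only $1+o_{p}(1)$: on its support $\{\|\boldsymbol b_{2}\|_{2}\le K,\ \sqrt p\|\boldsymbol b_{2}\|_{2}=o(1)\}$ the terms $a_{2}'\boldsymbol b_{2}$, $\boldsymbol b_{1}'A_{12}\boldsymbol b_{2}$ and $\boldsymbol b_{2}'A_{22}\boldsymbol b_{2}$ are $o_{p}(1)$ uniformly, which is exactly where the sub-Gaussian bound $\max_{t}\sup_{|a|_{2}=1}E\exp(|w_{t}'a|^{2})<\infty$ of Assumption~\ref{ass:opt1} enters (to control $w_{t}'\boldsymbol b_{2}$) in combination with $\|\boldsymbol b_{2}\|_{2}=o(p^{-1/2})$.

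This leaves $LR_{T}=(1+o_{p}(1))\int_{\Psi}\int\exp(a_{1}'\boldsymbol b_{1}-\tfrac12\boldsymbol b_{1}'A_{11}\boldsymbol b_{1})\,dQ_{1}(\boldsymbol b_{1})\,dJ(\psi)$ with $Q_{1}=N(0,\Sigma)$, $\Sigma=\mathfrak c\sigma_{0}^{2}(\psi(1-\psi)M)^{-1}$, $M:=Ex_{t}x_{t}'$ (equal to $P$ of Assumption~\ref{ass:M_diff} under the design). The Gaussian MGF identity gives the inner integral as $\det(I_{p}+\Sigma A_{11})^{-1/2}\exp(\tfrac12 a_{1}'(A_{11}+\Sigma^{-1})^{-1}a_{1})$. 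Using $\psi_{t}=-\sqrt{\psi(1-\psi)}\,\delta_{t}(\psi)$ and $T^{-1}\sum_{t}\psi_{t}^{2}x_{t}x_{t}'=\psi(1-\psi)M+O_{p}(\tau_{p})$ uniformly in $\psi$ (Assumption~\ref{ass:M_diff}(ii)), one has $\Sigma A_{11}=\tfrac{\mathfrak c}{\sqrt p}(I_{p}+O_{p}(\tau_{p}\mu_{T}^{-1}))$, so writing $I_{p}+\Sigma A_{11}=(1+\mathfrak c/\sqrt p)(I_{p}+O_{p}(\tau_{p}\mu_{T}^{-1}/\sqrt p))$ and using $p\log(1+\mathfrak c/\sqrt p)=\mathfrak c\sqrt p-\tfrac{\mathfrak c^{2}}{2}+O(p^{-1/2})$ yields $\log\det(I_{p}+\Sigma A_{11})^{-1/2}=-\tfrac{\mathfrak c\sqrt p}{2}+\tfrac{\mathfrak c^{2}}{4}+o_{p}(1)$ uniformly, the remainder negligible by \eqref{rate:Q_weak_conv}. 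For the quadratic form, the Frisch--Waugh algebra underlying $W_{T}(\psi)$ together with $\widehat\Xi(\psi)\to\sigma_{0}^{2}P(\psi)$ (conditional homoskedasticity holds in the Gaussian design) identifies $T^{-1}(\sum_{t}\delta_{t}(\psi)x_{t}\nu_{t})'M^{-1}(\sum_{t}\delta_{t}(\psi)x_{t}\nu_{t})=W_{T}(\psi)+o_{p}(\sqrt p)$ uniformly, so $a_{1}'(A_{11}+\Sigma^{-1})^{-1}a_{1}=\tfrac{\mathfrak c}{\mathfrak c+\sqrt p}W_{T}(\psi)+o_{p}(1)$; substituting $W_{T}(\psi)=p+\sqrt{2p}\,\mathcal Z_{T}(\psi)$ from \eqref{Q_n}, expanding $\tfrac{\mathfrak c}{\mathfrak c+\sqrt p}=\tfrac{\mathfrak c}{\sqrt p}-\tfrac{\mathfrak c^{2}}{p}+O(p^{-3/2})$ and using $\sup_{\psi}|\mathcal Z_{T}(\psi)|=O_{p}(1)$ (Theorem~\ref{thm:null_Chow}, resp.\ Theorem~\ref{thm:local_power} under $H_{\ell}$) gives $\tfrac12 a_{1}'(A_{11}+\Sigma^{-1})^{-1}a_{1}=\tfrac{\mathfrak c\sqrt p}{2}-\tfrac{\mathfrak c^{2}}{2}+\tfrac{\mathfrak c}{\sqrt2}\mathcal Z_{T}(\psi)+o_{p}(1)$.

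Adding the two expansions, the $\tfrac{\mathfrak c\sqrt p}{2}$ pieces cancel and $\tfrac{\mathfrak c^{2}}{4}-\tfrac{\mathfrak c^{2}}{2}=-\tfrac{\mathfrak c^{2}}{4}$, so the inner integral equals $\exp(-\tfrac{\mathfrak c^{2}}{4}+\tfrac{\mathfrak c}{\sqrt2}\mathcal Z_{T}(\psi)+o_{p}(1))$ uniformly in $\psi$. Integrating over $\Psi$ (the $o_{p}(1)$ being uniform, it passes through $\int_{\Psi}(\cdot)\,dJ$, and $\int_{\Psi}\exp(\tfrac{\mathfrak c}{\sqrt2}\mathcal Z_{T}(\psi))\,dJ(\psi)=O_{p}(1)$) yields $LR_{T}=e^{-\mathfrak c^{2}/4}\int_{\Psi}\exp(\tfrac{\mathfrak c}{\sqrt2}\mathcal Z_{T}(\psi))\,dJ(\psi)+o_{p}(1)$, which is the asserted $LR_{T}=e^{-\mathfrak c^{2}/4}Exp\mathcal Z_{T}+o_{p}(1)$ — the right side being a fixed positive constant times a strictly increasing function of the statistic $Exp\mathcal Z_{T}(\mathfrak c)$ in \eqref{eq:expQn1}, so that the Neyman--Pearson optimality of $LR_{T}$ transfers to $Exp\mathcal Z_{T}$. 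Under $H_{\ell}$ the argument is unchanged except that $\mathcal Z_{T}(\psi)$ now carries the deterministic drift $\varrho_{\infty}\psi(1-\psi)$ supplied by Theorem~\ref{thm:local_power}.

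The main obstacle is the cancellation in the previous paragraph: the $O(\sqrt p)$-sized terms $-\tfrac{\mathfrak c\sqrt p}{2}$ (from the log-determinant) and $+\tfrac{\mathfrak c\sqrt p}{2}$ (from the quadratic form) must cancel to within $o_{p}(1)$ \emph{uniformly in} $\psi$. This forces one to replace the random Gram matrix $T^{-1}\sum_{t}\psi_{t}^{2}x_{t}x_{t}'$ and the Wald ingredients $\widehat P(\psi),\widehat\Xi(\psi),\widehat\kappa_{2}(\psi)$ by their limits with approximation errors that remain $o_{p}(1)$ even after multiplication by $\sqrt p$ — precisely the budget that the rate condition \eqref{rate:Q_weak_conv} is calibrated to provide — and to retain the second-order term $-\mathfrak c^{2}/2$ in the expansion of $\tfrac{\mathfrak c}{\mathfrak c+\sqrt p}W_{T}(\psi)$ so that it combines with the $+\mathfrak c^{2}/4$ from $\log\det$ to produce the final $-\mathfrak c^{2}/4$. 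Everything else — the closed-form Gaussian integral, the negligibility of the $\boldsymbol b_{2}$-block, and the Frisch--Waugh identification of $W_{T}(\psi)$ — is routine given the assumptions and the already-established functional CLT.
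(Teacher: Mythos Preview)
Your approach is correct and lands on the same intermediate object as the paper: after disposing of the $\boldsymbol b_{2}$-block, both arguments produce $(1+\mathfrak c/\sqrt p)^{-p/2}\int_{\Psi}\exp\{\tfrac12\tfrac{\mathfrak c/\sqrt p}{1+\mathfrak c/\sqrt p}W_{T}(\psi)\}\,dJ(\psi)$ and then expand. The packaging differs in two places. First, you reach this form by the Gaussian MGF identity for the $Q_{1}$-integral and a direct $\log\det$ expansion, whereas the paper completes the square in $h=p^{-1/4}(1,\psi-1)'\otimes\boldsymbol b_{1}$ and invokes the algebra of Andrews and Ploberger (their Theorem~A.1(b)); your route is more self-contained. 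Second, under $H_{\ell}$ you redo the computation using Theorem~\ref{thm:local_power}, while the paper instead proves contiguity via Le~Cam's first lemma (the limit of $LR_{T}$ under $H_{0}$ has unit mean), so that the $o_{p}(1)$ equivalence established under the null carries over automatically; the contiguity device is cleaner because it avoids re-verifying every approximation under the local drift.

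One point is looser than you indicate. The cross term $\boldsymbol b_{1}'A_{12}\boldsymbol b_{2}$ is \emph{not} $o_{p}(1)$ uniformly in $\boldsymbol b_{1}$: under $Q_{1}$ one has $\|\boldsymbol b_{1}\|_{2}\asymp\sqrt p$ in probability, while $\|A_{12}\boldsymbol b_{2}\|=o_{p}(p^{-1/2})$, so the product is only $o_{p}(1)$ for ``typical'' $\boldsymbol b_{1}$, not uniformly. The paper closes this gap by first truncating to a set of $\boldsymbol b_{1}$ of high $Q_{1}$-mass, using the identity $E_{0}[LR_{T}]=1$ (hence the tail contribution has small expectation) so that on the retained set the cross term can be bounded deterministically. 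Within your framework the simplest fix is to swap the order: perform the exact Gaussian integral in $\boldsymbol b_{1}$ with the perturbed linear term $\tilde a_{1}=a_{1}-A_{12}\boldsymbol b_{2}$, observe that $\|(A_{11}+\Sigma^{-1})^{-1}\|=O(1)$ and $\|a_{1}\|=O_{p}(p^{1/4})$, so the perturbation to $\tfrac12\tilde a_{1}'(A_{11}+\Sigma^{-1})^{-1}\tilde a_{1}$ is $O_{p}(p^{1/4})\cdot o_{p}(p^{-1/2})=o_{p}(1)$ uniformly on the $Q_{2}$-support; then the $Q_{2}$-integration is genuinely $1+o_{p}(1)$.
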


To shed light on our optimality, we elaborate the relation of our $Exp{\mathcal{Z}}_{T}$ test statistic to the optimal test statistic Exp-W introduced by \cite{Andrews1994a} for the parametric model, viz. 
$\left( 1+ \mathfrak{c}\right) ^{-\frac{p}{2}}\int_{\Psi }\func{exp}\left( \frac{1}{2} {\mathfrak{c}}/({1+\mathfrak{c})}W_{T}\left( \psi \right) \right) dJ\left( \psi \right) $, where $W_{T}\left( \psi \right) $ is the Wald statistic defined in (\ref{wald_def}). Specifically, \eqref{eq:expw} in the proof of  Theorem \ref%
{theorem:optimality}  shows that 
\begin{eqnarray}
	\exp \left( \frac{\mathfrak{c}}{\sqrt{2}}Exp{\mathcal{Z}}_{T}(\mathfrak{c})-\frac{%
		\mathfrak{c}^{2}}{4}\right) &=&\left( 1+\frac{\mathfrak{c}}{\sqrt{p}}\right)
	^{-\frac{p}{2}}\int_{\Psi }\func{exp}\left( \frac{1}{2}\frac{\mathfrak{c}/%
		\sqrt{p}}{1+\mathfrak{c}/\sqrt{p}}W_{T}\left( \psi \right) \right)
	dJ\left( \psi \right)  \notag \\
	&+&  O_{p}\left( \frac{1}{p}\right) ,   \label{exp_wald}
\end{eqnarray}%
under both the null and local alternatives. For a moderate value of $%
\mathfrak{c}$, our $Exp{\mathcal{Z}}_{T}$ statistic can be approximated by an exponential statistic Exp-W with some $\mathfrak{c}^{\prime }$ by a change-of-variables. 
On the other hand,  the conventional supremum statistic for the structural break testing can also be represented by a limit case of our $Exp{\mathcal{Z}}_{T}$ since $c^{-1}\log \sum_{i=1}^n exp(cx_i) \to \max_i x_i $ as  $c\to \infty$. This is a different feature  from \cite{Andrews1994a}.

\section{A Class of Optimality Motivated HLV-Robust and Pivotal Tests}\label{sec:exp_rob_piv}

\subsection{HLV Robustification of $ \mathcal{Z}_T(\psi) $} As observed above, the HLV term $\upomega$ is a type of long run variance and hence can handled with techniques for this kind of problem. Heteroskedasticity and autocorrelation consistent/robust (HAC/HAR) inference or the estimation of the spectral density at frequency zero has a long history in time series literature, see e.g. \cite{priestley1988spectral,newey1987simple,Andrews1991a,kiefer2002heteroskedasticity,shao2015self}. Noting that the asymptotic covariance kernel of the process $ \mathcal{Z}_T(\psi) $ depends on HLV, we utilize the fixed-bandwidth kernel approach to obtain an asymptotically pivotal test.  For this purpose, $ k(\cdot) $ denotes a kernel function that satisfies Assumption \ref{ass:kernel}.



Since $\nu_{t}$ and $\Xi$ are not directly observable in practice, we replace them with the least squares-based estimates as in Section \ref{sec:model} and introduce $\ell _t = \left( np\right)^{-1/2} x_{t}'\hat{\Xi}^{-1} \hat{\nu}_{t} \text{\ensuremath{\sum_{s=1}^{t-1}x_{s}\hat{\nu}_{s}}} $ and its demeaned version,  $\bar{\ell}_{t}= \ell_t - T^{-1}\sum_{t=2}^{T} \ell_t$. A feasible kernel weighted estimate of $\upomega$ is then given by 
\begin{equation} \label{eq:Vhat}
	\hat{\upomega}=\frac{2}{T}\text{\ensuremath{\sum_{t=2}^{T}\text{\ensuremath{\sum_{s=2}^{T}k\left(\frac{t-s}{T b}\right)\bar{\ell}_{s}}}\bar{\ell}_{t}}}.
\end{equation}

\cite{Gupta2023} observed that a fixed bandwidth approach, as in \cite{Sun2014}, works well when dealing with corrections for long-run variances of nonlinear transformations of primitive variables, which is what $\upomega$ captures.  Our estimator is accordingly based on the weighting function $ K_{h}\left(r,s\right) = k\left(h\left(r-s\right)\right)$, where $ h=1/b $ and  $k\left(u\right)=\left(1-\left|u\right|\right)^{h}1\left\{ \left|u\right|<1\right\} $, using the Bartlett kernel with $h=1$. This is what is termed by \cite{Sun2014} as the sharp kernel estimator. In the theorems presented below, we characterize the joint weak limit of $ \hat{\upomega} $ and $ \mathcal{Z}_T (\psi) $, but also incorporate a bootstrap bias correction.

\subsection{Boostrap Bias Correction and Pivotalization of $ \mathcal{Z}_T(\psi) $}

Observe that the degrees of freedom $p$  provide a first-order asymptotically correct centering for $ W_T (\psi) $ but \cite{Gupta2023} show that this might induce bias in finite samples and propose a bootstrap bias correction. This is done via the \textit{null-imposed} wild bootstrap by generating 
\begin{equation} \label{eq:y^star}
	y_t^{\star} = x_t ' \hat{\kappa}_1 + \hat{u}_t \upsilon_t , \quad t=1,...,T,
\end{equation}
where $ \upsilon_t $ is an iid sequence of centered and normalized variables, e.g. Rademacher variables, to compute $ \mathcal{Z}_T^{\star} (\psi) $. The imposition of the null helps enhance the power of the test.
Repeat $ B $ times to obtain $  \bar{\mathcal{Z}}_T^{\star} (\psi) = B^{-1} \sum_j^{B} \mathcal{Z}_T^{\star, j} (\psi)  $, the bootstrap estimate of the bias. It is worth commenting that this bootstrap may not properly approximate the distribution of the test statistic even asymptotically, except the first moment that we utilize in our correction. This is due to the dependence of the asymptotic variance of the test statistic on the higher order moments of the sampling distribution. 

Then, define the bias corrected and HLV-robustified version of  $\mathcal{Z}_T (\psi)$:
\begin{equation}\label{eq:Tnb}
	\mathscr{H}_T^b (\psi) := 
	\frac{ \mathcal{Z}_T (\psi) - \bar{\mathcal{Z}}_T^{\star} (\psi) }{\sqrt{\hat{\upomega}}}.
\end{equation}
Now, let the superscript $ \star $ indicate the bootstrap analogue and $\mathscr{K}_{h}\left(r,s\right)  =K_{h}\left(r,s\right)-\int_{0}^{1}K_{h}\left(\varrho,s\right)d\varrho-\int_{0}^{1}K_{h}\left(r,\varrho\right)d\varrho +\int_{0}^{1}\int_{0}^{1}K_{h}\left(\varrho_{1},\varrho_{2}\right)d\varrho_{1}d\varrho_{2} $. Following a standard type of assumption on the kernel function we establishe pivotality of $\mathscr{H}_T^b (\psi)$.

\begin{assumption}
	\label{ass:kernel} (1) For all $x\in {\mathbb{R}}$, $k(x)=k(-x)$ and $%
	\left
	\vert k(x)\right \vert \leq 1$; $k(0)=1$; $k(x)$ is continuous at
	zero and almost everywhere on ${\mathbb{R}}$; $\int _{{\mathbb{R}}%
	}\left
	\vert k(x)\right \vert dx<\infty $. (2)
	For
	any $b\in(0,1]$ and $\rho\geq1$, $k_{b}\left(x\right)=k\left(x/b\right)$
	and $k^{\rho}\left(x\right)$ are symmetric, continuous, piecewise
	monotonic, and piecewise continuously differentiable on $\left[-1,1\right]$.
	(3) $\int_{[0,\infty )}\bar{k}%
	(x)<\infty $, where $\bar{k}(x)=\sup_{y\geq x}\left\vert k(y)\right\vert $.
\end{assumption}

\begin{theorem}\label{thm:bootstrap} Under Assumptions \ref{ass:errors}-\ref{ass:autocovandcumulant}, Assumption \ref{ass:kernel} and $ \mathcal{H}_0 $,
	\begin{equation}
		\sup_{\psi\in\Psi}\left\vert	E^{\star}W_{T}^{\star}(\psi )-p \right\vert = o_p\left(p^{1/2}\right), \label{bootthmorig}
	\end{equation}
	and
	\begin{equation} 
		\mathscr{H}^b_T(\psi)\Rightarrow \frac{\mathcal{Z}(\psi)}{\sqrt{\int_{0}^{1}\int_{0}^{1}\mathscr{K}_{h}\left(r,s\right)dG\left(r\right)dG\left(s\right)}}.\label{bootthmnew}
	\end{equation}
\end{theorem}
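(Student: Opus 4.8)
The plan is to prove the two displays separately, building on the functional CLT of Theorem \ref{thm:null_Chow} and the structure of the self-normalizing kernel estimator $\hat{\upomega}$.

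\emph{Step 1: Bias expansion \eqref{bootthmorig}.} First I would analyze the bootstrap Wald process under the null-imposed wild bootstrap \eqref{eq:y^star}. Conditionally on the data, $y_t^{\star}$ is generated with $\kappa_2=0$, so $E^\star W_T^\star(\psi)$ is a quadratic-form expectation driven by the bootstrap analogues $\hat{P}^\star(\psi)$, $\hat{\Xi}^\star(\psi)$ and the cross-products of $x_t(\psi)\hat u_t\upsilon_t$. Taking $E^\star$ (over the Rademacher variables) reduces this to an explicit function of $\hat u_t^2$ and the design matrices. I would show that, under Assumption \ref{ass:M_diff}, this equals $p + R_T(\psi)$ with $\sup_\psi|R_T(\psi)| = o_p(p^{1/2})$, by expanding $\hat{P}^\star(\psi)^{-1}\hat{\Xi}^\star(\psi)\hat P^\star(\psi)^{-1}$ around $S P(\psi)^{-1}\Xi(\psi)P(\psi)^{-1}S'$ and controlling the trace of the error via the operator-norm rates $\tau_p,\zeta_p$ and the rate condition \eqref{rate:Q_weak_conv}. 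The uniformity in $\psi$ follows from the uniform-in-$r$ statements already assumed in Assumption \ref{ass:M_diff}(ii) together with a standard chaining/monotonicity argument over $\Psi$. Subtracting $p$ and dividing by $\sqrt{2p}$ then shows $\bar{\mathcal{Z}}_T^\star(\psi) \to 0$ uniformly (as $B\to\infty$ the Monte Carlo error is negligible, and the remaining term is $E^\star\mathcal{Z}_T^\star(\psi)=R_T(\psi)/\sqrt{2p}=o_p(1)$).

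\emph{Step 2: Joint weak limit of $(\mathcal{Z}_T(\psi),\hat{\upomega})$.} Next I would establish that $\hat{\upomega}$ converges in probability (or jointly weakly) to the random variable $\int_0^1\int_0^1 \mathscr{K}_h(r,s)\,dG(r)\,dG(s)$ appearing in the denominator of \eqref{bootthmnew}. The key observation is that $\bar\ell_t$ is the feasible version of the demeaned partial-sum increments $\delta_t$-weighted $\varsigma$-process, and that the sharp (Bartlett, $h=1$) fixed-bandwidth kernel sum $\hat{\upomega}=\frac{2}{T}\sum_{t,s}k\!\left(\frac{t-s}{Tb}\right)\bar\ell_s\bar\ell_t$ is a continuous functional of the process $\mathcal{Z}_T$; by the continuous mapping theorem applied to the FCLT from Theorem \ref{thm:null_Chow}, $\hat\upomega$ converges jointly with $\mathcal{Z}_T$ to the corresponding functional of $\sqrt{\upomega}\,\mathcal{Z}$, which after cancelling the common $\upomega$ factor in numerator and denominator of $\mathscr{H}_T^b$ leaves the pivotal quantity on the right of \eqref{bootthmnew}. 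Here I would need: (a) replacement of $\nu_t,\Xi$ by $\hat\nu_t,\hat\Xi$ costs only $o_p(1)$ uniformly, again using Assumptions \ref{ass:aprx0} and \ref{ass:M_diff} and the approximation-error control; (b) the demeaning in $\bar\ell_t$ produces exactly the double-centered kernel $\mathscr{K}_h$; (c) the self-normalized ratio is well-defined because the limit denominator is a.s. positive (the double-centered Bartlett kernel gives a nonnegative-definite quadratic form in $dG$, and it is nondegenerate since $G$ is a nontrivial Gaussian process). Then \eqref{bootthmnew} follows from Step 1, Step 2, and Slutsky/continuous-mapping, since $\mathscr{H}_T^b(\psi)=(\mathcal{Z}_T(\psi)-\bar{\mathcal{Z}}_T^\star(\psi))/\sqrt{\hat\upomega} \Rightarrow \sqrt{\upomega}\,\mathcal{Z}(\psi)/\sqrt{\upomega \cdot (\text{denominator functional of }\mathcal{Z})}$.

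\emph{Main obstacle.} I expect the hardest part to be Step 1: showing the bootstrap bias is $o_p(p^{1/2})$ \emph{uniformly} in $\psi$. The bootstrap residuals $\hat u_t$ carry estimation error from $\hat\kappa(\psi)$ which itself varies with $\psi$, so the expansion of $E^\star W_T^\star(\psi)$ accumulates several trace terms (one from the bias of $\hat\Xi^\star$, one from $\hat u_t^2$ versus $u_t^2$, one from the inverse-matrix perturbation), each of which must be shown to be $o(p^{1/2})$ after summing $p$ diagonal contributions — this is where the sharpened rate condition \eqref{rate:Q_weak_conv} with its $\mu_T$ powers and the $\sqrt p(\mu_T^{-1}\tau_p+\zeta_p)$ term is essential, and the uniform control requires either a maximal inequality over a mesh of $\Psi$ plus Lipschitz-in-$\psi$ bounds on $\hat P(\psi)^{-1}$, or a direct monotonicity argument in the indicator $1\{t/T>\psi\}$. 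The $\upomega$-convergence in Step 2, by contrast, should be relatively routine given that \cite{Gupta2023} already handles the fixed-bandwidth estimation of this HLV term and the FCLT is in hand.
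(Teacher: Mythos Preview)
Your plan is essentially on the right track and mirrors the paper's strategy, but there is one technical point in Step~1 where your description is imprecise and would need to be sharpened. You write that ``Taking $E^\star$ reduces this to an explicit function of $\hat u_t^2$ and the design matrices,'' and then propose to control the error by expanding $\hat P^\star(\psi)^{-1}\hat\Xi^\star(\psi)\hat P^\star(\psi)^{-1}$ around its population analogue using the rates $\tau_p,\zeta_p$. The problem is that $\hat F^\star(\psi)$ is itself random under $P^\star$, because $\hat\Xi^\star(\psi)$ is built from bootstrap residuals; so $E^\star W_T^\star(\psi)$ is \emph{not} simply a trace expression in $\mathrm{diag}(\hat u_t^2)$, and the in-sample rates $\tau_p,\zeta_p$ do not directly apply to $\hat\Xi^\star-\hat\Xi$. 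The paper handles this by inserting an intermediate quantity $\bar W_T^\star(\psi)$ in which $\hat F^\star(\psi)^{-1}$ is replaced by the data-measurable $\hat F(\psi)^{-1}$; for $\bar W_T^\star$ the bootstrap expectation \emph{is} the trace you have in mind, and it is shown to be $p+o_p(p^{1/2})$ via Lemma~\ref{lemma:deltahat}. The gap $E^\star|\bar W_T^\star-W_T^\star|$ is then controlled by Cauchy--Schwarz together with \emph{fourth-moment bootstrap bounds} of the form $E^\star\Vert\hat F(\psi)-\hat F^\star(\psi)\Vert^4$ and $E^\star\Vert T^{-1/2}L'(\psi)\nu^\star\Vert^4$, obtained by tracking the pieces $\upalpha_1^\star,\upalpha_3^\star$ of $\hat\Xi-\hat\Xi^\star$ as in Lemma~\ref{lemma:Omega_hat_true}. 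This is the step where the rate condition \eqref{rate:Q_weak_conv} enters, via the requirement $\mu_T^{-14}p^{14}/T^4=o(p)$; your ``operator-norm'' expansion would need to be upgraded to these bootstrap moment computations.

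For Step~2 your outline matches the paper's argument (which is actually carried out in the proof of Theorem~\ref{theorem:exptest_dist}): one first shows $\hat\upomega-\tilde\upomega=o_p(1)$, where $\tilde\upomega$ is the infeasible version with $\nu_t,\Xi$ in place of $\hat\nu_t,\hat\Xi$, and then invokes Lemma~1(c) of \cite{Sun2014} to obtain the joint weak limit with $\mathcal{Z}_T$. Your observation that the common $\upomega$ factor cancels in the ratio, yielding pivotality, is exactly the point.
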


We introduce a class of HLV robust and pivotalized weighted
exponential statistics based on our approach that encompasses nonparametric and growing dimensional models. Define 
\begin{equation}
	Exp{\mathscr{H}}_{T}(\mathfrak{c})=\frac{\sqrt{2}}{\mathfrak{c}}\log \int_{\Psi }\func{%
		exp}\left( \frac{\mathfrak{c}}{\sqrt{2}}\mathscr{H}_{T}^b \left( \psi \right) \right) dJ\left( \psi \right)  \label{eq:expQn},
\end{equation}%
again for a positive $\mathfrak{c}$ and a bounded weight function $J(\cdot)$ such that $\int_{\Psi }dJ\left( \psi \right) =1$, and for which we set  $ \hat{\kappa}_1 = \hat{\kappa}_1 (\hat{\psi})  $ and $ \hat{\nu}_t =\hat{\nu}_t (\hat{\psi})$ in \eqref{eq:y^star}, where $ \hat{\psi} = \arg \min \hat{\sigma}(\psi) $ and  $ \hat{\sigma}(\psi) $ is the sum of squared residuals as defined in Section \ref{sec:model}. Since
\begin{equation}
	\lim_{\mathfrak{c}\rightarrow 0}Exp{\mathscr{H}}_{T}(\mathfrak{c})
	=\int_{\Psi }\mathscr{H}_{T}^b (\psi )dJ(\psi ),  \label{limit_c0}
	\quad \text{and} \quad\lim_{\mathfrak{c}\rightarrow \infty }Exp{\mathscr{H}}_{T}(\mathfrak{c})=\sup_{\psi \in \Psi }\mathscr{H}_{T}^b(\psi ),
\end{equation}%
we may extend the definition of $ Exp{\mathscr{H}}_T (\mathfrak{c} )$ for $\mathfrak{c} \in [0,+\infty]$.  

  The asymptotic distributions under both hypotheses follow from the continuous mapping theorem given the weak convergences of the stochastic process $\mathcal{Z}_T(\psi)$ on $ \Psi$ that are established in Theorems \ref{thm:null_Chow} and \ref{thm:local_power}. The main result is stated in the following theorem.
\begin{theorem}
	\label{theorem:exptest_dist} Let Assumptions \ref{ass:errors}-\ref{ass:autocovandcumulant} and Assumption \ref{ass:kernel} hold. Then, 
	\begin{equation*}
		Exp{\mathscr{H}}_{T}\left( \mathfrak{c}\right) \overset{d}{\rightarrow} \frac{\sqrt{2}}{\mathfrak{c}}%
		\log \int_{\Psi }\func{exp}\left( \frac{\mathfrak{c}\mathcal{Z}\left( \psi \right)}{\sqrt{2{ \int_{0}^{1}\int_{0}^{1}\mathscr{K}_{h}\left(r,s\right)dG\left(r\right)dG\left(s\right)}}%
		} \right) dJ\left( \psi \right) ,
	\end{equation*}%
	under ${\mathcal{H}}_{0},$ and 
	\begin{equation*}
		Exp{\mathscr{H}}_{T}\left( \mathfrak{c}\right) \overset{d}{\rightarrow} \frac{\sqrt{2}}{\mathfrak{c}}%
		\log \int_{\Psi }\func{exp}\left( \frac{\mathfrak{c}\mathcal{Z}\left( \psi \right) + \mathfrak{c}  \varrho _{\infty }\frac{\left( \psi \psi _{0}-\left( \min\left\{\psi, \psi
				_{0}\right\}\right) \right) ^{2}}{\psi (1-\psi )}
		}{\sqrt{2{{ \int_{0}^{1}\int_{0}^{1}\mathscr{K}_{h}\left(r,s\right)dG\left(r\right)dG\left(s\right)}}}%
		}
		\right) dJ\left( \psi \right)
		,
	\end{equation*}%
	under $\mathcal{H}_{\ell }$ as specified in \eqref{local_alternatives}, with $ \psi_ 0 $ denoting the true break point.
\end{theorem}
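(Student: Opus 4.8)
The plan is to display $Exp\mathscr{H}_{T}(\mathfrak{c})$ as a \emph{fixed} continuous functional of the process $\mathscr{H}_{T}^{b}(\cdot)$ and then invoke the continuous mapping theorem together with the weak-convergence statements already in hand. By \eqref{eq:expQn}, $Exp\mathscr{H}_{T}(\mathfrak{c})=\Lambda_{\mathfrak{c}}\big(\mathscr{H}_{T}^{b}(\cdot)\big)$ with $\Lambda_{\mathfrak{c}}:\ell^{\infty}(\Psi)\to\mathbb{R}$ given by $\Lambda_{\mathfrak{c}}(g)=(\sqrt{2}/\mathfrak{c})\log\int_{\Psi}\exp\{(\mathfrak{c}/\sqrt{2})g(\psi)\}\,dJ(\psi)$. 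First I would record that $\Lambda_{\mathfrak{c}}$ is continuous on all of $\ell^{\infty}(\Psi)$: the superposition map $g\mapsto\exp\{(\mathfrak{c}/\sqrt{2})g\}$ is Lipschitz on sup-norm balls, $\int_{\Psi}(\cdot)\,dJ$ is a bounded positive linear functional with total mass one, and the resulting integral stays in a compact subinterval of $(0,\infty)$ as $g$ ranges over any sup-norm ball, so the logarithm is applied where it is continuous. Moreover the two candidate limit processes below take values in $\ell^{\infty}(\Psi)$ with separable range, since $\mathcal{Z}(\cdot)$ has a.s.\ bounded, continuous paths on the compact set $\Psi\subset(0,1)$ --- a fixed linear combination, with bounded coefficients $\psi^{-1},(1-\psi)^{-1}$, of the Gaussian processes $G,\bar{G}$ with continuous kernel \eqref{C_def} --- and the random normalizer $\int_{0}^{1}\int_{0}^{1}\mathscr{K}_{h}(r,s)\,dG(r)\,dG(s)$ is a.s.\ strictly positive, as already implied by the nondegenerate limit in Theorem \ref{thm:bootstrap}. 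The endpoints $\mathfrak{c}=0$ and $\mathfrak{c}=\infty$ are then handled via \eqref{limit_c0} and continuity of $g\mapsto\int_{\Psi}g\,dJ$ and $g\mapsto\sup_{\Psi}g$.

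Under ${\mathcal{H}}_{0}$ the conclusion is immediate: Theorem \ref{thm:bootstrap} gives $\mathscr{H}_{T}^{b}(\cdot)\Rightarrow\mathcal{Z}(\cdot)\big/\sqrt{\int_{0}^{1}\int_{0}^{1}\mathscr{K}_{h}(r,s)\,dG(r)\,dG(s)}$ in $\ell^{\infty}(\Psi)$, so applying $\Lambda_{\mathfrak{c}}$ and the continuous mapping theorem yields the first display.

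Under $\mathcal{H}_{\ell}$ the task is to identify the weak limit of $\mathscr{H}_{T}^{b}(\cdot)=\{\mathcal{Z}_{T}(\cdot)-\bar{\mathcal{Z}}_{T}^{\star}(\cdot)\}/\sqrt{\hat{\upomega}}$, after which the same functional is applied. I would establish joint weak convergence of the numerator process and the scalar $\hat{\upomega}$. For the numerator, Theorem \ref{thm:local_power}, in the form where the true break date $\psi_{0}$ need not equal the evaluation point so that the non-centrality is the tent $(\psi\psi_{0}-\min\{\psi,\psi_{0}\})^{2}/[\psi(1-\psi)]$ (which reduces to $\psi(1-\psi)$ at $\psi=\psi_{0}$), gives the limit of $\mathcal{Z}_{T}(\cdot)$, while $\bar{\mathcal{Z}}_{T}^{\star}(\cdot)$ remains $o_{p}(1)$ uniformly in $\psi$: the bootstrap data generating process \eqref{eq:y^star} imposes the null, and the $O(p^{1/4}/\sqrt{T})$ break perturbs $\hat{\kappa}_{1}(\hat{\psi})$ and the residuals $\hat{u}_{t}(\hat{\psi})$ only at an order that leaves the argument behind \eqref{bootthmorig} intact, so that bound persists under $\mathcal{H}_{\ell}$. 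For the normalizer, $\hat{\upomega}$ in \eqref{eq:Vhat} is a kernel-weighted bilinear form in residual products whose leading behavior is error-driven, the local break entering only at strictly lower order, so $\hat{\upomega}$ keeps its null weak limit jointly with the numerator --- with one and the same Gaussian limit process $G$ entering both, exactly as under ${\mathcal{H}}_{0}$. A Slutsky/continuous-mapping step then gives $\mathscr{H}_{T}^{b}(\cdot)\Rightarrow\{\mathcal{Z}(\cdot)+\varrho_{\infty}(\psi\psi_{0}-\min\{\psi,\psi_{0}\})^{2}/[\psi(1-\psi)]\}\big/\sqrt{\int_{0}^{1}\int_{0}^{1}\mathscr{K}_{h}(r,s)\,dG(r)\,dG(s)}$ in $\ell^{\infty}(\Psi)$, and a final application of $\Lambda_{\mathfrak{c}}$ and the continuous mapping theorem delivers the second display.

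The continuity and measurability steps are routine; the main obstacle is the local-alternative bookkeeping of the previous paragraph --- re-running the proof of the bias bound \eqref{bootthmorig} and of the limit theory for the self-normalizer $\hat{\upomega}$ under $\mathcal{H}_{\ell}$, and securing the \emph{joint} convergence of numerator and $\hat{\upomega}$ so that the HLV factor $\upomega$ cancels from the stochastic component while the deterministic non-centrality passes through with the asserted normalization. Once every term carrying the $O(p^{1/4}/\sqrt{T})$ break is verified to be asymptotically negligible relative to the terms producing the limits of Theorems \ref{thm:local_power} and \ref{thm:bootstrap}, only the continuous mapping theorem remains.
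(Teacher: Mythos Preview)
Your plan is essentially the paper's: weak convergence of the underlying process $\mathscr{H}_T^b(\cdot)$ followed by the continuous mapping theorem applied to the functional $\Lambda_{\mathfrak{c}}$. The difference is in where the technical weight falls. You treat the weak limit of $\hat{\upomega}$ as already supplied by Theorem~\ref{thm:bootstrap} and spend your effort on (i) the continuity of $\Lambda_{\mathfrak{c}}$ and (ii) the local-alternative bookkeeping---checking that the bootstrap bias bound \eqref{bootthmorig} and the limit theory for $\hat{\upomega}$ survive the $O(p^{1/4}/\sqrt{T})$ perturbation, and that numerator and denominator converge jointly with the same $G$. The paper, by contrast, dispatches the continuous mapping step in one sentence and devotes the body of the proof to establishing $\hat{\upomega}-\tilde{\upomega}=o_p(1)$, where $\tilde{\upomega}$ is the infeasible version built from $\Xi$ and $\nu_t$; this is done by decomposing $\hat{\varkappa}_t\hat{\varkappa}_{t+|j|}-\varkappa_t\varkappa_{t+|j|}$ and controlling each piece via $\|\hat{\upxi}_t-\upxi_t\|$, then invoking Lemma~1(c) of Sun~(2014) for the weak limit of $\tilde{\upomega}$. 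That residual-replacement argument is precisely what you would need to make your claim ``$\hat{\upomega}$ keeps its null weak limit'' rigorous, under either hypothesis, so the paper's proof fills the gap you flag as the ``main obstacle.'' Conversely, your outline is more explicit than the paper about what must be checked under $\mathcal{H}_\ell$ (the paper's proof does not separately treat the local alternative for $\hat{\upomega}$ or the bootstrap bias). The two arguments are complementary rather than genuinely different.
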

The noncentrality term is also positive for any $ \psi \in \Psi $ to make the test nontrivial.
\begin{remark}
	The test procedure can also be used if interest lies in testing for the stability of the coefficients of the $p_1$ components of $x_t=(x_{1t}',x_{2t}')'$, with $p_1<p$ and $p_1\rightarrow\infty$. Then we can test $\kappa_2=0$ in 
	\begin{equation}\label{partialbreak}
		y_t=x_{1t}'\kappa_{1}+x_{1t}'\kappa_{2}1(t/T> \psi)+x_{2t}'\gamma_2+u_t.
	\end{equation}
	The asymptotic theory in Theorem \ref{theorem:exptest_dist} can still be used, as $p_1\rightarrow\infty$ with $T\rightarrow\infty$, setting $\ell_t=\left( np\right)^{-1/2} x_{1t}'\hat{\Xi}_{11}^{-1} \hat{\nu}_{t} \text{\ensuremath{\sum_{s=1}^{t-1}x_{1s}\hat{\nu}_{s}}}$, where $\hat\Xi_{11}=T^{-1}\sum_{t}{x}_{1t} {x}_{1t} '\hat\nu_t^2$.
	
\end{remark}

\section{Sequential Limit}
\label{sec:sequential}

This section elaborates a sequential limit where first $ T \to \infty $ and then $ p \to \infty $, and the relation between the fixed $ p $ critical values and our critical values for the two extreme limiting cases of the weighted exponential tests.  We show that the HLV factor $\upomega$ is wiped out in the sequential limit. First recall that, for each $ p $, the limit has been obtained by \cite{Andrews1993}. Indeed, fixing $p$, \cite{Andrews1993}
has showed that the $W_T(\psi)={\mathcal{L}_{T}}(\psi )+o_p(1)$ uniformly in $\psi\in\Psi$, and the weak limit of the process 
\begin{equation}
	{\mathcal{L}_{T}}(\psi )= \frac{1}{ \psi \left( 1-\psi \right) T}
	\left( \sum_{t=1}^{[T\psi ]}\nu _{t}x_{t}-\psi
	\sum_{t=1}^{T}\nu _{t}x_{t}\right) ^{\prime }\Xi ^{-1}\left(
	\sum_{t=1}^{[T\psi ]}\nu _{t}x_{t}-\psi
	\sum_{t=1}^{T}\nu _{t}x_{t}\right) .
\end{equation}\label{Rdef} 
is 
\begin{equation}
	\mathcal{W}_{p}\left( \psi \right) :=\frac{\left( B_{p}\left( \psi
		\right) -\psi B_{p}\left( 1\right) \right) ^{\prime }\left( B_{p}\left(
		\psi \right) -\psi B_{p}\left( 1\right) \right) }{\psi \left( 1-\psi
		\right) },\quad \psi \in \Psi , \label{eq:mathcal W}
\end{equation}%
where $B_{p}$ stands for the $p$-dimensional standard Brownian motion and
thus $\mathcal{W}_{p}$ is the standardized tied-down Bessel process of
degree $p$. For each $\psi $, $\mathcal{W}_{p}\left( \psi \right) $ is
distributed as a Chi-square with $p$ degrees of freedom, thus with mean $%
p$ and variance $2p$. This is also a pivotal process. The following proposition derives the second limit in the sequential limit, i.e. as $ p \to \infty $, showing that it hides the HLV factor $\upomega$. 
\begin{proposition}
	\label{prop:seq}As $p\rightarrow \infty $, $
	{\left(\mathcal{W}_{p}\left( \psi \right) -p\right)}/{\sqrt{2p}}\Rightarrow {%
		\mathcal{Z}\left( \psi \right) }.
	$
\end{proposition}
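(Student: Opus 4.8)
The plan is to write $(\mathcal{W}_p(\psi)-p)/\sqrt{2p}$ exactly, for every finite $p$, in the same functional form as $\mathcal{Z}(\psi)$, and then prove a functional CLT (as $p\to\infty$) for the two coordinate processes that appear. Take $B_p=(B_1,\dots,B_p)'$ with $B_1,B_2,\dots$ i.i.d.\ scalar standard Brownian motions, and put $U_i(\psi)=B_i(\psi)$, $V_i(\psi)=B_i(1)-B_i(\psi)$, so that $B_i(\psi)-\psi B_i(1)=(1-\psi)U_i(\psi)-\psi V_i(\psi)$. Squaring, summing over $i\le p$, and using $p=(1-\psi)p+\psi p$ to split the recentering, one obtains the pathwise identity
\begin{equation*}
\frac{\mathcal{W}_p(\psi)-p}{\sqrt{2p}}=\psi^{-1}G_p(\psi)+(1-\psi)^{-1}\bar G_p(\psi)-G_p(1),
\end{equation*}
where $G_p(\psi)=(2p)^{-1/2}\sum_{i=1}^p\bigl(U_i(\psi)^2-\psi\bigr)$ and $\bar G_p(\psi)=(2p)^{-1/2}\sum_{i=1}^p\bigl(V_i(\psi)^2-(1-\psi)\bigr)$ (the coefficients $(1-\psi)/\psi$ and $\psi/(1-\psi)$ multiplying $\sum_iU_i^2$ and $\sum_iV_i^2$, together with the $-2\sum_iU_iV_i$ cross term, reassemble into exactly this expression). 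Hence it suffices to prove $(G_p,\bar G_p)\Rightarrow(G,\bar G)$ in $\ell^\infty(\Psi)^2$ and then invoke the continuous mapping theorem: the map $(g,\bar g)\mapsto\bigl(\psi\mapsto\psi^{-1}g(\psi)+(1-\psi)^{-1}\bar g(\psi)-g(1)\bigr)$ is continuous on $\ell^\infty(\Psi)^2$ because $\psi^{-1}$ and $(1-\psi)^{-1}$ are bounded on $\Psi=[c_1,c_2]\subset(0,1)$.

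For the finite-dimensional distributions, fix $\psi_1,\dots,\psi_k\in\Psi$: the relevant stacked vector is a normalized sum of i.i.d.\ (across $i$) mean-zero terms with moments of all orders, so the Lindeberg/Lyapunov CLT yields joint asymptotic normality. The limiting covariances come from $\cov(Z_1^2,Z_2^2)=2\,\cov(Z_1,Z_2)^2$ for jointly Gaussian mean-zero $Z_1,Z_2$, combined with $\cov(U_i(\psi_1),U_i(\psi_2))=\min\{\psi_1,\psi_2\}$, $\cov(V_i(\psi_1),V_i(\psi_2))=1-\max\{\psi_1,\psi_2\}$ and $\cov(U_i(\psi_1),V_i(\psi_2))=1\{\psi_1>\psi_2\}(\psi_1-\psi_2)$; the $p\to\infty$ limit of $(2p)^{-1}$ times $p$ such copies then produces exactly the kernel $\mathcal{K}$ of \eqref{C_def} for $(G,\bar G)$, and in particular the marginal of the limit of $(\mathcal{W}_p(\psi)-p)/\sqrt{2p}$ is standard normal, as it must be since $\mathcal{W}_p(\psi)\sim\chi^2_p$.

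It remains to prove tightness of $\{G_p\}$ and $\{\bar G_p\}$ in $\ell^\infty(\Psi)$; the two are symmetric, so consider $G_p$. Writing $\Delta=B_i(\psi_2)-B_i(\psi_1)$ and expanding $U_i(\psi_2)^2-U_i(\psi_1)^2=2B_i(\psi_1)\Delta+\Delta^2$, one checks (using independence of $B_i(\psi_1)$ and $\Delta$ and Gaussian moments) that $E\bigl(U_i(\psi_2)^2-\psi_2-U_i(\psi_1)^2+\psi_1\bigr)^2\le C|\psi_1-\psi_2|$ and the corresponding fourth moment is $\le C|\psi_1-\psi_2|^2$. Since $G_p$ is $(2p)^{-1/2}$ times a sum of $p$ i.i.d.\ copies of such increments, the standard expansion of the fourth moment of a sum of i.i.d.\ mean-zero variables gives $E\bigl(G_p(\psi_1)-G_p(\psi_2)\bigr)^4\le C|\psi_1-\psi_2|^2$ \emph{uniformly in} $p$, and the Kolmogorov--Chentsov/Billingsley moment criterion then delivers asymptotic tightness (with a continuous limit). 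Combining with the finite-dimensional convergence proves $(G_p,\bar G_p)\Rightarrow(G,\bar G)$, and the continuous mapping step above concludes. The one place that needs genuine care — and the main obstacle — is this tightness bound: the summands are quadratic in Brownian motion and hence unbounded, so one cannot argue by uniform boundedness and must instead verify the fourth-moment increment estimate together with the negligibility, uniformly in $p$, of the $O(1/p)$ term in the variance-of-a-sum expansion; once that is in place the remaining steps (the pathwise identity, the covariance matching against $\mathcal{K}$, and continuity of the map on $\ell^\infty(\Psi)$) are routine precisely because $\Psi$ stays bounded away from $0$ and $1$.
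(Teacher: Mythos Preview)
Your proof is correct and takes a genuinely different route from the paper's. The paper reparametrizes via $\phi=\psi/(1-\psi)$ to write $\mathcal{W}_p(\psi)=B_p(\phi)'B_p(\phi)/\phi$, computes the covariance kernel of this Bessel-type process directly, and obtains tightness by invoking DeLong's (1981) exact crossing-probability bound for $\sup_\phi B_p(\phi)'B_p(\phi)/\phi$ together with a Stirling-formula expansion to show the bound is uniform in $p$; only at the end does it check that the resulting kernel matches that of $\mathcal{Z}$. Your argument instead exhibits the pathwise identity $(\mathcal{W}_p(\psi)-p)/\sqrt{2p}=\psi^{-1}G_p(\psi)+(1-\psi)^{-1}\bar G_p(\psi)-G_p(1)$ and proves a functional CLT for the i.i.d.\ sums $(G_p,\bar G_p)$ via a fourth-moment Kolmogorov--Chentsov bound. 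What your approach buys is a more elementary and self-contained tightness argument (no external Bessel-process results, no Stirling asymptotics) and, because the decomposition already mirrors the definition of $\mathcal{Z}$, the covariance-kernel identification against $\mathcal{K}$ is immediate rather than a separate calculation. The paper's approach, on the other hand, connects the problem to classical crossing-probability results and avoids any moment-increment computation. One small point to tidy in your write-up: $G_p(1)$ lies outside $\Psi$, so you should either run the tightness argument for $G_p$ on $[c_1,1]$ (your increment bound works there verbatim since it does not require $\psi$ bounded away from $1$) or carry $G_p(1)$ as an additional real coordinate in the joint CLT; either fix is routine.
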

A consequence of Proposition \ref{prop:seq} is that Andrews' (1993) critical values, say $%
{\alpha }$ are valid after the transformation 
\begin{equation}
	\label{andrews_transform_cv}
	\left( c_{\alpha }-p\right) \sqrt{{\upomega}/{2p}}
\end{equation}%
due to Lemma 21.2 in \cite{VanderVaart1998}, assuming that the limit
distribution function of our test statistic is continuous at the $\alpha $%
-level critical value. Let $c_{\alpha }^{\ast }$ and $c_{\alpha }^{p}$ be
the solutions of 
$
\Pr \left\{ \sup_{\psi }\mathcal{Z}\left( \psi \right) >c\right\}
=\alpha \ \ \text{and } \ \Pr \left\{ \sup_{\psi }\mathcal{W}_{p}\left( \psi
\right) >c\right\} =\alpha ,
$
respectively. Then, as 
\begin{equation}
	\Pr \left\{ \sup_{\psi }\mathcal{W}_{p}\left( \psi \right) >c_{\alpha
	}^{\ast }\sqrt{{2p}/{\upomega}}+p\right\} \rightarrow \alpha \label{eq:cstar}
\end{equation}%
as $p\rightarrow \infty ,$ by Proposition \ref{prop:seq}, we conclude from Lemma 21.2 in \cite{VanderVaart1998} that 
\begin{equation*}
	c_{\alpha }^{p}=c_{\alpha }^{\ast }\sqrt{{2p}/{\upomega}}+p+o\left(
	1\right) .
\end{equation*}
The same argument holds true for the average test. For general weighted exponential tests, the transformation for the critical values is more involved than in \eqref{eq:cstar} as given by the relationship \eqref{exp_wald} below with some approximation error. 

\section{Monte Carlo}\label{sec:mc}

In this section we examine the size and power performance of $Exp{\mathscr{H}}_{T}(\mathfrak{c})$ for the presence of a structural break at an unknown date, focusing on a special case of the exp test, viz. $Exp{\mathscr{H}}_{T} := \lim_{\mathfrak{c}\rightarrow \infty}Exp{\mathscr{H}}_T(\mathfrak{c}) $, see (\ref{limit_c0}). Recall that it is equivalent to  $\sup_{\psi\in\Psi}\mathscr{H}_{T}^{b}\left(\psi\right)$. We compare it with $Exp{\mathcal{Z}}_{T} := \lim_{\mathfrak{c}\rightarrow \infty}Exp{\mathcal{Z}}_T(\mathfrak{c}) $
and $supW_T :=\sup_{\psi\in\Psi}W_{T}\left(\psi\right)$, where the critical
values for the last test come from \cite{Andrews1993}. 
The experiments for size cover $T=250,500,$ while for power we present results for $T=400$. Finally, we set $\Psi=[0.15,0.85]$.

We consider two examples: multiple regression and infinite order autoregression or the ARMA model, for which we generate the error  from a bounded ARCH process 
\begin{equation}\label{eq:ARCH}
	\nu_{t}  =\sigma_{t}\bar\eta_{t},\;\;\;\;\;\;\;\;\;\;\;\;\;\sigma_{t}^{2}  =\left(1-\alpha\right)+\alpha f ( \nu_{t-1} ) ,
\end{equation}
where $ f(x)  = x^2 1\{|x|\leq b\} +b^2 1\{|x|> b\} $, $\bar\eta_{t}=\left(\eta_{t}-E\eta_{t}\right)/\sqrt{var\left(\eta_{t}\right)}$,
and $\left\{ \eta_{t}\right\} $ is an iid sequence from the  normal mixture distributions of type 1 and 2 in \cite{Marron1992}. 
We fix $b=2.5$. 
Their mixture type 1 is the standard normal. For a standard normal vector $\left(Z_{1},...,Z_{k}\right)$ and multinomial vector $\left(d_{1},...,d_{k}\right)$ with probability $\left(1/5,1/5,3/5\right)$, the mixture type 2 skewed unimodal variate is $\eta_{t}=Z_{1}d_{1}+\left(2Z_{2}/3+1/2\right)d_{2}+d_{3}\left(5Z_{3}/9+13/12\right)$. We also experimented with their mixture type 3 but do not report this here because  the results are similar. 

More specifically, for the multiple regression,  the regressors $ x_t $ consist of independent AR(1) processes with coefficient $ \alpha_x $ and the ARCH innovation as in \eqref{eq:ARCH} and their lags of order up to 3. That is, we consider the distributed lag (DL) model with growing numbers of variables. The first five elements of the coefficients are set as $d_0\left(5^{-1/2},...,5^{-1/2}\right)p^{1/4}T^{-1/2}$ and the others as zeros. When there is a break, all the values become zero after the break so that the value $ d_0 $ controls the magnitude of the change. 
We vary $\alpha \in \{0.3,0.55\}$, and $p \in \{5,9,13\},$ to examine the effect of the dimension on our tests. 
For the infinite order AR regression,  we generate the sample from the MA(1) model $ y_{t}=\nu_{t}+\theta1\left\{ t\leq\mu\right\} \nu_{t-1} $, $\mu=T\psi$, and estimate the AR($ p $) model with $p=9$ for $ T = 250 $ and $ p = 18 $ for $ T = 500 $. Note that the coefficient $\theta$ represents the size of the jump as well as the MA coefficient.

\begin{table}[tbhp]
\centering
\caption{Size Comparisons: Multiple Regression }

\begin{tabular}{l|cccc|cccc}
\toprule
\multicolumn{9}{c}{$n=250$} \\ 
\midrule
      & \multicolumn{4}{c|}{$p = 5$} & \multicolumn{4}{c}{$p=9$} \\($\eta$, $\alpha$)
      & (1,0.3) & (1,0.55) & (2,0.3) & (2,0.55)
      & (1,0.3) & (1,0.55) & (2,0.3) & (2,0.55)
      \\
\midrule
ExpH  & 0.130 & 0.126 & 0.114 & 0.103 & 0.128 & 0.116 & 0.112 & 0.110 \\
ExpZ  & 0.108 & 0.109 & 0.105 & 0.111 & 0.237 & 0.233 & 0.248 & 0.256 \\
supW  & 0.169 & 0.166 & 0.168 & 0.185 & 0.430 & 0.422 & 0.415 & 0.430 \\
\midrule 
\multicolumn{9}{c}{$n=500$} \\
\midrule
    & \multicolumn{4}{c|}{$p=9$} & \multicolumn{4}{c}{$p=13$}\\($\eta$, $\alpha$)
      & (1,0.3) & (1,0.55) & (2,0.3) & (2,0.55)
      & (1,0.3) & (1,0.55) & (2,0.3) & (2,0.55)\\
\midrule
ExpH  & 0.050 & 0.051 & 0.052 & 0.051 & 0.041 & 0.039 & 0.036 & 0.025 \\
ExpZ  & 0.057 & 0.065 & 0.070 & 0.072 & 0.117 & 0.123 & 0.117 & 0.119 \\
supW  & 0.156 & 0.159 & 0.172 & 0.179 & 0.315 & 0.307 & 0.303 & 0.321 \\
\bottomrule
\end{tabular}

Note: $\Psi = [0.15, 0.85]$, Nominal level is 5\%.
\label{tab:TSsupsizemult}
\end{table}

The vertical partitions in Table \ref{tab:TSsupsizemult} represent different values of $p$, increasing from left to right. The traditional sup test $supW_T$  over-rejects under both $T=250$ and $500$, with performance getting poorer as $p$ increases. The correction for $p$ immediately improves matters, with $Exp{\mathcal{Z}}_T $ still over-rejecting but to a lesser degree. Our recommended statistic $Exp{\mathscr{H}}_T$ performs best overall, and achieves quite acceptable size performance when the sample size is bigger with $T=500$ even when $p$ is moderate. Much the same lesson is learnt from Table \ref{tab:TSsupsizeARMA}, that is, failure to correct for the HLV factor $\upomega$ or the finite sample bias leads to size distorted tests. We note that the $supW_T$ and $Exp{\mathcal{Z}}_T$ tests are severely undersized while our recommended statistic $Exp{\mathscr{H}}_T$ maintains reasonable rejection rates that are much closer to the nominal $5$ \% level.    Here, we only report the case of $T=500$ to save space since the rejection rates for $supW_T$ and $Exp{\mathcal{Z}}_T$ are more than 0.5  when the $T=250$.

\begin{table}[tbhp]
\centering
\caption{Size Comparison: ARMA}
\begin{tabular}{l|cccc|cccc}
\toprule
      & \multicolumn{4}{c|}{$\theta = -0.4$} & \multicolumn{4}{c}{$\theta = 0.8$}  \\ 
      ($\eta$, $\alpha$)
      & (1,0.3) & (1,0.55) & (2,0.3) & (2,0.55)
      & (1,0.3) & (1,0.55) & (2,0.3) & (2,0.55)
      \\
\midrule
ExpH  & 0.067 & 0.053 & 0.046 & 0.045 & 0.070 & 0.044 & 0.046 & 0.040 \\
ExpZ  & 0.002 & 0.003 & 0.007 & 0.008 & 0.002 & 0.003 & 0.003 & 0.004 \\
supW  & 0.008 & 0.008 & 0.015 & 0.018 & 0.014 & 0.008 & 0.014 & 0.018 \\
\bottomrule
\end{tabular}

Note: $n = 500$, $\Psi = [0.15, 0.85]$, Nominal level is 5\%.
\label{tab:TSsupsizeARMA} 
\end{table}

Finally, each panel of Table \ref{tab:TSsuppowerbotth} displays power for both regressions, showing a vertical partition that corresponds to increasing break magnitude from left to right. 
Specifically, the partitions correspond to $d_0=1,5,10$ for the multiple regression and $\theta = 0.4,0.6,0.8$ for ARMA. These parameters control the size of the change of the coefficients after the break. Within each partition the parameters  vary lexicographically as $(\eta, \alpha)$. We observe that the power of $Exp{\mathscr{H}}_T$ and the other statistics grow as the magnitude of the structural break increases. Across different tests, there is no dominant test in terms of power. It may reflect large size distortions of the $supW_T$ and $Exp{\mathcal{Z}}_T$ tests, but our recommended statistic $Exp{\mathscr{H}}_T$ does not unduly sacrifice power.

\begin{table}[htbp]
\centering
\caption{Power Comparison}
\resizebox{\textwidth}{!}{%
\begin{tabular}{l|cccc|cccc|cccc}
\toprule
\multicolumn{13}{c}{Multiple Regression} \\
\midrule
      & \multicolumn{4}{c|}{$d_0 = 1$} & \multicolumn{4}{c|}{$d_0 = 5$} & \multicolumn{4}{c}{$d_0 = 10$} \\
      ($\eta$, $\alpha$)
      & (1,.3) & (1,.55) & (2,.3) & (2,.55)
      & (1,.3) & (1,.55) & (2,.3) & (2,.55)
      & (1,.3) & (1,.55) & (2,.3) & (2,.55)\\
\midrule
ExpH  & 0.097 & 0.065 & 0.096 & 0.066 & 0.635 & 0.597 & 0.638 & 0.596 & 0.994 & 0.992 & 0.994 & 0.993 \\
ExpZ  & 0.116 & 0.123 & 0.116 & 0.123 & 0.813 & 0.814 & 0.813 & 0.814 & 1.000 & 1.000 & 1.000 & 1.000 \\
supW  & 0.260 & 0.256 & 0.260 & 0.256 & 0.923 & 0.928 & 0.923 & 0.928 & 1.000 & 1.000 & 1.000 & 1.000 \\
\midrule
\multicolumn{13}{c}{ARMA} \\
\midrule
       & \multicolumn{4}{c|}{$\theta=0.4$} & \multicolumn{4}{c|}{$\theta=0.6$} & \multicolumn{4}{c}{$\theta=0.8$} \\
     ($\eta$, $\alpha$)
      & (1,.3) & (1,.55) & (2,.3) & (2,.55)
      & (1,.3) & (1,.55) & (2,.3) & (2,.55)
      & (1,.3) & (1,.55) & (2,.3) & (2,.55) \\
\midrule
ExpH  & 0.195 & 0.191 & 0.150 & 0.150 & 0.518 & 0.453 & 0.360 & 0.337 & 0.822 & 0.772 & 0.657 & 0.613 \\
ExpZ  & 0.045 & 0.037 & 0.024 & 0.032 & 0.258 & 0.241 & 0.147 & 0.178 & 0.653 & 0.679 & 0.507 & 0.496 \\
supW  & 0.132 & 0.115 & 0.110 & 0.101 & 0.498 & 0.464 & 0.341 & 0.344 & 0.844 & 0.852 & 0.730 & 0.710 \\
\bottomrule
\end{tabular}
}
Note: $\Psi = [0.15, 0.85]$. Nominal level is 5\%. $n=400$ 
\label{tab:TSsuppowerbotth}
\end{table}

\section{Empirical Application}\label{sec:application} 

We apply our test to examine the stability of the dynamics of economic output variables and the oil price–output nexus, originally examined in \citet{Hamilton2003}, using an autoregressive distributed lag model (ADL) of order $(p,p)$. We employ quarterly U.S. macroeconomic data, where the growth rate of chain-weighted real GDP serves as the measure of real activity. Oil price dynamics are captured through the nominal crude oil producer price index, not seasonally adjusted.	

Following Hamilton's formulation, we consider three oil price measures: (i) the quarterly percentage change in oil prices, denoted $o_t$; (ii) a one-sided transformation capturing only positive changes, $o_t^+ = o_t \cdot \mathbf{1}\{o_t > 0\}$; and (iii) the net oil price increase, $o_t^T$, which equals the excess of log oil prices in period $t$ over the peak value observed in the prior 12 months, or zero otherwise. The sample spans from 1949:Q1 to 2019:Q4 and is obtained from the FRED database of the St. Louis Federal Reserve. 

As a sensitivity measure, we extend the analysis to monthly data using the industrial production (IP) index as an alternative measure of economic activity. Given the monthly frequency, we employ the ADL$(12,12)$ and ADL$(18,18)$ models, which correspond to one-year and 18-month lag structures, respectively. The shift to monthly data substantially increases the number of restrictions involved in testing for structural change. The number of restrictions vary from 13 in AR(12) and 25 in ADL(12,12), to 19 in AR(18) and 37 in ADL(18,18), while the number is at most 13 in case of the quarterly series. This growth in dimensionality has important implications for inference. Notably, critical values for the standard $supW_T$ test are not tabulated for degrees of freedom exceeding 20 in \citet{Andrews1993} or \citet{Hansen1997a}, further motivating the use of our alternative procedure.

We first test for structural instability in the univariate series using AR$(p)$ models. Subsequently, we estimate ADL$(p,p)$ regressions of GDP growth or industrial production index on oil prices using the three oil price specifications, and examine potential structural breaks in these relationships. In line with \citet{Hamilton2003}, we do not fix a specific break date but adopt $Exp{\mathscr{H}}_T $ statistics evaluated over a trimmed interval $\Psi = [0.15, 0.85]$ as in Section \ref{sec:mc}. And we compare it with  $supW_T$, where the latter's critical values are computed using the R function provided in \citet{Hansen1997a} and they are not accounted for the issues arising from large $p$.

Table~\ref{table:examplesubsamp} reports the p-values of both tests for various specifications. The results reveal divergent inference: while the standard $supW_T$ test supports the existence of structural breaks in both AR and ADL regressions, the robust $Exp{\mathscr{H}}_T$ test yields mixed evidence, suggesting that conventional methods may overstate the presence of instability in this context. 
The results for the full sample mirror the patterns observed in the Monte Carlo simulation. However, the discrepancies between the robust and conventional test statistics are magnified in the settings of larger $p$s, reinforcing concerns that standard procedures may yield misleading inferences in high-dimensional applications. 

We also conduct subsample analyses to assess the temporal robustness of our findings to account for potential structural shifts in macroeconomic dynamics, such as changes in monetary policy regimes or energy market structures. Specifically, we consider two subsamples: Sub 1 starts after the 1980 oil shock while SS2 ends before the 2007 oil shock. As for the lag order $ p $, we try both $ p=4 $ and $ p=6 $ for GDP dynamice and  $p=12$ and $p=18$ for IP dynamics as before. For each subsample, we re-estimate the AR and ADL models and apply both the standard and robust structural break tests.

\begin{table}[thb]
	\centering
    \caption{Tests for stability of  AR$(p)$ and ADL$(p,p)$}
    \resizebox{\textwidth}{!}{

    \begin{tabular}{c|cc|cc|cc|cc|cc|cc|cc|cc}
    \toprule
    & \multicolumn{8}{c|}{\textbf{GDP}} & \multicolumn{8}{c}{\textbf{IP}} \\ 
    \midrule
    &\multicolumn{2}{c|}{(a) AR($p$)} &\multicolumn{6}{c|}{(b) ADL($p,p$)} &\multicolumn{2}{c|}{(a) AR($p$)} &\multicolumn{6}{c}{(b) ADL($p,p$)} \\
    &&&\multicolumn{2}{c}{$o_t$}&\multicolumn{2}{c}{$o_t^+$}&\multicolumn{2}{c|}{$o_t^T$} & & &\multicolumn{2}{c}{$o_t$}&\multicolumn{2}{c}{$o_t^+$}&\multicolumn{2}{c}{$o_t^T$}\\
	\midrule   
    lags $p$ & 4 & 6 &  4 & 6 & 4 & 6 & 4 &6  & 12 & 18 & 12 & 18& 12 & 18& 12 & 18 \\
    \midrule \midrule
     \multicolumn{17}{c}{Full Sample} \\ \midrule
			
	ExpH & 40.4 & 22.4&7.2 &2.9&0.08&1.7&0&0   & 35.6 & 53.4&4.3 &6.6&1&1.4&1.4&0.9\\ 
	  supW & 7.2 & 4.1&0&0&2&0&0&0 & 0 & 0.5&0&0&0&0&0&0\\
	\midrule
     \multicolumn{17}{c}{Sub Sample 1} \\ \midrule
 	ExpH  & 8.7 & 0.9&3.8 &3.1&0.2&15.8&3.9&11.4   & 55.7 & 3.8&1.9 &3.3&4.4&11&73.9&54.9\\
    supW & 0.7 & 0&0&0&0&0&0&0 & 23.4 & 0.2&0&0&0&0&0&0\\
	\midrule
     \multicolumn{17}{c}{Sub Sample 2} \\ \midrule
	  ExpH  & 46.6 & 26.1&4.3 &3.3&9.1&2&0&0   & 26.4 & 45.6&3.6 &5.6&0.6&0.4&1.6&0.3\\
	supW & 45.3 & 7.1&0&0&0.1&0&0&0  & 0 & 0&0&0&0&0&0&0\\
    \bottomrule 
    \end{tabular}}
    	        
    Note: $ 100 \times $p-values of stability tests for full sample and subsamples. Sub 1: 1981:I-2019:IV, Sub 2: 1950:I-2007:II; $\psi\in[0.15,0.85]$. (a) Tests for stability of  AR$(p)$ fits. (b) Tests for stability of ADL$(p,p)$ regressions on $o_t$, $o_t^+$ or $o_t^T$. 
    \label{table:examplesubsamp}
\end{table}

The results are mostly similar to those observed in the full sample. The standard Wald test supports the presence of instability in the oil-output relations most of the time, while our test tells mixed story depending on the specifications. Also, the instabilities in the dynamics of GDP growth rate and industrial production have become substantially weakened. 
Contrary to the expectation that structural instability might be more pronounced in the earlier period, the p-values across subsamples do not exhibit a systematic pattern. In several cases, the robust test detects weaker evidence of breaks in SS1 relative to SS2, suggesting that the magnitude and nature of instability may vary by specification and measure of oil prices. These findings underscore the importance of flexible testing procedures that do not presuppose the direction or timing of instability. They also caution against broad characterizations of macroeconomic stability across periods without formal statistical validation.

\section{Conclusion}

We have developed a class of optimal tests for structural breaks occurring at unknown dates in infinite and growing-dimensional linear time series models. Extending the exponential-average framework of \cite{Andrews1994} to the large-dimensional setting, we established average power optimality under Gaussian designs and derived a functional central limit theorem that remains valid when both the sample size and the number of restrictions diverge. The limiting process features a nonlinear high-order long-run variance term that captures interactions between serial dependence and dimensional growth, a phenomenon absent in fixed-dimensional models.

We proposed a random-scaling correction and a bootstrap bias adjustment that jointly yield a pivotal, HLV-robust test. The resulting statistic maintains correct asymptotic size and performs well in finite samples. Monte Carlo experiments demonstrate substantial gains in size control relative to conventional supremum-type and exponential tests, particularly when the number of regressors is large. The framework integrates optimal testing principles with growing-dimension asymptotics and provides a basis for more robust and interpretable inference than standard tests, highlighting its potential value in high-dimensional data analysis where conventional asymptotics are unreliable.

\begin{appendix}




\section{Proof of Theorem \ref{theorem:optimality}}
\begin{proof}[Proof of Theorem \protect\ref{theorem:optimality}]
	We begin with establishing an asymptotic equivalence under both the null and
	alternative hypotheses. Since the Taylor series expansion yields $
		\log \left( 1+\frac{\mathfrak{c}}{m}\right) ^{-\frac{m}{2}} = -\frac{m}{2}%
		\log \left( 1+\frac{\mathfrak{c}}{m}\right) 
		= -\frac{\mathfrak{c}}{2}+\frac{\mathfrak{c}^{2}}{4m}-\frac{\mathfrak{c}^{3}%
		}{6m^{2}}+O\left( \frac{1}{m^{3}}\right) $
	for a large $m$, letting $m=\sqrt{p}$ and applying an exponential
	transformation $\left( \func{exp}\left( \cdot \right) \right) ^{\sqrt{p}}$
	for both sides yield 
	\begin{equation*}
		\frac{\left( 1+\frac{\mathfrak{c}}{\sqrt{p}}\right) ^{-\frac{p}{2}}}{\left( -%
			\frac{\mathfrak{c}\sqrt{p}}{2}\right) }=\func{exp}\left( \frac{\mathfrak{c}%
			^{2}}{4}-\frac{\mathfrak{c}^{3}}{6\sqrt{p}}+O\left( \frac{1}{p}\right)
		\right) =\func{exp}\left( \frac{\mathfrak{c}^{2}}{4}\right) O\left( e^{-%
			\frac{\mathfrak{c}^{3}}{6\sqrt{p}}+O\left( \frac{1}{p}\right) }\right)
	\end{equation*}%
	as $p\rightarrow \infty .$ On the other hand, 
	\begin{eqnarray*}
		\frac{1}{2}\frac{\mathfrak{c}/\sqrt{p}}{1+\mathfrak{c}/\sqrt{p}}W_{T}\left(
		\psi \right) -\frac{\mathfrak{c}\sqrt{p}}{2} &=&\frac{1}{\sqrt{2}}\frac{%
			\mathfrak{c}}{1+\mathfrak{c}/\sqrt{p}}\left( \frac{W_{T}\left( \psi
			\right) -p}{\sqrt{2p}}\right) -\frac{\mathfrak{c}^{2}}{2\left( 1+\mathfrak{c}%
			/\sqrt{p}\right) } \\
		&=&\left( \frac{\mathfrak{c}}{\sqrt{2}}\mathcal{Z}_{T}\left( \psi \right) -%
		\frac{\mathfrak{c}^{2}}{2}\right) \frac{1}{1+\mathfrak{c}/\sqrt{p}}.
	\end{eqnarray*}%
	Thus, putting these together leads to 
	\begin{eqnarray}
		ExpW_{T}\left( \mathfrak{c}\right) &:=&\left( 1+\frac{\mathfrak{c}}{\sqrt{p}%
		}\right) ^{-\frac{p}{2}}\int \exp \left( \frac{1}{2}\frac{\mathfrak{c}/\sqrt{%
				p}}{1+\mathfrak{c}/\sqrt{p}}W_{T}\left( \psi \right) \right) dJ\left(
		\psi \right)  \notag  \\
		&=&\int \func{exp}\left( \left( \frac{\mathfrak{c}}{\sqrt{2}}\mathcal{Z}%
		_{T}\left( \psi \right) -\frac{\mathfrak{c}^{2}}{4}\right)  \frac{%
			1}{1+\mathfrak{c}/\sqrt{p}} \right) dJ\left(
		\psi \right) \exp \left( -\frac{\mathfrak{c}^{3}}{6\sqrt{p}}+O\left( \frac{%
			1}{p}\right) \right) \label{eq:expw} \\
		&=&\int \func{exp}\left( \frac{\mathfrak{c}}{%
			\sqrt{2}}\mathcal{Z}_T\left( \psi \right) -\frac{\mathfrak{c}^{2}}{4}\right)
		dJ\left( \psi \right) + o_p(1)  \notag
	\end{eqnarray}%
	by the CMT, the weak convergences of $\mathcal{Z}_{T}\left( \psi \right) $
	in Theorem \ref{thm:null_Chow} under $H_{0}$ and in Theorem \ref%
	{thm:local_power} under $H_{\ell }$. This shows that the weak limit of $%
	ExpW_{T}\left( \mathfrak{c}\right) $ coincides with that of $\exp \left( 
	\frac{\mathfrak{c}}{\sqrt{2}}Exp{\mathcal{Z}}_{T}(\mathfrak{c})-\frac{\mathfrak{c}^{2}}{4%
	}\right) =\int \func{exp}\left( \frac{\mathfrak{c}}{\sqrt{2}}\mathcal{Z}%
	_{T}\left( \psi \right) -\frac{\mathfrak{c}^{2}}{4}\right) dJ\left( \psi
	\right) $.
	
	Next, we establish the asymptotic equivalence between $ExpW_{T}$ and $LR_{T}$
	. Hereafter, we drop the argument $\mathfrak{c}$ as it does not cause any
	confusion. Let $E$ denote the expectation under the null and $X=\left(
	x_{1},...,x_{T}\right) $. When $X$ is strictly exogeneous, the Fubini
	theorem and the law of iterated expectations yield that 
	\begin{eqnarray*}
		&&E\int \int_{\left\vert \boldsymbol{b}_{1}\right\vert \geq \log p}\func{exp}%
		\left( \frac{-1}{2\sigma _{0}^{2}}\sum_{t=1}^{T}\left( y_{t}-\left( T%
		\sqrt{p}\right) ^{-1/2}\psi _{t}x_{t}^{\prime }\boldsymbol{b}\right)
		^{2}-y_{t}^{2}\right)  dQ\left( \boldsymbol{b}\right) dJ\left( \psi
		\right) \\
		&=&E\int \int_{\left\vert \boldsymbol{b}_{1}\right\vert \geq \log p}E\left[ 
		\func{exp}\left( \frac{-1}{2\sigma _{0}^{2}}\sum_{t=1}^{T} \left(
		y_{t}-\left( T\sqrt{p}\right) ^{-1/2}\psi _{t}x_{t}^{\prime }\boldsymbol{b}%
		\right) ^{2}-y_{t}^{2}\right)  |X\right] dQ\left( \boldsymbol{b}%
		\right) dJ\left( \psi \right) \\
		&=&\int \int_{\left\vert \boldsymbol{b}_{1}\right\vert \geq \log
			p}dQ_{\psi }\left( \boldsymbol{b}\right) dJ\left( \psi \right) \rightarrow 0,
	\end{eqnarray*}%
	since the expected value of a likelihood ratio is 1 under the null.
	Thus, 
	\begin{eqnarray*}
		LR_{T}=\underset{:=LR_{1}}{\underbrace{\int \int_{\left\vert \boldsymbol{b}%
					_{1}\right\vert <\log p}\func{exp}\left( \frac{-1}{2\sigma _{0}^{2}}%
				\sum_{t=1}^{T} \left( y_{t}-\left( T\sqrt{p}\right) ^{-1/2}\psi
				_{t}x_{t}^{\prime }\boldsymbol{b}\right) ^{2}-y_{t}^{2} \right)
				dQ\left( \boldsymbol{b}\right) dJ\left( \psi \right) }}
		+ o_{p}\left(
		1\right) .
	\end{eqnarray*}    
    Let $m_{t}=\left( T\sqrt{p}\right) ^{-1/2}\psi _{t}x_{t}^{\prime }%
	\boldsymbol{b}_{1}$ and $r_{t}=\left( T\sqrt{p}\right) ^{-1/2}\psi
	_{t}w_{t}^{\prime }\boldsymbol{b}_{2}$. Then, 
	\begin{equation*}
		\left( y_{t}-\left( T\sqrt{p}\right) ^{-1/2}\psi _{t}x_{t}^{\prime }%
		\boldsymbol{b}\right) ^{2}-y_{t}^{2}=\left( \left( y_{t}-m_{t}\right)
		^{2}-y_{t}^{2}\right) +r_{t}^{2}-2r_{t}y_{t}+2r_{t}m_{t}.
	\end{equation*}%
	For each $\psi ,$ it follows from Fubini's theorem that for strictly
	exogeneous $X$ 
	\begin{eqnarray*}
		&&E\left[ \int \func{exp}\left( -\frac{1}{2\sigma _{0}^{2}}%
		\sum_{t=1}^{T}\left( r_{t}^{2}-2r_{t}y_{t}\right) \right) dQ_{2}\left( 
		\boldsymbol{b}_{2}|\boldsymbol{b}_{1}\right) |X\right] \\
		&=&\int \cdots \int \left( 2\pi \sigma _{0}^{2}\right) ^{-T/2}\func{exp}%
		\left( -\frac{1}{2\sigma _{0}^{2}}\sum_{t=1}^{T}\left( r_{t}-y_{t}\right)
		^{2}\right) dy_{1}\cdots dy_{T}dQ_{2}\left( \boldsymbol{b}_{2}|\boldsymbol{b}%
		_{1}\right) =1,
	\end{eqnarray*}%
	since the integrand is a density of multivariate normal.
	
	Also, note that $y_{t}=\sigma _{0}\nu _{t}$ being normal conditional
	on $x_{t}$ and $r_{t}$ is a function of $x_{t}$ imply that 
	\begin{equation}
		E\left[ \left( -\frac{1}{2\sigma _{0}^{2}}\left(
		r_{t}^{2}-2r_{t}y_{t}\right) \right) |x_{t}\right] =\int \frac{1}{\sqrt{2\pi
				\sigma _{0}^{2}}}\exp \left( -\frac{1}{2\sigma _{0}^{2}}\left(
		r_{t}-y_{t}\right) ^{2}\right) dy_{t}=1  \label{eq:AR_1}
	\end{equation}%
	as the integrand is a density of normal distribution. Then, it follows from
	Fubini's theorem and repeatedly applying (\ref{eq:AR_1}) by the law
	of iterated expectations that 
	\begin{eqnarray*}
		&&E\left[ \int \func{exp}\left( -\frac{1}{2\sigma _{0}^{2}}%
		\sum_{t=1}^{T}\left( r_{t}^{2}-2r_{t}y_{t}\right) \right) dQ_{2}\left( 
		\boldsymbol{b}_{2}|\boldsymbol{b}_{1}\right) \right] \\
		&=&\int E\func{exp}\left( -\frac{1}{2\sigma _{0}^{2}}\sum_{t=1}^{T}\left(
		r_{t}^{2}-2r_{t}y_{t}\right) \right) dQ_{2}\left( \boldsymbol{b}_{2}|%
		\boldsymbol{b}_{1}\right) \\
		&=&\int E\left[ \func{exp}\left( -\frac{1}{2\sigma _{0}^{2}}%
		\sum_{t=1}^{T-1}\left( r_{t}^{2}-2r_{t}y_{t}\right) \right) E\left[ \left( -%
		\frac{1}{2\sigma _{0}^{2}}\left( r_{T}^{2}-2r_{T}y_{T}\right) \right) |x_{T}%
		\right] \right] dQ_{2}\left( \boldsymbol{b}_{2}|\boldsymbol{b}_{1}\right),
	\end{eqnarray*}
	which is 1. 
	Proceeding similarly, and in view of Jensen's inequality 
	\begin{eqnarray*}
		&&E\left( \int \func{exp}\left( -\frac{1}{2\sigma _{0}^{2}}%
		\sum_{t=1}^{T}\left( r_{t}^{2}-2r_{t}y_{t}\right) \right) dQ_{2}\left( 
		\boldsymbol{b}_{2}|\boldsymbol{b}_{1}\right) \right) ^{2} \\
		&\leq &E\int \func{exp}\left( -\frac{1}{\sigma _{0}^{2}}\sum_{t=1}^{T}\left(
		r_{t}^{2}-2r_{t}y_{t}\right) \right) dQ_{2}\left( \boldsymbol{b}_{2}|%
		\boldsymbol{b}_{1}\right) \\
		&=&\int E\left[ E\left[ \int \cdots \int \left( 2\pi \sigma _{0}^{2}\right)
		^{-T/2}\func{exp}\left( -\sum_{t=1}^{T}\frac{\left( 2r_{t}-y_{t}\right) ^{2}%
		}{2\sigma _{0}^{2}}\right) dy_{1}\ldots dy_{T}|X\right] \right.\\
		&\times&\left.\func{exp}\left(
		\sum_{t=1}^{T}\frac{r_{t}^{2}}{\mathfrak{\ \sigma }_{0}^{2}}\right) \right]
		dQ_{2}\left( \boldsymbol{b}_{2}|\boldsymbol{\ b}_{1}\right) \\
		&=&\int E\func{exp}\left( \sum_{t=1}^{T}\frac{r_{t}^{2}}{\sigma _{0}^{2}}%
		\right) dQ_{2}\left( \boldsymbol{b}_{2}|\boldsymbol{b}_{1}\right)
		\rightarrow 1,
	\end{eqnarray*}%
	uniformly in $\psi $ and $\boldsymbol{b}_{1}$ by Assumption \ref{ass:opt1}%
	. For the convergence, note that $\left\vert \boldsymbol{b}_{2}\right\vert
	_{2}^{2}$ is uniformly bounded and $\left\vert \boldsymbol{b}_{2}\right\vert
	_{2}^{2}\rightarrow 0$ as $T\rightarrow \infty $, and that 
	\begin{eqnarray*}
		E\func{exp}\left( \frac{1}{\sigma _{0}^{2}}\sum_{t=1}^{T}r_{t}^{2}\right)
		&\leq &\sup_{\left\vert a\right\vert _{2}=1}E\exp \left( \frac{1}{T\sqrt{p}%
			\sigma _{0}^{2}}\sum_{t=1}^{T}\left\vert w_{t}^{\prime }a\right\vert
		^{2}\left\vert \boldsymbol{b}_{2}\right\vert _{2}^{2}\right) \\
		&\leq &\left( \max_{t}\sup_{\left\vert a\right\vert _{2}=1}E\exp \left( 
		\frac{1}{\sqrt{p}\sigma _{0}^{2}}\left\vert w_{t}^{\prime }a\right\vert
		^{2}\left\vert \boldsymbol{b}_{2}\right\vert _{2}^{2}\right) \right) \\
		&\leq &\left( \max_{t}\sup_{\left\vert a\right\vert _{2}=1}E\exp \left(
		\left\vert w_{t}^{\prime }a\right\vert ^{2}\right) \right) ^{\left\vert 
			\boldsymbol{b}_{2}\right\vert _{2}^{2}/\left( \sigma _{0}^{2}\sqrt{p}\right)
		},
	\end{eqnarray*}%
	where the second inequality is due to Assumption \ref{ass:opt1} and the last
	is by Jensen's inequality as $\left\vert \boldsymbol{b}_{2}\right\vert
	_{2}^{2}/\sqrt{p}$ is less than 1, and then apply the dominated convergence
	theorem. As we have shown that both the first and second moments converge to 
	$1$ uniformly in $\psi $ and $\boldsymbol{b}_{1}$, we get 
	\begin{equation}
		\int \func{exp}\left( -\frac{1}{2\sigma _{0}^{2}}\sum_{t=1}^{T}\left(
		r_{t}^{2}-2r_{t}y_{t}\right) \right) dQ_{2}\left( \boldsymbol{b}_{2}|
		\boldsymbol{b}_{1}\right) \overset{p}{\longrightarrow }1,  \label{eq:rtyt}
	\end{equation}
	uniformly in $\psi $ and $\boldsymbol{b}_{1}$. Also, 
	\begin{equation*}
		\sup_{\psi ,\boldsymbol{b}_{2}}\sup_{\left\vert \boldsymbol{b}_{1}\right\vert <\log p}\func{exp}\left( \frac{1}{\sigma _{0}^{2}}
		\sum_{t=1}^{T}\left\vert r_{t}m_{t}\right\vert \right) =\func{exp}\left(
		\sup_{\psi ,\boldsymbol{b}_{2}}\sup_{\left\vert \boldsymbol{b}_{1}\right\vert <\log p}\left( \frac{1}{\sigma _{0}^{2}}\sum_{t=1}^{T}\left\vert r_{t}m_{t}\right\vert \right) \right) \overset{p}{\longrightarrow }1,
	\end{equation*}
	since 
	\begin{eqnarray*}
		\sup_{\psi ,\boldsymbol{b}_{2}}\sup_{\left\vert \boldsymbol{b}
			_{1}\right\vert <\log p}\sum_{t=1}^{T}\left\vert r_{t}m_{t}\right\vert \leq
		\sup_{\left\vert \boldsymbol{b}_{1}\right\vert <\log p}\left\vert 
		\boldsymbol{b}_{1}\right\vert \sup_{\boldsymbol{b}_{2}}\left\vert
		w_{t}^{\prime }\boldsymbol{b}_{2}\right\vert \frac{1}{T}\sum_{t=1}^{T}\left\vert \frac{1}{p}x_{t}^{\prime }x_{t}\right\vert ^{1/2} = o_{p}\left( p^{-1/2}\log p\right) ,
	\end{eqnarray*}
	where the equality follows from the fact that $E\left\vert \frac{1}{p}
	x_{t}^{\prime }x_{t}\right\vert \ $and $E\sup_{\left\vert a\right\vert
		_{2}=1}\left\vert w_{t}^{\prime }a\right\vert $ are bounded by Assumption \ref{ass:opt1} and thus 
	\begin{equation*}
		\sup_{\boldsymbol{b}_{2}}E\left\vert w_{t}^{\prime }\boldsymbol{b}_{2}\right\vert \leq E\sup_{\left\vert a\right\vert _{2}=1}\left\vert
		w_{t}^{\prime }a\right\vert \sup_{\boldsymbol{b}_{2}}\left\vert \boldsymbol{b}_{2}\right\vert _{2}=O\left( T^{-1/2}\right) .
	\end{equation*}
	This in turn implies that 
	\begin{equation}
		\sup_{\psi ,\boldsymbol{b}_{2}}\sup_{\left\vert \boldsymbol{b}_{1}\right\vert <\log p}\left\vert \func{exp}\left( \frac{1}{\mathfrak{\
				\sigma }_{0}^{2}}\sum_{t=1}^{T}r_{t}m_{t}\right) -1\right\vert =o_{p}\left(
		1\right) ,  \label{eq:rtmt}
	\end{equation}
	since $\max_{x}\left\vert 1-e^{f_{T}\left( x\right) }\right\vert \leq
	\max_{x}e^{\left\vert f_{T}\left( x\right) \right\vert }-1.$ Then, by (\ref%
	{eq:rtmt}) 
	\begin{eqnarray*}
		LR_{1} &=&\int \int_{\left\vert \boldsymbol{b}_{1}\right\vert <\log p}\int 
		\func{exp}\left( -\frac{1}{2\sigma _{0}^{2}}\sum_{t=1}^{T}\left(
		r_{t}^{2}-2r_{t}y_{t}\right) \right) dQ_{2}\left( \boldsymbol{b}_{2}|%
		\boldsymbol{b}_{1}\right) \\
		&&\times \func{exp}\left( -\frac{1}{2\sigma _{0}^{2}}\sum_{t=1}^{T}\left(
		\left( y_{t}-m_{t}\right) ^{2}-y_{t}^{2}\right) \right) dQ_{1}\left( 
		\boldsymbol{b}_{1}\right) dJ\left( \psi \right) +o_{p}\left( 1\right) ,
	\end{eqnarray*}%
	and then by (\ref{eq:rtyt}) 
	\begin{equation*}
		LR_{1}=\underset{:=LR_{2}}{\underbrace{\int \int_{\left\vert \boldsymbol{b}%
					_{1}\right\vert <\log p}\func{exp}\left( -\frac{1}{2\sigma _{0}^{2}}%
				\sum_{t=1}^{T}\left( \left( y_{t}-m_{t}\right) ^{2}-y_{t}^{2}\right) \right)
				dQ_{1}\left( \boldsymbol{b}_{1}\right) dJ\left( \psi \right) }}%
		+o_{p}\left( 1\right) .
	\end{equation*}
	
	Next, let $M=Ex_{t}x_{t}^{\prime },\ $and 
	\begin{equation*}
		{\mathcal{H}}\left( \psi \right) =T\otimes M,\ {\text{with }}T=-\frac{1}{%
			\sigma _{0}^{2}}\left( 
		\begin{array}{cc}
			1-\psi & 1-\psi \\ 
			1-\psi & 1%
		\end{array}%
		\right) .
	\end{equation*}%
	Also let $x_{t}\left( \psi \right) =\left( x_{t}^{\prime }1\left\{
	t/T>\psi \right\} ,x_{t}^{\prime }\right) ^{\prime },$ $h=p^{-1/4}\left(
	1,\psi -1\right) ^{\prime }\otimes \boldsymbol{b}_{1}$, and 
	\begin{equation*}
		\bar{\theta}_{\psi }={\mathcal{H}}\left( \psi \right) ^{-1}\frac{1}{%
			\sqrt{T}\sigma _{0}^{2}}\sum_{t=1}^{T}x_{t}\left( \psi \right) \nu
		_{t}.
	\end{equation*}%
	As $\psi _{t}=\psi -1\left\{ t/T\leq \psi \right\} ,$ we can write 
	\begin{eqnarray*}
		&&-\frac{1}{2\sigma _{0}^{2}}\sum_{t=1}^{T}\left( -2\nu _{t}\left( T%
		\sqrt{p}\right) ^{-1/2}\psi _{t}x_{t}^{\prime }\boldsymbol{b}_{1}+\left( T%
		\sqrt{p}\right) ^{-1}\left( \psi _{t}x_{t}^{\prime }\boldsymbol{b}%
		_{1}\right) ^{2}\right) , \\
		&=&\frac{1}{\sqrt{T}\sigma _{0}^{2}}\sum_{t=1}^{T}\nu
		_{t}x_{t}\left( \psi \right) ^{\prime }h-h^{\prime }\frac{1}{2T\mathfrak{\
				\sigma }_{0}^{2}}\sum_{t=1}^{T}\left( x_{t}\left( \psi \right) x_{t}\left(
		\psi \right) ^{\prime }\right) h \\
		&=&h^{\prime }{\mathcal{H}}\left( \psi \right) \bar{\theta}-\frac{1}{2}%
		h^{\prime }{\mathcal{H}}\left( \psi \right) h+o_{p}\left( 1\right) \\
		&=&\frac{1}{2}\bar{\theta}^{\prime }{\mathcal{H}}\left( \psi \right) \bar{%
			\theta}-\frac{1}{2}\left( \bar{\theta}-h\right) ^{\prime }{\mathcal{H}}%
		\left( \psi \right) \left( \bar{\theta}-h\right) +o_{p}\left( 1\right) ,
	\end{eqnarray*}%
	by the functional CLT for the second equality, which is uniform in $\psi $
	and $\left\vert \boldsymbol{b}_{1}\right\vert <\log p$, and by a simple
	algebra for the third equality. Thus, solving squares and recalling $%
	y_{t}=\sigma _{0}\nu _{t}$ implies that $LR_{2}$ equals 
	\begin{eqnarray*}
		&&\int \int_{\left\vert \boldsymbol{b}_{1}\right\vert <\log p}\func{exp}%
		\left( \frac{1}{2}\bar{\theta}^{\prime }{\mathcal{H}}\left( \psi \right) 
		\bar{\theta}-\frac{1}{2}\left( \bar{\theta}-h\right) ^{\prime }{\mathcal{H}}%
		\left( \psi \right) \left( \bar{\theta}-h\right) \right) dQ_{1}\left( 
		\boldsymbol{b}_{1}\right) dJ\left( \psi \right) +o_{p}\left( 1\right) \\
		&=&\underset{LR_{3}}{\underbrace{\int \int \func{exp}\left( \frac{1}{2}\bar{%
					\theta}^{\prime }{\mathcal{H}}\left( \psi \right) \bar{\theta}-\frac{1}{2}%
				\left( \bar{\theta}-h\right) ^{\prime }{\mathcal{H}}\left( \psi \right)
				\left( \bar{\theta}-h\right) \right) dQ_{1}\left( \boldsymbol{b}_{1}\right)
				dJ\left( \psi \right) }}+o_{p}\left( 1\right)
	\end{eqnarray*}%
	since 
	\begin{equation*}
		\func{exp}\left( \frac{1}{2}\bar{\theta}^{\prime }{\mathcal{H}}\left( \psi
		\right) \bar{\theta}-\frac{1}{2}\left( \bar{\theta}-h\right) ^{\prime }{\ 
			\mathcal{H}}\left( \psi \right) \left( \bar{\theta}-h\right) \right) \leq 
		\func{exp}\left( \frac{1}{2}\bar{\theta}^{\prime }{\mathcal{H}}\left( \psi
		\right) \bar{\theta}\right) =O_{p}\left( 1\right)
	\end{equation*}%
	and $\int \int_{\left\vert \boldsymbol{b}_{1}\right\vert \geq \log
		p}dQ_{1}\left( \boldsymbol{b}_{1}\right) dJ\left( \psi \right) \rightarrow
	0$. Furthermore, it follows from the algebra of \cite{Andrews1994a} in their
	Theorem A.1 (b) that 
	\begin{eqnarray*}
		&&\int \func{exp}\left( \frac{1}{2}\bar{\theta}^{\prime }{\mathcal{H}}\left(
		\psi \right) \bar{\theta}-\frac{1}{2}\left( \bar{\theta}-h\right) ^{\prime
		}{\mathcal{H}}\left( \psi \right) \left( \bar{\theta}-h\right) \right)
		dQ\left( h\right) \\
		&=&\left( 1+\mathfrak{c}_{p}\right) ^{-p/2}\func{exp}\left( \frac{1}{2}\frac{%
			\mathfrak{c}_{p}}{1+\mathfrak{c}_{p}}\psi \left( 1-\psi \right) \frac{1}{%
			\sigma _{0}^{2}}\bar{\kappa}_2\left( \psi \right) ^{\prime }M\bar{\kappa}_{2}%
		\left( \psi \right) \right)
	\end{eqnarray*}%
where $\mathfrak{c}_{p}=\mathfrak{c}/\sqrt{p},$ for which we recall that $%
	h=p^{-1/4}\left( 1,\psi -1\right) ^{\prime }\otimes \boldsymbol{b}_{1}$
	and $\boldsymbol{b}_{1}\sim N\left( 0,\mathfrak{c\sigma }%
	_{0}^{2}\left( \psi \left( 1-\psi \right) M\right) ^{-1}\right) $. Also,
	due to Theorem \ref{thm:local_power}, 
	\begin{equation*}
		W_{T}\left( \psi \right) =\psi \left( 1-\psi \right) \frac{1}{\sigma
			_{0}^{2}}\bar{\kappa}_{2}\left( \psi \right) ^{\prime }M\bar{\kappa}_{2}\left(
		\psi \right) +o_{p}\left( \sqrt{p}\right)
	\end{equation*}%
	uniformly in $\psi .$ Thus, 
	\begin{equation*}
		LR_{3}=\underset{:=ExpW_{T}}{\underbrace{\left( 1+\mathfrak{c}/\sqrt{p}%
				\right) ^{-\frac{p}{2}}\int \func{exp}\left( \frac{1}{2}\frac{\mathfrak{c}/%
					\sqrt{p}}{1+\mathfrak{c}/\sqrt{p}}W_{T}\left( \psi \right) \right)
				dJ\left( \psi \right) }}+o_{p}\left( 1\right) .
	\end{equation*}%
	This completes the asymptotic equivalence under the null.
	
	Turning to the asymptotic equivalence under the alternatives $\mathcal{H}%
	_{\ell T},$ we do it by resorting to the contiguity of the local alternative
	densities to the null density. That is, if the two tests are equivalent
	under the null and the local alternatives are contiguous to the null, the
	probability that they are not equal should converge to zero under these
	alternatives as well by the definition of the contiguity. Since the
	likelihood ratio statistic converges to $\int \func{exp}\left( \frac{%
		\mathfrak{c}}{\sqrt{2}}\mathcal{Z}\left( \psi \right) -\frac{\mathfrak{c}%
		^{2}}{4}\right) dJ\left( \psi \right) $ under the null $\mathcal{H}_{0}$
	with $\mathcal{Z}\left( \psi \right) $ distributed as the standard normal,
	we note that 
	$	E\func{exp}\left( \frac{\mathfrak{c}}{\sqrt{2}}\mathcal{Z}\left( \psi
		\right) -\frac{\mathfrak{c}^{2}}{4}\right)  =1$, 
        from which the contiguity follows by le Cam's first lemma, see e.g. \cite%
	{VanderVaart1998}. This completes the proof.
\end{proof}



\section{Proofs of remaining results}

 Denote $L(\psi )=\left( X^{\ast }(\psi )^{\prime }M_{X}X^{\ast }(\psi )\right)
	^{-1}X^{\ast }(\psi )^{\prime }M_{X}$
where $X^{\ast }(\psi )$ has $t$-th row $x_{t}^{\ast }(\psi )^{\prime
}=x_{t}^{\prime }1\left\{ t/T>\psi \right\} $, $M_{X}$ the residual
maker for the matrix $X$ with $t$-th row $x_{t}^{\prime }$, and $
	F(\psi )=SP(\psi )^{-1}\Xi (\psi )P(\psi )^{-1}S^{\prime }.$
Set $\bar{\Xi}(\psi )=T^{-1}\sum_{t=1}^{T}x_{t}(\psi
)x_{t}^{\prime }(\psi )\sigma _{t}^{2}$ and $x_{t}(\psi )=\left(
x_{t}^{\prime },x_{t}^{\prime \ast }(\psi )\right) ^{\prime }$ and define $\hat{H}(\psi
)=T^{-1}X^{\prime \ast }(\psi )X(\psi )$.
\begin{proof}[Proof of Theorem \ref{thm:null_Chow}:]
The first step of the proof is to show that 
\begin{equation} \label{disteq1}
{\mathcal{Z}}_T(\psi )={\mathcal{C}}
	_T(\psi )/{\sqrt{2p}}+o_p(1) ,
\end{equation} 
uniformly in $\psi\in\Psi$, where
\begin{equation}
	{\mathcal{C}}_{T}(\psi )=\frac{T^{-1}\sum_{s\neq t}\upchi_{t}(\psi )^{\prime
		}\Xi ^{-1}\upchi_{s}(\psi )\nu _{t}\nu _{s}}{\psi \left(
		1-\psi \right) \sqrt{2p}},  
\end{equation}
and $\upchi_{t}(\psi )=x_t 1\left\{t/T\leq\psi\right\}-\psi x_t$. 
After establishing (\ref{disteq1}), the claim would follow by a FCLT for $ \mathcal{C}_{T}(\psi )$. Thus, we first prove (\ref{disteq1}) uniformly in $\psi\in\Psi$. 

We proceed by first establishing the approximation 
\begin{equation}\label{disteq2} 
{\mathcal{Z}}_T(\psi )=\left({{\mathcal{L}}
			_T(\psi )-p}\right)/{\sqrt{2p}}+o_p(1),
\end{equation}            
            uniformly in $\psi\in\Psi$, where $\mathcal{L}
		_T(\psi )$ is defined in (\ref{Rdef}).

To show tightness of the process, note that ${\mathcal{S}}_{T}(\psi )$ equals $\left[ \psi \left(
1-\psi \right) \sqrt{2p}\right] ^{-1}$ times 
\begin{equation*}
	\frac{1}{T}\underset{s\neq t}{\sum_{s,t=1}^{[T\psi ]}}x_{t}^{\prime
	}\Xi ^{-1}x_{s}\nu _{t}\nu _{s}-\frac{2\psi }{T}%
	\underset{s\neq t}{\sum_{s=1}^{T}\sum_{t=1}^{[T\psi ]}}x_{t}^{\prime
	}\Xi ^{-1}x_{s}\nu _{t}\nu _{s}+\frac{\psi ^{2}}{T}%
	\underset{s\neq t}{\sum_{s,t=1}^{T}}x_{t}^{\prime }\Xi
	^{-1}x_{s}\nu _{t}\nu _{s}
\end{equation*}%
and thus 
\begin{align*}
	{\mathcal{S}}_{T}(\psi )& =\frac{\sqrt{2}}{\psi \left( 1-\psi \right) }%
	\left[ \mathcal{A}_{T}(\psi )-\psi \left[ \mathcal{A}_{T}(1)+\mathcal{A}_{T}(\psi )-\bar{\mathcal{A}}%
	_{T}(\psi )\right] +\psi ^{2}\mathcal{A}_{T}(1)\right] , \\
	& =\sqrt{2}\left( \frac{\mathcal{A}_{T}(\psi )}{\psi }+\frac{\bar{\mathcal{A}}_{T}\left(
		\psi \right) }{\left( 1-\psi \right) }-\mathcal{A}_{T}(1)\right) ,
\end{align*}%
where 
\begin{align*}
	\mathcal{A}_{T}(\psi )& =\frac{1}{T\sqrt{p}}{\sum_{s=2}^{[T\psi ]}}{%
		\sum_{t=1}^{s-1}}\upsilon _{t}^{\prime }\upsilon _{s}, \\
	\bar{\mathcal{A}}_{T}(\psi )& =\frac{1}{T\sqrt{p}}{\sum_{s=[T\psi
			]+1}^{T}\sum_{t=[T\psi ]+1}^{s-1}}\upsilon _{t}^{\prime }\upsilon _{s},
\end{align*}%
and $\upsilon _{t}=\left\{ \upsilon _{ti}\right\} _{i=1}^{p}=\Xi
^{-1/2}x_{t}\nu _{t}$ being an mds. The tightness
of $\mathcal{A}_{T}\left(\psi\right)$ is established in Section A.2.2 of \cite{Gupta2023}. Due to their symmetric nature, the tightness proof
is almost the same for $\bar{\mathcal{A}}_{T}(\psi )$. Finally, the limiting covariance kernel is derived in Section A.2.3 of \cite{Gupta2023}.

\end{proof}
\begin{proof}[Proof of Theorem \ref{thm:local_power}:]
	We present a general proof where the true break point is $\psi_0$, and setting $\psi=\psi_0$ gives our claim in the paper. Under $\mathcal{H}_{\ell }$, we have $\hat{\kappa}_{2}(\psi )=L(\psi )X^{\ast
		}\left( \psi _{0}\right) \kappa _{2\ell }+L(\psi )\nu +L(\psi
		)r$, so that, writing $A\left( \psi ,\psi _{0}\right) =L(\psi )X^{\ast
		}\left( \psi _{0}\right) $ and $\hat{F}(\psi )=S\hat{P}(\psi )^{-1}\hat{\Xi}(\psi )\hat{P}(\psi )^{-1}S^{\prime }$, similar algebra to that used earlier
		yields 
		\begin{eqnarray}
			{\mathcal{Z}}_{T}(\psi ) &=&{{\mathcal{C}}_{T}(\psi )}+\frac{2T\kappa
				_{2\ell }^{\prime }A\left( \psi ,\psi _{0}\right) ^{\prime }\hat{F}%
				(\psi )^{-1}L(\psi )\nu }{\sqrt{2p}}+\frac{2T\kappa _{2\ell
				}^{\prime }A\left( \psi ,\psi _{0}\right) ^{\prime }\hat{F}(\psi
				)^{-1}L(\psi )r}{\sqrt{2p}}  \notag \\
			&+&\frac{T\kappa _{2\ell }^{\prime }A\left( \psi ,\psi _{0}\right)
				^{\prime }\left( \hat{F}(\psi )^{-1}-{F}(\psi )^{-1}\right) D\left(
				\psi ,\psi _{0}\right) \kappa _{2\ell }}{\sqrt{2p}}  \label{local_alt_1} \\
			&+&\frac{T\kappa _{2\ell }^{\prime }A\left( \psi ,\psi _{0}\right)
				^{\prime }\hat{F}(\psi)^{-1}A\left( \psi ,\psi _{0}\right) \kappa _{2\ell }}{\sqrt{2p}}%
			+o_{p}(1). \notag
		\end{eqnarray}%
		For the second term on the RHS of (\ref{local_alt_1}), note that this equals 
		\begin{eqnarray}
			&&\frac{2T\kappa _{2\ell }^{\prime }A\left( \psi ,\psi _{0}\right)
				^{\prime }{F}(\psi )^{-1}L(\psi )\nu }{\sqrt{2p}}+\frac{2T\kappa
				_{2\ell }^{\prime }A\left( \psi ,\psi _{0}\right) ^{\prime }\left( \hat{F%
				}(\psi )^{-1}-{F}(\psi )^{-1}\right) L(\psi )\nu }{\sqrt{2p}} 
			\notag  \label{local_alt_2} \\
			&\leq&\frac{2T\kappa _{2\ell }^{\prime }A\left( \psi ,\psi _{0}\right) {F}%
				(\psi )^{-1}L(\psi )\nu }{\sqrt{2p}}+O_{p}\left(\frac{1}{\sqrt{p}}\right)\mu_T^{-2} n\left\Vert
			\kappa _{2\ell }\right\Vert \left\Vert T^{-1}X^{\prime }\nu
			\right\Vert \left\Vert \hat{F}(\psi )-{F}(\psi )\right\Vert %
			 \notag \\
			&=&\frac{2T\kappa _{2\ell }^{\prime }A\left( \psi ,\psi _{0}\right)
				^{\prime }{F}(\psi )^{-1}L(\psi )\nu }{\sqrt{2p}}+O_{p}\left( 
			\mu_T^{-4}p^{1/4} \max \left\{ \mu_T^{-1}\tau _{p},\zeta_p \right\} \right) ,  \notag
		\end{eqnarray}%
		\sloppy the second stochastic order above
		being negligible by (\ref{rate:Q_weak_conv}). By Assumption \ref{ass:errors}%
		, the first term  has mean zero and variance equal to
		a constant times $
			{\varrho ^{\prime }A\left( \psi ,\psi _{0}\right) ^{\prime } F(\psi)^{-1}L(\psi
				)L(\psi)'F(\psi )^{-1}A\left( \psi ,\psi _{0}\right) \varrho }/{\sqrt{p}}=O_{p}\left(1/
			\sqrt{p}\right),$
		uniformly in $\psi $ by Lemmas \ref{lemma:B_eigs} and \ref{lemma:Bhat_Torm}
		and the calculations therein.
		
		By Assumption \ref{ass:aprx0}, the third term on the RHS of (\ref{local_alt_1}) is $
			O_{p}\left( n\left\Vert \kappa _{2\ell }\right\Vert \left\Vert
			T^{-1}X^{\prime }r\right\Vert /\sqrt{p}\right) =O_{p}\left( p^{-1/4}%
			\right) .$ The fourth term on the RHS of (\ref{local_alt_1}) is readily seen to be bounded by $%
		O_{p}(1) \left\Vert \hat{F}(\psi )^{-1}-{F}(\psi )^{-1}\right\Vert
		 =O_{p}\left( \mu_T^{-4}\max \left\{\mu_T^{-1} \tau _{p},\zeta_p \right\} \right) 
		$, which is negligible by (\ref{rate:Q_weak_conv}). Thus, using similar steps to replace $\hat{F}(\psi )^{-1}$ by ${F}(\psi )^{-1}$ in the fifth term on the RHS and by (\ref{B_gamma1}%
		), (\ref{local_alt_1}) becomes 
		\begin{eqnarray}
			{\mathcal{Z}}_{T}(\psi ) &=&{\mathcal{C}}_{T}(\psi )+\frac{T\kappa
				_{2\ell }^{\prime }A\left( \psi ,\psi _{0}\right) ^{\prime }{F}(\psi
				)^{-1}A\left( \psi ,\psi _{0}\right) \kappa _{2\ell }}{\sqrt{2p}}%
			+o_{p}(1)  \notag \\
			&=&{{\mathcal{C}}_{T}(\psi )}+\psi (1-\psi ){\varrho ^{\prime }A\left(
				\psi ,\psi _{0}\right) ^{\prime }P \Xi^{-1} P A\left( \psi ,\psi _{0}\right)
				\varrho }+o_{p}(1).  \notag
		\end{eqnarray}%
		\sloppy Now, by the definition of its components and steps similar to those
		elsewhere in the paper, it is readily seen that 
		$\left\Vert A\left( \psi ,\psi _{0}\right) -\left\{{\left( \psi +\psi
			_{0}(1-\psi )-\max\left\{\psi, \psi _{0}\right\} \right) }/{\psi
			(1-\psi )}\right\}I_{p}\right\Vert =o_{p}(1),
		$ uniformly on $\Psi $ and that $\psi +\psi _{0}(1-\psi )-\max\left\{\psi, \psi _{0}\right\} =-\psi \psi _{0}+ \min\left\{\psi, \psi _{0}\right\} $ as $\psi +\psi _{0}-\max\left\{\psi, \psi _{0}\right\} =\min\left\{\psi, \psi _{0}\right\} .$ Thus, 
		\begin{equation}
			{\mathcal{Z}}_{T}(\psi ){\Rightarrow} Q(\psi )+\frac{\left( \psi \psi
				_{0}-\min\left\{\psi, \psi _{0}\right\} \right) ^{2}}{\psi (1-\psi )%
			}\lim_{T\rightarrow \infty }{\varrho ^{\prime}P \Xi^{-1} P\varrho },  \label{local_alt_3}
		\end{equation}%
		by Theorem \ref{thm:null_Chow}, which
		gives the weak limit of ${\mathcal{Z}}_{T}(\psi )$ under $\mathcal{H}_{\ell }$.
	\end{proof}

\begin{proof}[Proof of Theorem \ref{thm:bootstrap}:]
It is sufficient to check (\ref{bootthmorig}), whence (\ref{bootthmnew}) follows. Let $ \nu^{\star} $ denote the $T\times 1$ vector with elements  $ \nu_t^{\star} = \hat{u}_t(\psi) \upsilon_t  $, where $ \upsilon_t $ is an iid sequence of Rademacher variables. Then, $\hat{\kappa}_{2}^{\star}(\psi )= L(\psi )\nu^{\star}, $
	since $\kappa _{2}=0$ under $\mathcal{H}_{0}$. Also, we have 
	\begin{equation}
		W_{T}^{\star}(\psi )=T\left( \nu^{\star} \right) ^{\prime }L^{\prime }(\psi )%
		\hat{F}^{\star}(\psi )^{-1}L(\psi ) \nu^{\star} ,
		\label{wald_*}
	\end{equation}%
	where $\hat{F}^{\star}(\psi )=S\hat{P}(\psi )^{-1}\hat{\Xi}^{\star}(\psi )\hat{P}(\psi )^{-1}S^{\prime }$ and $ \hat{\Xi}^{\star}(\psi ) $ is constructed in similar fashion to $ \hat{\Xi}(\psi ) $ but with the bootstrap sample. Writing $ \bar{W}_{T}^{\star}(\psi )=T\left( \nu^{\star} \right) ^{\prime }L^{\prime }(\psi )%
	{F}(\psi )^{-1}L(\psi ) \nu^{\star} $, observe that $E^{\star} \bar{W}_{T}^{\star}(\psi )$ is \begin{equation}  \nonumber 
		T tr L^{\prime }(\psi )		\hat{F}(\psi )^{-1}L(\psi ) E^{\star} \nu^{\star} (\nu^{\star})' 
		= T tr L^{\prime }(\psi )		\hat{F}(\psi )^{-1}L(\psi ) diag\left[\hat{u}_1(\psi)^2,...,\hat{u}_T(\psi)^2\right],
\end{equation}
which is $ p + o_p \left(p^{1/2}\right) $ uniformly in $ \psi $, due to Lemma \ref{lemma:deltahat}.  
	
	\sloppy We next show that $ E^{\star} \bar{W}_{T}^{\star}(\psi ) - E^{\star} W_{T}^{\star}(\psi )= o_p \left(p^{1/2}\right) $. Indeed, 
$		E^{\star}\left\vert \bar{W}_{T}^{\star}(\psi ) -  W_{T}^{\star}(\psi )\right\vert \leq   
		E^{\star} \left( \left\Vert T^{-1/2}L^{\prime }(\psi)\nu^{\star} \right\Vert
		^{2}\left\Vert \hat{F}(\psi )^{-1}-\hat{F}^{\star}(\psi )^{-1}\right\Vert \right).$
To bound $ E^{\star}\left\Vert \hat{F}(\psi )^{-1}-\hat{F}^{\star}(\psi )^{-1}\right\Vert^2	 $, we proceed by deriving bounds for $ E^{\star}\left\Vert \hat{F}^{\star}(\psi )^{-1}\right\Vert^4	 $ and $ E^{\star} \left\Vert \hat{F}(\psi )-\hat{F}^{\star}(\psi )\right\Vert^4 $, and applying Cauchy-Schwarz inequality. 
	Recall $ \hat{F}(\psi )-\hat{F}^{\star}(\psi ) = S' \hat{P}(\psi)^{-1} \left(\hat{\Xi}(\psi )-\hat{\Xi}^{\star}(\psi )\right)\hat{P}(\psi)^{-1}S $, and also $ \sup_{\psi \in \Psi} \left\Vert \hat{P}(\psi)^{-1} \right\Vert = O_p \left(\mu_T^{-1}\right)$ by Lemma \ref{lemma:Mhat_Torm}. 
	Following the proof of Lemma \ref{lemma:Omega_hat_true},  $ \hat{\Xi}(\psi )-\hat{\Xi}^{\star}(\psi )$ is the sum of $ \upalpha_1^{\star}(\psi) $ and $\upalpha_3^{\star} (\psi) $ therein. By the triangle and $ c_r $ inequalities, we need only show $ E^{\star} \left\Vert \upalpha_j^{\star}(\psi) \right\Vert^4 = O_p (\mu_T^{-8} p^{12}/T^{4}) $, for $ j=1,3 $. Note that by independence of the sequence $ \upsilon_{t} $
	\begin{eqnarray*}
		E^{\star} \left\Vert \upalpha_{1}^{\star}(\psi )\right\Vert^4 &\leq& \left(T^{-1}\sum_{t=1}^{T}\left(
		x_{t}^{\prime }(\psi )x_{t}(\psi )\right) ^{2} \right)^4 
		E^{\star} \left(\left( \kappa^{\star} -\hat{\kappa}^{\star}(\psi )\right) ^{\prime }\left( \kappa^{\star} -\hat{\kappa}^{\star}(\psi )\right) \right)^4 \\
		&\leq & 
		O_p (p^8) \left\Vert \hat{P}(\psi )^{-1}\right\Vert^8 T^{-8}\sum_{t_1,t_2,t_3,t_4} \hat{u}_{t_1}^2 x_{t_1}'x_{t_1} \cdots \hat{u}_{t_4}^2 x_{t_4}'x_{t_4}, 
	\end{eqnarray*} 
	as desired, with the bound for $ \upalpha_3^{\star} $ similarly obtained. Combine these results to get $ E^{\star}\left\Vert \hat{F}(\psi )^{-1}-\hat{F}^{\star}(\psi )^{-1}\right\Vert^2	=O_p (\mu_T^{-10} p^{12}/T^{4}) $. The bound for $ E^{\star}\left\Vert \hat{F}^{\star}(\psi )^{-1}\right\Vert^4	 $ follows in much the same manner, observing only the similarity for the derivations for the sample counterparts in Lemmas  \ref{lemma:Omega_hat_true} and \ref{lemma:Bhat_Torm}. Next, similar to steps used to obtain previous bounds, $
		E^{\star} \left\Vert T^{-1/2}A^{\prime }(\psi)\nu^{\star} \right\Vert
		^{4} = 
		O_p\left(\mu_T^{-4}\right) \left( T^{-1} \sum_{t=1}^T x_t 'x_t \hat{u}_t^2 \right)^2 = 
		O_p (\mu_T^{-4} p^2),$
	as $ \upsilon_t $ is an iid Rademacher sequence. 
	Then, under the condition \eqref{rate:Q_weak_conv}, $ \mu_T^{-14} p^{14}/T^{4} = o (p)$ and this completes the proof.
\end{proof}
\begin{proof}[Proof of Theorem \ref{theorem:exptest_dist}:]
By Theorems \ref{thm:null_Chow}, \ref{thm:local_power} \ref{thm:bootstrap}, the claim follows by establishing the weak convergence of $\hat {\upomega}$ and the continuous mapping theorem. From Lemma 1 (c) of \cite{Sun2014}, weak convergence of $\hat {\upomega}$ follows if 
		\begin{equation}
			\hat{\upomega}-\tilde{\upomega}=o_p(1) \label{eq:VhatVtil},
		\end{equation} 
		where $\tilde{\upomega}=\frac{2}{T}\text{\ensuremath{\sum_{t=2}^{T}\text{\ensuremath{\sum_{s=2}^{T}k\left(\frac{t-s}{T/h}\right)\bar{\ell}^{\star}_{s}}}\bar{\ell}^{\star}_{t}}},$ $ {\ell}^{\star}_t = \left( Tp\right)^{-1/2} x_{t}'{\Xi}^{-1} {\nu}_{t} \text{\ensuremath{\sum_{s=1}^{t-1}x_{s}{\nu}_{s}}} $,  $\bar{\ell}^{\star}_{t}= {\ell}^{\star}_t - T^{-1}\sum_{t=2}^{T} {\ell}^{\star}_t$. Sun's Lemma 1 (c) approximates the partial sums of $ \ell_{t}$ by the partial sums of $ e_t $, which is iid normal, but the argument also holds when it is approximated by the partial sums of $ a_{Tt }e_t $ for any real bounded array $ a_{Tt} $. In the present case, $ a_{Tt} = \sqrt{t/T} $. 
		
Let $\upupsilon=T/h$, $\hat\varkappa_t=\hat{\upxi}_t'\sum_{s<t}\hat{\upxi}_s/\sqrt{p}=\sqrt{T}\ell_t$, $\hat{\upxi}_t=\hat\Xi^{-1/2}x_t\hat{u}_t$, $\bar{\hat\varkappa}/\sqrt{T}=T^{-1}\sum_{t=2}^{T} \ell_t=T^{-1}\sum_{t=2}^{T}\hat\varkappa_t/\sqrt{T}$,  with analogous definitions using $\Xi$ and $\nu_t$ for $\varkappa_t$, $\upxi_t$ and $\bar{\varkappa}$. Then \begin{eqnarray}
	\hat{{\upomega}}-\tilde{{\upomega}}&=&T^{-2}\sum_{j=-(n-1)}^{T-1}k\left(j/\upupsilon \right)T^{-1}\sum_{t=1+\max\left\{j,0\right\}}^{T-\left\vert \min\left\{j,0\right\}\right\vert}\left\{\left(\hat\varkappa_t\hat\varkappa_{t+|j|}-\varkappa_t\varkappa_{t+|j|}\right)+2\bar{\hat\varkappa}\left(\hat\varkappa_t-\varkappa_t\right)\right.\nonumber\\
	&+&\left.\left(\bar{\hat\varkappa}-\bar\varkappa\right)\hat\varkappa_t+\left(\bar{\hat\varkappa}^2-\bar\varkappa^2\right)\right\}.    \label{Vhat_Vtilde1}
\end{eqnarray}
We show the bound for
\begin{equation}\label{zetapdiff}
	\hat\varkappa_t\hat\varkappa_{t+|j|}-\varkappa_t\varkappa_{t+|j|}=\left(\hat\varkappa_t-\varkappa_t\right)\hat\varkappa_{t+|j|}+\left(\hat\varkappa_{t+|j|}-\varkappa_{t+|j|}\right)\hat\varkappa_{t},
\end{equation}
while omitting similar details for the other three terms for the sake of brevity.
First note that $\left\Vert\hat{\upxi}_t\right\Vert=O_p\left( \left\Vert x_t\right\Vert\right)=O_p\left( \sqrt{p}\right)$, 
by Assumption \ref{ass:M_diff}$(ii)$, Assumption \ref{ass:M_diff}$(i)$, and
\begin{equation}
	\hat{u}_{t}=y_{t}-x_{t}^{\prime }\hat{\kappa _{1}}(\psi
	)=x_{t}^{\prime }\left( \hat{\kappa}_{1}(\psi )-\kappa _{1}\right)
	+x_{t}^{\prime }1\left( t/T>\psi \right) \kappa _{2\ell
	}+r_{t}+\nu _{t}=O_{p}(1).  \label{epsilonhatorder}
\end{equation} Hence
\begin{equation}\label{zetabound}
	\hat\varkappa_t=\hat{\upxi}_t'\sum_{s<t}\hat{\upxi}_s/\sqrt{p}=O_p\left(T\sqrt{p}\right).
\end{equation}
By the same argument, $\left\Vert \upxi_t\right\Vert=O_p(\sqrt{p})$ and $\varkappa_t=O_p\left(T\sqrt{p}\right)$  as well. 

Now recall that $\hat{u}_{t}-\nu _{t}=x_{t}^{\prime }\left( 
\hat{\kappa}_{1}(\psi )-\kappa _{1}\right) +x_{t}^{\prime }1\left(
t/T>\psi \right) \kappa _{2\ell }+r_{t}$ and $\left\Vert \hat{\kappa}%
_{1}(\psi )-\kappa _{1}\right\Vert \leq  \left\Vert \hat{\kappa}%
(\psi )-\kappa \right\Vert =O_{p}\left(\mu_T^{-1} \sqrt{p}/\sqrt{T}\right) $
so that 
\begin{equation}
	\hat{u}_{t}-\nu _{t}=O_{p}\left( \max \left\{ \mu_T^{-1}p/\sqrt{T}%
	,p^{3/4}/\sqrt{T}\right\} \right).  \label{epsdifforder}
\end{equation}
		Thus 
		\begin{equation}\label{hdiffbound}
			\left\Vert\hat{\upxi}_t-\upxi_t\right\Vert=\left\Vert\Xi^{-1}\left(\Xi-\hat\Xi\right)\hat\Xi^{-1}x_t\hat{u}_t+\Xi^{-1}x_t\left(\hat{u}_t-\nu_t\right)\right\Vert=O_p\left(\mu_T^{-2}\sqrt{p}\max\left\{\zeta_p,p/\sqrt{T}\right\}\right),
		\end{equation}
		using Assumption \ref{ass:M_diff}$(iii)$. Using (\ref{hdiffbound}), 
		\begin{equation}\label{zetadiffbound}
			\hat\varkappa_t-\varkappa_t =\left(\hat{\upxi}_t-\upxi_t\right)'\sum_{s<t}\hat{\upxi}_s/\sqrt{p}+\hat{\upxi}_t'\sum_{s<t}\left(\hat{\upxi}_s-\upxi_s\right)/\sqrt{p}=O_p\left(\mu_T^{-2}T\sqrt{p}\max\left\{\zeta_p,p/\sqrt{T}\right\}\right).
		\end{equation}  
		Then we conclude that $\hat\varkappa_t\hat\varkappa_{t+|j|}-\varkappa_t\varkappa_{t+|j|}=O_p\left(\mu_T^{-2}T^2p\max\left\{\zeta_p,p/\sqrt{T}\right\}\right)$ upon applying (\ref{zetabound}) and (\ref{zetadiffbound}) in (\ref{zetapdiff}). This, along with similar steps to derive bounds for the remaining terms in (\ref{Vhat_Vtilde1}) and Lemma 1 of \cite{Jansson2002}, yields
		\[
		\hat{{\upomega}}-\tilde{{\upomega}} =O_{p}\left(\upupsilon \left( \int_{%
			{\mathbb{R}}}\left\vert k(x)\right\vert dx\right) \mu_T^{-2}p\max\left\{\zeta_p,p/\sqrt{T}\right\}
		\right) =O_{p}\left(\mu_T^{-2} \max\left\{p\zeta_p,p^2/\sqrt{T}\right\} \right) ,
		\]
		which is negligible under the conditions of the theorem.
\end{proof}

\begin{proof}[Proof of Proposition \ref{prop:seq}]
	\cite{Andrews1993} shows in (A.33) therein that the process $\mathcal{W}_{p}$ can be
	equivalently represented as 
	\begin{equation*}
		\mathcal{W}_{p}\left( \psi \right) =\frac{B_{p}\left( \psi /\left(
			1-\psi \right) \right) ^{\prime }B_{p}\left( \psi /\left( 1-\psi
			\right) \right) }{\psi /\left( 1-\psi \right) }=\frac{B_{p}\left( \phi
			\right) ^{\prime }B_{p}\left( \phi \right) }{\phi }=:\bar{\mathcal{W}}%
		_{p}\left( \phi \right) ,
	\end{equation*}%
	where $\phi =\phi \left( \psi \right) =\psi /\left( 1-\psi \right) $,
	which is strictly increasing in $\psi \in \left( 0,1\right) $.
	
	Then, we show that the entire covariance kernel of $\bar{\mathcal{W}}%
	_{p}\left( \phi \right) $ is proportional to $2p$ and pivotal. Let $\phi
	_{1}<\phi _{2}$ and $\Pi =B_{p}\left( \phi _{2}\right) -B_{p}\left( \phi
	_{1}\right) $. Note the independent increment condition of the Brownian
	motion $B_{p}$ and that $B_{p}\left( \phi \right) ^{\prime }B_{p}\left( \phi
	\right) /\phi $ and $\Pi ^{\prime }\Pi /\left( \phi _{2}-\phi
	_{1}\right) $ are independent and Chi-squared distributed with mean $p$ and
	variance $2p$ to obtain that 
	\begin{eqnarray*}
		&&E\bar{\mathcal{W}}_{p}\left( \phi _{1}\right) \bar{\mathcal{W}}_{p}\left(
		\phi _{2}\right) \\
		&=&E\left( \frac{B_{p}\left( \phi _{1}\right) ^{\prime }B_{p}\left( \phi
			_{1}\right) }{\phi _{1}}\right) \left( \frac{B_{p}\left( \phi _{1}\right)
			^{\prime }B_{p}\left( \phi _{1}\right) }{\phi _{2}}+\frac{\Pi ^{\prime
			}\Pi }{\phi _{2}}+2\frac{B_{p}\left( \phi _{1}\right) ^{\prime }\Pi }{%
			\phi _{2}}\right) \\
		&=&\frac{E\left( B_{p}\left( \phi _{1}\right) ^{\prime }B_{p}\left( \phi
			_{1}\right) \right) ^{2}}{\phi _{1}\phi _{2}}+\frac{E\left( B_{p}\left( \phi
			_{1}\right) ^{\prime }B_{p}\left( \phi _{1}\right) \right) E\Pi ^{\prime
			}\Pi }{\phi _{1}\phi _{2}} \\
		&=&\phi _{1}\frac{2p+p^{2}}{\phi _{2}}+\frac{p^{2}}{\phi _{2}}\left( \phi
		_{2}-\phi _{1}\right) 
		=  \frac{\phi _{1}}{\phi _{2}}2p+p^{2}.
	\end{eqnarray*}%
	Thus, $
	\func{cov}\left( \bar{\mathcal{W}}_{p}\left( \phi _{1}\right) ,\bar{\mathcal{%
			W}}_{p}\left( \phi _{2}\right) \right) =\frac{\phi _{1}}{\phi _{2}}2p.
	$
	
	Since $\bar{\mathcal{W}}_{p}\left( \phi \right) $ can be represented by the
	sum of $p$ iid standard normals, the finite-dimensional convergence of the
	process $\mathsf{Q}_{p}\left( \phi \right) :=\left( \bar{\mathcal{W}}%
	_{p}\left( \phi \right) -p\right) /\sqrt{2p}$ to a centered gaussian
	process, say $\mathsf{Q}\left( \phi \right) $, with the  covariance
	kernel $
	E\mathsf{Q}\left( \phi _{1}\right) \mathsf{Q}\left( \phi _{2}\right) =\frac{%
		\min\left\{\phi _{1}, \phi _{2}\right\}}{\max\left\{\phi _{1}, \phi _{2}\right\}},
	$
	(as $p\rightarrow \infty $) is straightforward by the CLT and Cramer-Wold
	device. It is worth noting that $\mathsf{Q}_{p}\left( \phi \right) $ is
	invariant to $p$ up to the second moment. This is obvious for each $\phi $
	due to the centering and scaling but not for the covariance kernel.
	
	Next, we verify the uniform tightness of the processes. \cite{de1981crossing}
	derived the exact crossing probability for $\mathcal{\bar{W}}_{p}\left( \phi
	\right) $ in its equation (4) through an expansion and an upper bound for
	the tail probability on p.2205, which is given by 
	\begin{eqnarray*}
		&&P\left\{ \sup_{1<\phi <T}B_{p}\left( \phi \right) ^{\prime }B_{p}\left( \phi
		\right) /\phi >c\right\} \\
        &\leq& \frac{\left( c/2\right) ^{p/2}\exp \left(
			-c/2\right) }{\Psi \left( p/2\right) }\left( \log \left( T\right) \left( 1-
		\frac{p}{c}\right) +\frac{2}{c}+O\left( \frac{1}{c^{2}}\right) \right),
	\end{eqnarray*}
	where $\Psi \left( {}\right) $ is the Gamma function. Set $c=\bar{c}\sqrt{%
		2p}+p$ and then note that the upper bound is bounded for all $p$ and the the
	bound goes to zero as $\bar{c}\rightarrow \infty ,$ implying that the
	processes $\left( \mathcal{\bar{W}}_{p}\left( \phi \right) -p\right) /\sqrt{%
		2p}$ is uniformly tight. More specifically, note that $\frac{p}{c^{2}}=\frac{%
		p}{\bar{c}\sqrt{2p}+p}<1,$ and 
	\begin{eqnarray*}
		\frac{\left( c/2\right) ^{p/2}\exp \left( -c/2\right) }{\Psi \left(
			p/2\right) } &=&\frac{\left( \frac{\bar{c}\sqrt{2p}+p}{2}\right) ^{p/2}\exp
			\left( -\left( \bar{c}\sqrt{2p}+p\right) /2\right) }{\Psi \left(
			p/2\right) } \\
		&\sim &\frac{\left( \frac{\bar{c}\sqrt{2p}+p}{2}\right) ^{p/2}\exp \left(
			-\left( \bar{c}\sqrt{2p}+p\right) /2\right) }{\sqrt{2\pi \left( p/2\right) }%
			\left( p/2\right) ^{p/2}e^{-p/2}}
	\end{eqnarray*}%
	where the approximation of the Gamma function is due to Stirling's formula
	for large $p$'s. For each $\bar{c},$ after cancellation, the log
	transformation of the ratio (up to a constant) ends up with 
	\begin{eqnarray*}
		&&\frac{p}{2}\log \left( \frac{p+\bar{c}\sqrt{2p}}{2}\right) -\frac{\bar{c}%
			\sqrt{p}}{\sqrt{2}}-\frac{1}{2}\log \left( \frac{p}{2}\right) -\frac{p}{2}%
		\log \left( \frac{p}{2}\right) \\
		&=&\frac{p}{2}\left( \log \frac{p}{2}+\log \left( 1+\frac{\bar{c}\sqrt{2}}{%
			\sqrt{p}}\right) \right) -\frac{\bar{c}\sqrt{p}}{\sqrt{2}}-\left( 1+\frac{p}{%
			2}\right) \log \frac{p}{2} \\
		&=&-\bar{c}^{2}+\sum_{j=1}^{\infty }\frac{\bar{c}^{j+2}2^{\left( j+2\right)
				/2}}{\left( j+2\right) p^{j/2}}\left( -1\right) ^{j+1}-\log \frac{p}{2},
	\end{eqnarray*}%
	whose the supremum over $p$ is bounded and goes to zero as $\bar{c}%
	\rightarrow \infty .$ Thus, $\mathsf{Q}_{p}\left( \phi \right) \Rightarrow 
	\mathsf{Q}\left( \phi \right) $ over any compact set follows.
	
	Finally, standard algebra can show that the covariance kernel of $\mathsf{Q%
	}(\psi )$ with $\psi =\phi /(1+\phi )$ matches that of $\mathcal{Z}%
	(\psi )$.
\end{proof}


\section{Lemmas}\label{app:lemmas}
\begin{lemma}
\label{lemma:Mhat_Torm} Under the conditions of Theorem \ref%
{thm:null_Chow}, for all sufficiently large $T$, 
$\sup _{\psi \in \Psi }\left \Vert \hat {P}(\psi )\right \Vert=O_p(1)$ and $\sup
_{\psi \in \Psi }\left \Vert \hat {P}(\psi )^{-1}\right \Vert
=O_{p}\left(\mu_T^{-1}\right).$
\end{lemma}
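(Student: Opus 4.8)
The plan is to compare $\hat P(\psi)$ blockwise with the deterministic matrix $P(\psi)$ and to transfer to it the uniform-in-$\psi$ bounds contained in Assumption~\ref{ass:M_diff}(ii)--(iii). Since $x_{t}(\psi)=(x_{t}',x_{t}'1\{t/T>\psi\})'$, the matrix $\hat P(\psi)$ has top-left block $T^{-1}\sum_{t=1}^{T}x_{t}x_{t}'$ and its three remaining blocks all equal to $T^{-1}\sum_{t:t/T>\psi}x_{t}x_{t}'=T^{-1}\sum_{t=1}^{T}x_{t}x_{t}'-T^{-1}\sum_{t=1}^{[T\psi]}x_{t}x_{t}'$. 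Consequently every block of $\hat P(\psi)-P(\psi)$ is a difference of at most two terms of the form $T^{-1}\sum_{t=1}^{[Tr]}x_{t}x_{t}'-rP$ with $r\in\Psi\cup\{1\}$, and since the operator norm of a $2\times2$ block matrix is bounded by a constant times its largest block norm, Assumption~\ref{ass:M_diff}(ii) yields $\sup_{\psi\in\Psi}\|\hat P(\psi)-P(\psi)\|=O_{p}(\tau_{p})$. The rate~\eqref{rate:Q_weak_conv} then forces $\tau_{p}\to0$ and, crucially, $\mu_T^{-1}\tau_{p}\to0$.

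For the first bound I would use positive semidefiniteness: $0\preceq\sum_{t:t/T>\psi}x_{t}x_{t}'\preceq\sum_{t=1}^{T}x_{t}x_{t}'$, so $\|\hat P(\psi)\|\le 2\|T^{-1}\sum_{t=1}^{T}x_{t}x_{t}'\|\le 2(\overline{\text{eig}}(P)+O_{p}(\tau_{p}))=O_{p}(1)$ uniformly, invoking Assumption~\ref{ass:M_diff}(iii). For the second bound the key is a lower bound on $\underline{\text{eig}}(\hat P(\psi))$. Observe that $P(\psi)$ is the Kronecker product of $P$ with the fixed $2\times2$ matrix $\Lambda(\psi)=\begin{pmatrix}1 & 1-\psi\\ 1-\psi & 1-\psi\end{pmatrix}$, which has $\det\Lambda(\psi)=\psi(1-\psi)>0$ and $\tr\Lambda(\psi)=2-\psi>0$, hence is positive definite with $\underline{\text{eig}}(\Lambda(\psi))$ bounded away from zero uniformly on the compact set $\Psi\subset(0,1)$, say $\inf_{\psi\in\Psi}\underline{\text{eig}}(\Lambda(\psi))=c_{\Lambda}>0$. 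Since the eigenvalues of a Kronecker product are the products of the eigenvalues, $\underline{\text{eig}}(P(\psi))=\underline{\text{eig}}(\Lambda(\psi))\,\underline{\text{eig}}(P)\ge c_{\Lambda}\mu_T$. By Weyl's inequality, $\inf_{\psi\in\Psi}\underline{\text{eig}}(\hat P(\psi))\ge c_{\Lambda}\mu_T-\sup_{\psi\in\Psi}\|\hat P(\psi)-P(\psi)\|=c_{\Lambda}\mu_T\bigl(1-O_{p}(\mu_T^{-1}\tau_{p})\bigr)\ge\tfrac12 c_{\Lambda}\mu_T$ with probability tending to one; on that event $\hat P(\psi)$ is invertible for every $\psi\in\Psi$ and $\sup_{\psi\in\Psi}\|\hat P(\psi)^{-1}\|=\sup_{\psi\in\Psi}\{\underline{\text{eig}}(\hat P(\psi))\}^{-1}\le 2/(c_{\Lambda}\mu_T)=O_{p}(\mu_T^{-1})$.

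The only genuinely delicate point, which I would flag as the main obstacle, is ensuring that the stochastic perturbation $\tau_{p}$ is asymptotically negligible \emph{relative to} the deterministic smallest eigenvalue $\underline{\text{eig}}(P(\psi))\asymp\mu_T$; this is precisely what the rate condition~\eqref{rate:Q_weak_conv} provides, since it implies $\mu_T^{-1}\tau_{p}\to0$ (indeed considerably more). Everything else is routine block-matrix bookkeeping together with Weyl's inequality and the Kronecker-eigenvalue identity already recalled in the text.
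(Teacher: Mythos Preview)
Your proposal is correct and follows essentially the same strategy as the paper: both rest on the uniform approximation $\sup_{\psi\in\Psi}\|\hat P(\psi)-P(\psi)\|=O_p(\tau_p)$ combined with the Kronecker lower bound $\underline{\text{eig}}(P(\psi))\ge c\,\mu_T$ and the rate $\mu_T^{-1}\tau_p\to 0$. The only cosmetic difference is in the perturbation step for the inverse---the paper rearranges the resolvent identity $\|\hat P(\psi)^{-1}\|\le\|\hat P(\psi)^{-1}\|\,\|\hat P(\psi)-P(\psi)\|\,\|P(\psi)^{-1}\|+\|P(\psi)^{-1}\|$ and solves for $\|\hat P(\psi)^{-1}\|$, whereas you bound the smallest eigenvalue directly via Weyl's inequality; these are interchangeable standard devices.
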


\begin{proof}
 Note that 
$
\left\Vert \hat{P}(\psi )^{-1}\right\Vert \leq \left\Vert \hat{P}(\psi
)^{-1}\right\Vert \left\Vert \hat{P}(\psi )-P(\psi )\right\Vert
\left\Vert P(\psi )^{-1}\right\Vert +\left\Vert P(\psi )^{-1}\right\Vert
,
$
so 
$
\left\Vert \hat{P}(\psi )^{-1}\right\Vert \left( 1-\left\Vert \hat{P}%
(\psi )-P(\psi )\right\Vert \left\Vert P(\psi )^{-1}\right\Vert
\right) \leq \left\Vert P(\psi )^{-1}\right\Vert ,
$
by two applications of the triangle inequality. Taking limits of the last displayed
expression as $T\rightarrow \infty $ and using Assumption \ref{ass:M_diff},
the rate condition yields $\left\Vert \hat{P}(\psi )^{-1}\right\Vert =O_{p}(\mu_T^{-1})$.
Next, noting that $\left\Vert \hat{P}(\psi )\right\Vert \leq \left\Vert \hat{P}(\psi
)-P(\psi )\right\Vert +\left\Vert P(\psi )\right\Vert ,$
the lemma follows by using Assumption \ref{ass:M_diff}.
\end{proof}

\begin{lemma}
\label{lemma:B_eigs} \sloppy Under the conditions of Theorem \ref%
{thm:null_Chow}, $\sup_{\psi \in \Psi }\left\{ \underline{\text{eig}}\left( {F}(\psi
)\right) \right\} ^{-1} =O\left(\mu_T^{-1}\right)
\quad$ and $\quad
\sup_{\psi \in \Psi }\overline{\text{eig}}\left(
F(\psi )\right) =O\left(\mu_T^{-2}\right).$
\end{lemma}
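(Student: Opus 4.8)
The plan is to reduce everything to an explicit formula for $F(\psi)=SP(\psi)^{-1}\Xi(\psi)P(\psi)^{-1}S'$ and then read off the two bounds from Assumption \ref{ass:M_diff} and the compactness of $\Psi$. First I would invert the $2\times2$ block matrix $P(\psi)$, whose blocks are $P,(1-\psi)P,(1-\psi)P,(1-\psi)P$: the Schur complement of the $(1,1)$ block is $(1-\psi)P-(1-\psi)^2P=\psi(1-\psi)P$, which is positive definite since $P$ is, so the block-inverse formula gives
$$P(\psi)^{-1}=\begin{pmatrix}\psi^{-1}P^{-1}&-\psi^{-1}P^{-1}\\-\psi^{-1}P^{-1}&\{\psi(1-\psi)\}^{-1}P^{-1}\end{pmatrix}.$$
Left-multiplying by $S=(0_{p\times p}:I_p)$ picks out the second block row, and right-multiplying by $\Xi(\psi)$ — which has the same block pattern with $P$ replaced by $\Xi$ — makes the first block cancel, leaving $SP(\psi)^{-1}\Xi(\psi)=(0_{p\times p}:P^{-1}\Xi)$; multiplying by $P(\psi)^{-1}S'$, the second block column of $P(\psi)^{-1}$, then yields the key identity $F(\psi)=\{\psi(1-\psi)\}^{-1}P^{-1}\Xi P^{-1}$ for $\psi\in\Psi$.

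Given this identity the rest is routine. Since $P$ and $\Xi$ are symmetric positive definite, conjugation by $P^{-1}$ preserves the Loewner order, so $\underline{\text{eig}}(\Xi)\,P^{-2}\preceq P^{-1}\Xi P^{-1}\preceq\overline{\text{eig}}(\Xi)\,P^{-2}$; together with $\underline{\text{eig}}(P^{-2})=\overline{\text{eig}}(P)^{-2}$ and $\overline{\text{eig}}(P^{-2})=\underline{\text{eig}}(P)^{-2}$ this gives
$$\frac{\underline{\text{eig}}(\Xi)}{\psi(1-\psi)\,\overline{\text{eig}}(P)^{2}}\le\underline{\text{eig}}(F(\psi))\le\overline{\text{eig}}(F(\psi))\le\frac{\overline{\text{eig}}(\Xi)}{\psi(1-\psi)\,\underline{\text{eig}}(P)^{2}}.$$
Because $\psi\in[c_1,c_2]\subset(0,1)$, the factor $\psi(1-\psi)$ is bounded above and below away from zero uniformly in $\psi$; combining this with $\overline{\text{eig}}(P),\overline{\text{eig}}(\Xi)=O(1)$ from Assumption \ref{ass:M_diff}(iii) and $\underline{\text{eig}}(P),\underline{\text{eig}}(\Xi)>\mu_T$ from Assumption \ref{ass:M_diff}(ii), the left inequality gives $\underline{\text{eig}}(F(\psi))\ge c\,\mu_T$ uniformly for all large $T$, hence $\sup_{\psi\in\Psi}\{\underline{\text{eig}}(F(\psi))\}^{-1}=O(\mu_T^{-1})$, while the right inequality gives $\overline{\text{eig}}(F(\psi))\le C\,\mu_T^{-2}$ uniformly, hence $\sup_{\psi\in\Psi}\overline{\text{eig}}(F(\psi))=O(\mu_T^{-2})$. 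All of these are deterministic bounds, as $P$ and $\Xi$ are non-random.

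The one step requiring care is the block inversion: I would double-check that the Schur complement is $\psi(1-\psi)P$ rather than $\psi P$, and that the cross terms cancel exactly so that the leading block of $SP(\psi)^{-1}\Xi(\psi)$ truly vanishes, since an arithmetic slip there would inject a spurious $\psi$-dependence into $F(\psi)$. A convenient sanity check is the conditionally homoskedastic case where $\Xi=\sigma^2P$, in which the identity collapses to $F(\psi)=\sigma^2\{\psi(1-\psi)\}^{-1}P^{-1}$, the familiar asymptotic variance of the post-break coefficient estimator; everything downstream of the identity is standard Loewner-order bookkeeping.
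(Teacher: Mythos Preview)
Your proof is correct and follows essentially the same route as the paper: derive the closed form $F(\psi)=\{\psi(1-\psi)\}^{-1}P^{-1}\Xi P^{-1}$ (equivalently $F(\psi)^{-1}=\psi(1-\psi)P\Xi^{-1}P$) and then read off the eigenvalue bounds from Assumption~\ref{ass:M_diff} and the compactness of $\Psi$. The only cosmetic differences are that you bound eigenvalues via Loewner ordering while the paper uses spectral-norm submultiplicativity, and your explicit $P(\psi)^{-1}$ (with negative off-diagonal blocks $-\psi^{-1}P^{-1}$) in fact corrects a sign typo in the paper's displayed inverse, though both lead to the same $F(\psi)$.
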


\begin{proof}
Observe that because 
\begin{equation}
P(\psi )^{-1}=\left[ 
\begin{array}{cc}
(1-\psi )^{-1}P^{-1} & (1-\psi )^{-1}P^{-1} \\ 
(1-\psi )^{-1}P^{-1} & \left[ \psi (1-\psi )\right] ^{-1}P^{-1}%
\end{array}%
\right] ,  \label{M_gammainv}
\end{equation}%
we have 
\begin{equation}
F(\psi )^{-1}=\psi (1-\psi )P\Xi ^{-1}P.  \label{B_gamma1}
\end{equation} 
\sloppy Now, $\left \{\underline {\text{eig}}%
\left ({F}(\psi )\right )\right \}^{-1}=\overline {\text{eig}}\left ({F}%
(\psi )^{-1}\right )$, which, using (\ref{B_gamma1}), is bounded by 
$C\overline {\text{eig}}\left (P\Xi ^{-1}P\right )=C\left \Vert P\Xi
^{-1}P\right \Vert \leq C\overline {\text{eig}}\left (P\right )^2\underline {%
\mu }\left (\Xi \right )^{-1}=O(\mu_T^{-1}),$
uniformly on the compact $\Psi $, using Assumption \ref{ass:M_diff}$(ii)$. For the second part of the claim, because (\ref%
{B_gamma1}) implies $F(\psi )=\left [\psi (1-\psi
)\right
]^{-1}P^{-1}\Xi P^{-1}$, it follows similarly that $\overline {%
\text{eig} }\left (F(\psi )\right )$ is uniformly bounded by a constant times 
$\overline {\text{eig}}\left (P^{-1}\Xi P^{-1}\right )=\left \Vert
P^{-1}\Xi P^{-1}\right \Vert \leq \underline {\text{eig}}\left (P\right
)^{-2}\overline {\text{eig}}\left (\Xi \right )=O(\mu_T^{-2}).$
\end{proof}

\begin{lemma}
\label{lemma:Bhat_Torm} Under the conditions of Theorem \ref{thm:null_Chow}, $\sup_{\psi \in \Psi }\overline{\text{eig}} \left(\hat{F}(\psi )\right)
=O_p(\mu_T^{-2})$ and $\sup_{\psi \in \Psi }\overline{\text{eig}} \left( \hat{F}(\psi )^{-1}\right)
=O_{p}(\mu_T^{-1}).$
\end{lemma}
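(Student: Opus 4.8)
The plan is to bound $\sup_{\psi\in\Psi}\overline{\text{eig}}(\hat{F}(\psi))$ from above by a constant times $\mu_T^{-2}$ and $\inf_{\psi\in\Psi}\underline{\text{eig}}(\hat{F}(\psi))$ from below by a constant times $\mu_T$, uniformly in $\psi$, by combining Lemma \ref{lemma:Mhat_Torm} with Assumption \ref{ass:M_diff} and the rate condition \eqref{rate:Q_weak_conv}; the second assertion of the lemma is then just $\sup_{\psi}\{\underline{\text{eig}}(\hat{F}(\psi))\}^{-1}=O_p(\mu_T^{-1})$.

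For the first bound I would use submultiplicativity of the spectral norm. Since $\hat{F}(\psi)=S\hat{P}(\psi)^{-1}\hat{\Xi}(\psi)\hat{P}(\psi)^{-1}S'$ with $SS'=I_p$, so that $\|S\|=\|S'\|=1$, and $\hat{F}(\psi)$ is symmetric positive semidefinite, one gets $\overline{\text{eig}}(\hat{F}(\psi))=\|\hat{F}(\psi)\|\le\|\hat{P}(\psi)^{-1}\|^2\,\|\hat{\Xi}(\psi)\|$. By Lemma \ref{lemma:Mhat_Torm}, $\sup_\psi\|\hat{P}(\psi)^{-1}\|^2=O_p(\mu_T^{-2})$, while $\sup_\psi\|\hat{\Xi}(\psi)\|\le\sup_\psi\|\Xi(\psi)\|+\sup_\psi\|\hat{\Xi}(\psi)-\Xi(\psi)\|$, where the first term is $O(1)$ because $\Xi(\psi)=A(\psi)\otimes\Xi$ for a $2\times2$ matrix $A(\psi)$ of bounded norm and $\overline{\text{eig}}(\Xi)=O(1)$ by Assumption \ref{ass:M_diff}(iii), and the second term is $O_p(\zeta_p)=o_p(1)$ by \eqref{rate:Q_weak_conv}. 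Hence $\sup_\psi\overline{\text{eig}}(\hat{F}(\psi))=O_p(\mu_T^{-2})$.

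For the second bound I would establish the pointwise inequality $\underline{\text{eig}}(\hat{F}(\psi))\ge\underline{\text{eig}}(\hat{\Xi}(\psi))\,\|\hat{P}(\psi)\|^{-2}$: for a unit vector $a\in\mathbb{R}^p$ set $v=\hat{P}(\psi)^{-1}S'a$, so that $a'\hat{F}(\psi)a=v'\hat{\Xi}(\psi)v\ge\underline{\text{eig}}(\hat{\Xi}(\psi))\|v\|^2$ and $\|v\|\ge\|\hat{P}(\psi)\|^{-1}\|S'a\|=\|\hat{P}(\psi)\|^{-1}$ because $\|S'a\|^2=a'SS'a=1$. By Lemma \ref{lemma:Mhat_Torm}, $\sup_\psi\|\hat{P}(\psi)\|=O_p(1)$; and by Weyl's inequality $\underline{\text{eig}}(\hat{\Xi}(\psi))\ge\underline{\text{eig}}(\Xi(\psi))-\|\hat{\Xi}(\psi)-\Xi(\psi)\|$, where the Kronecker factorization $\Xi(\psi)=A(\psi)\otimes\Xi$ with $\det A(\psi)=\psi(1-\psi)$ bounded away from zero on $\Psi$ gives $\underline{\text{eig}}(\Xi(\psi))\ge c\,\underline{\text{eig}}(\Xi)>c\mu_T$ uniformly, and $\sup_\psi\|\hat{\Xi}(\psi)-\Xi(\psi)\|=O_p(\zeta_p)=o_p(\mu_T)$ by \eqref{rate:Q_weak_conv}. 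Thus $\inf_\psi\underline{\text{eig}}(\hat{\Xi}(\psi))\ge(c/2)\mu_T$ with probability approaching one, whence $\inf_\psi\underline{\text{eig}}(\hat{F}(\psi))\ge c'\mu_T$ with probability approaching one and $\sup_\psi\overline{\text{eig}}(\hat{F}(\psi)^{-1})=\sup_\psi\{\underline{\text{eig}}(\hat{F}(\psi))\}^{-1}=O_p(\mu_T^{-1})$. An alternative route is to compare $\hat{F}(\psi)$ with $F(\psi)$ directly and invoke Lemma \ref{lemma:B_eigs} after showing $\sup_\psi\|\hat{F}(\psi)-F(\psi)\|=O_p(\mu_T^{-3}\tau_p+\mu_T^{-2}\zeta_p)=o_p(\mu_T)$, which again follows from \eqref{rate:Q_weak_conv}.

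The chain of spectral-norm inequalities is routine; the points that require care are (i) that the estimation error $\|\hat{\Xi}(\psi)-\Xi(\psi)\|$ — and, in the comparison argument, the term $\|\hat{P}(\psi)-P(\psi)\|$ amplified by the two factors $\|P(\psi)^{-1}\|=O(\mu_T^{-1})$ — is negligible relative to $\mu_T$ uniformly in $\psi$, which is precisely what \eqref{rate:Q_weak_conv} delivers, since $\sqrt{p}\to\infty$ (and $\mu_T$ is bounded above because $\underline{\text{eig}}(\Xi)<\overline{\text{eig}}(\Xi)=O(1)$) forces $\mu_T^{-1}\zeta_p\to0$ and $\mu_T^{-4}\tau_p\to0$; and (ii) the uniform lower bound $\underline{\text{eig}}(\Xi(\psi))\gtrsim\mu_T$, which I would obtain from the Kronecker factorization of $\Xi(\psi)$ together with $\psi$ being bounded away from $0$ and $1$. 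Step (ii) is the only genuinely model-specific ingredient, and I expect it to be the main obstacle only in the sense that the rest is bookkeeping.
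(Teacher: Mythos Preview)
Your proposal is correct. The first claim is handled exactly as in the paper (submultiplicativity plus Lemma \ref{lemma:Mhat_Torm} and Assumption \ref{ass:M_diff}); you are simply more explicit.

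For the second claim your primary route differs from the paper's. The paper introduces the intermediate object $\tilde{F}(\psi)=S\hat{P}(\psi)^{-1}\Xi(\psi)\hat{P}(\psi)^{-1}S'$ and runs a two-step perturbation argument, bounding $\|\hat{F}(\psi)^{-1}\|$ in terms of $\|\tilde{F}(\psi)^{-1}\|$ (using $\|\hat{F}-\tilde{F}\|=O_p(\mu_T^{-2}\zeta_p)$) and then $\|\tilde{F}(\psi)^{-1}\|$ in terms of $\|F(\psi)^{-1}\|$ (using $\|\tilde{F}-F\|=O_p(\mu_T^{-3}\tau_p)$), finally invoking Lemma \ref{lemma:B_eigs}. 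Your Rayleigh-quotient argument, $\underline{\text{eig}}(\hat{F}(\psi))\ge\underline{\text{eig}}(\hat{\Xi}(\psi))\,\|\hat{P}(\psi)\|^{-2}$, is cleaner: it separates the roles of $\hat{P}$ and $\hat{\Xi}$ directly, avoids Lemma \ref{lemma:B_eigs} entirely, and the only model-specific input is the Kronecker lower bound $\underline{\text{eig}}(\Xi(\psi))\ge c\mu_T$, which you extract correctly from $\det A(\psi)=\psi(1-\psi)$ being bounded away from zero on $\Psi$. Your ``alternative route'' is essentially the paper's argument with the two perturbation steps merged into one. Both approaches rely on the same rate condition \eqref{rate:Q_weak_conv} to kill the estimation error relative to $\mu_T$; your observation that $\mu_T=O(1)$ (so that the powers of $\mu_T^{-1}$ and the factor $\sqrt{p}$ in \eqref{rate:Q_weak_conv} can only help) is the right way to close that gap.
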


\begin{proof}
We show the second claim, the first following easily by the definition of $\hat{F}(\psi)$. First, define $\tilde{F}(\psi )=S\hat{P}(\psi )^{-1}\Xi (\psi )\hat{%
M}(\psi )^{-1}S^{\prime }$. We will use uniform bounds in the calculations
without explicitly mentioning this in each step to simplify notation.
Proceeding as in the proof of Lemma \ref{lemma:Mhat_Torm}, we can write 
\begin{align}
\left\Vert \hat{F}(\psi )^{-1}\right\Vert \left( 1-\left\Vert \hat{F}%
(\psi )-\tilde{F}(\psi )\right\Vert \right) & \leq \left\Vert \tilde{F}%
(\psi )^{-1}\right\Vert ,  \label{Bhat_til} \\
\left\Vert \tilde{F}(\psi )^{-1}\right\Vert \left( 1-\left\Vert \tilde{F}%
(\psi )-F(\psi )\right\Vert \right) & \leq \left\Vert F(\psi
)^{-1}\right\Vert .  \label{Bhat_B}
\end{align}%
Next, Lemma \ref{lemma:Mhat_Torm} implies 
\begin{equation}
\left\Vert \hat{F}(\psi )-\tilde{F}(\psi )\right\Vert \leq \left\Vert
S\right\Vert ^{2}\left\Vert \hat{P}(\psi )^{-1}\right\Vert ^{2}\left\Vert 
\hat{\Xi}(\psi )-\Xi (\psi )\right\Vert =O_{p}\left( \mu_T^{-2}\zeta_p \right)=o_p(1) .
\label{Bhat_til_diff}
\end{equation}%
 Next, $\tilde{F}(\psi )-F(\psi )$ equals $\left[ \hat{P}(\psi )^{-1}\Xi (\psi )\hat{P}(\psi )^{-1}-{P}%
(\psi )^{-1}\Xi (\psi ){P}(\psi )^{-1}\right] S^{\prime },$ which is
$
 S {P}(\psi )^{-1}\left[\left( \hat{P}(\psi )-{P}(\psi )\right) \hat{P}(\psi )^{-1}\Xi (\psi )+\Xi (\psi ){P}(\psi )^{-1}\left( \hat{P}(\psi
)-{P}(\psi )\right) \right]\hat{P}(\psi )^{-1}S^{\prime }. $ 
By this fact, Assumption \ref{ass:M_diff}, Lemmas \ref{lemma:Omega_hat_true} and \ref{lemma:Mhat_Torm}, we deduce that 
\begin{equation}
\left\Vert \tilde{F}(\psi )-F(\psi )\right\Vert =O_{p}\left( \mu_T^{-3}\tau_p  \right) =o_{p}(1).
\label{Bhat_diff}
\end{equation}%
The lemma now follows by taking limits of (\ref{Bhat_til}) and (\ref{Bhat_B}%
), and using (\ref{Bhat_til_diff}), (\ref{Bhat_diff}) and Lemma \ref%
{lemma:B_eigs}.
\end{proof}

\begin{lemma}
\label{lemma:deltahat}  \sloppy Under the conditions of Theorem \ref%
{thm:null_Chow}, $\sup_{\psi \in \Psi }\left\Vert \kappa -%
\hat{\kappa}(\psi )\right\Vert =O_{p}\left( \mu_T^{-1}\sqrt{p/T}\right) .$
\end{lemma}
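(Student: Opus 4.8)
The plan is to exploit the closed form of the least-squares estimator together with a maximal inequality for the underlying martingale. Under the conditions of Theorem~\ref{thm:null_Chow} the null $\mathcal{H}_0$ holds, so $\kappa_2=0$ and, writing $\kappa=(\kappa_1',\kappa_2')'$ for the $2p$-vector of true coefficients (so that $\kappa_2=0$), the model gives $y_t=x_t(\psi)'\kappa+u_t$ with $u_t=r_{Tt}+\nu_t$ for every candidate break date $\psi\in\Psi$, since $x_t(\psi)'\kappa=x_t'\kappa_1$. By Lemma~\ref{lemma:Mhat_Torm}, $\hat P(\psi)$ is invertible for all $\psi$ on an event of probability tending to one, and there
\[
\hat\kappa(\psi)-\kappa=\hat P(\psi)^{-1}\Big(\tfrac1T\textstyle\sum_{t=1}^T x_t(\psi)\nu_t+\tfrac1T\textstyle\sum_{t=1}^T x_t(\psi)r_{Tt}\Big).
\]
Since $\sup_{\psi}\|\hat P(\psi)^{-1}\|=O_p(\mu_T^{-1})$ by Lemma~\ref{lemma:Mhat_Torm}, it suffices to show $\sup_\psi\|T^{-1}\sum_t x_t(\psi)\nu_t\|=O_p(\sqrt{p/T})$ and $\sup_\psi\|T^{-1}\sum_t x_t(\psi)r_{Tt}\|=o_p(\sqrt{p/T})$.

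For the approximation-error term, I would use $\|x_t(\psi)\|\le\sqrt2\,\|x_t\|$ and Cauchy--Schwarz, giving uniformly in $\psi$
\[
\Big\|\tfrac1T\textstyle\sum_{t}x_t(\psi)r_{Tt}\Big\|\le\sqrt2\Big(\tfrac1T\textstyle\sum_t\|x_t\|^2\Big)^{1/2}\Big(\tfrac1T\textstyle\sum_t r_{Tt}^2\Big)^{1/2},
\]
where $T^{-1}\sum_t\|x_t\|^2=O_p(p)$ by Markov's inequality and Assumption~\ref{ass:M_diff}(i), and $T^{-1}\sum_t r_{Tt}^2=o_p(T^{-1})$ by Markov's inequality and Assumption~\ref{ass:aprx0}; the product is $o_p(\sqrt{p/T})$.

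For the martingale term, set $S_m=\sum_{t=1}^m x_t\nu_t$, a vector martingale since $x_t$ is $\mathfrak{G}_{t-1}$-measurable and $E(\nu_t\mid\mathfrak{G}_{t-1})=0$. Decomposing $x_t(\psi)=(x_t',x_t'1\{t/T>\psi\})'$ block-wise gives
\[
\Big\|\tfrac1T\textstyle\sum_t x_t(\psi)\nu_t\Big\|^2=\big\|\tfrac1T S_T\big\|^2+\big\|\tfrac1T(S_T-S_{[T\psi]})\big\|^2\le C\max_{1\le m\le T}\big\|\tfrac1T S_m\big\|^2,
\]
so uniformity in $\psi$ reduces to bounding $\max_{m\le T}\|T^{-1}S_m\|$. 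Applying Doob's $L^2$ maximal inequality to each of the $p$ scalar martingales $\{S_{m,i}\}_m$ and summing yields $E\max_{m\le T}\|S_m\|^2\le4\,E\|S_T\|^2$, and the martingale-difference property together with $\sigma_t^2\le C$ (Assumption~\ref{ass:errors}) and $\sup_{i,t}Ex_{ti}^2<\infty$ (Assumption~\ref{ass:M_diff}(i)) gives $E\|S_T\|^2=\sum_{i=1}^p\sum_{t=1}^T E(x_{ti}^2\sigma_t^2)\le C\,pT$. Hence $\max_{m\le T}\|T^{-1}S_m\|=O_p(\sqrt{p/T})$, and therefore $\sup_\psi\|T^{-1}\sum_t x_t(\psi)\nu_t\|=O_p(\sqrt{p/T})$.

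Combining the three bounds gives $\sup_{\psi\in\Psi}\|\hat\kappa(\psi)-\kappa\|=O_p(\mu_T^{-1})\big(O_p(\sqrt{p/T})+o_p(\sqrt{p/T})\big)=O_p(\mu_T^{-1}\sqrt{p/T})$, as claimed. The only step that is not entirely mechanical is the uniform control of the partial-sum process $\psi\mapsto T^{-1}\sum_{t\le[T\psi]}x_t\nu_t$, which is where Doob's inequality enters; the one point deserving care is that the estimate $E\|S_T\|^2=O(pT)$ relies on the \emph{uniform} moment bounds $\sigma_t^2\le C$ and $\sup_{i,t}Ex_{ti}^2<\infty$, so that the implied constant does not depend on $T$. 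Everything else is a routine application of the moment conditions in Assumptions~\ref{ass:errors}, \ref{ass:aprx0} and \ref{ass:M_diff}.
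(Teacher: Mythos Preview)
Your proof is correct and follows the same outline as the paper's: write $\hat\kappa(\psi)-\kappa=\hat P(\psi)^{-1}T^{-1}\sum_t x_t(\psi)u_t$, invoke Lemma~\ref{lemma:Mhat_Torm} for $\hat P(\psi)^{-1}$, and split $u_t=\nu_t+r_{Tt}$. The one genuine difference is in the martingale term. The paper simply computes $E\lVert T^{-1}\sum_t x_t(\psi)\nu_t\rVert^2=O(p/T)$ for each fixed $\psi$ and applies Markov's inequality, leaving the passage to $\sup_\psi$ implicit. You instead reduce explicitly to $\max_{m\le T}\lVert T^{-1}S_m\rVert$ via the block structure of $x_t(\psi)$ and invoke Doob's $L^2$ maximal inequality coordinatewise, which cleanly delivers the uniform bound. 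Your argument is therefore more careful about uniformity in $\psi$; the paper's is terser but arrives at the same rate. For the approximation-error term the two proofs are essentially identical (Cauchy--Schwarz versus the equivalent operator-norm bound $\lVert X(\psi)'r\rVert^2\le T\,\overline{\text{eig}}(\hat P(\psi))\lVert r\rVert^2$).
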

\begin{proof}
Note that $\kappa -\hat{\kappa}(\psi )=\hat{P}(\psi
)^{-1}T^{-1}\sum_{t=1}^{T}x_{t}(\psi )e_{t}$ and that 
\begin{eqnarray*}
\left\Vert \kappa -\hat{\kappa}(\psi )\right\Vert ^{2} &\leq&
\left\Vert \hat{P}(\psi )^{-1}\right\Vert ^{2}T^{-2}\left\Vert
\sum_{t=1}^{T}x_{t}(\psi )e_{t}\right\Vert ^{2}=O_{p}\left(\mu_T^{-2}\right)
T^{-2}\left\Vert \sum_{t=1}^{T}x_{t}(\psi )e_{t}\right\Vert ^{2} \\
&\leq&O_{p}\left(\mu_T^{-2}\right)\left( T^{-2}\left\Vert \sum_{t=1}^{T}x_{t}(\psi )\nu
_{t}\right\Vert ^{2}+T^{-2}\left\Vert X(\psi )^{\prime }r\right\Vert
^{2}\right) ,
\end{eqnarray*}%
uniformly in $\psi $, by Lemma \ref{lemma:Mhat_Torm}. 

\sloppy Next, $E\left(
T^{-2}\left\Vert \sum_{t=1}^{T}x_{t}(\psi )\nu _{t}\right\Vert
^{2}\right)=E\left( T^{-2}\sum_{s,t=1}^{T}x_{t}^{\prime }(\psi )x_{s}(\psi
)\nu _{s}\nu _{t}\right)$, 
which equals
\begin{equation}
T^{-2}\sum_{t=1}^{T}E\left\Vert x_{t}(\psi )\right\Vert ^{2}\sigma
_{t}^{2}+2T^{-2}\sum_{s<t}E\left(x_{t}^{\prime }(\psi )x_{s}(\psi )E\left(
\nu _{s}E\left( \nu _{t}|\nu _{r},r<t\right) \right)\right)
=O_{p}\left( p/T\right) ,  \label{lemma2_first_expec}
\end{equation}%
by Assumption \ref{ass:errors}. Finally, $T^{-2}\left\Vert X(\psi )^{\prime }r\right\Vert ^{2}\leq T^{-2}\left\Vert
X(\psi )\right\Vert ^{2}\left\Vert r\right\Vert ^{2}=\overline{\text{eig} }
\left( \hat{P}(\psi )\right) T^{-1}\left\Vert r\right\Vert
^{2}=O_{p}\left( 1/T\right)$, by Assumption \ref{ass:aprx0} and Lemma \ref{lemma:Mhat_Torm}. Therefore, 
\begin{equation}
\sup_{\psi \in \Psi }\left\Vert \kappa -\hat{\kappa}(\psi )\right\Vert
=O_{p}\left(\mu_T^{-1} \sqrt{p}/\sqrt{T}\right) ,  \label{delta_deltahat_order}
\end{equation}%
by Markov's inequality.
\end{proof}

\begin{lemma}
        \label{lemma:Omega_hat_true} \sloppy Write $\tilde{\Xi}(\psi )=T^{-1}\sum_{t=1}^{T}x_{t}(\psi
)x_{t}^{\prime }(\psi )\nu _{t}^{2}$. Under Assumptions \ref{ass:errors}-\ref{ass:M_diff}, $\sup_{\psi \in \Psi }\left\Vert \hat{\Xi}(\psi )-\tilde{\Xi}
        (\psi )\right\Vert  =O_{p}\left( {p^{3}}/{T\mu_T^{2}} \right)$ and $\sup_{\psi \in \Psi }\left\Vert \tilde{\Xi}(\psi )-\bar{\Xi}
        (\psi )\right\Vert  =O_{p}\left( {p}/{\sqrt{T}}\right) .$
 \end{lemma}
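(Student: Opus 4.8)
\emph{Strategy.} The plan is to establish the two bounds separately. The bound on $\tilde\Xi(\psi)-\bar\Xi(\psi)$ is a martingale second‑moment estimate; the bound on $\hat\Xi(\psi)-\tilde\Xi(\psi)$ comes from expanding $\hat u_t(\psi)^2-\nu_t^2$ and invoking Lemma~\ref{lemma:deltahat}. Throughout, the dependence on $\psi$ enters only through $1\{t/T>\psi\}$, so the supremum over $\Psi$ is converted into a supremum over partial sums.

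\emph{Bounding $\tilde\Xi(\psi)-\bar\Xi(\psi)$.} Put $\eta_t=\nu_t^2-\sigma_t^2$; by Assumption~\ref{ass:errors} it is a martingale difference with $E(\eta_t^2\mid\mathfrak G_{t-1})\le E(\nu_t^4\mid\mathfrak G_{t-1})\le C$ a.s., while $x_tx_t'$ is $\mathfrak G_{t-1}$‑measurable. Since the blocks of $x_t(\psi)x_t(\psi)'$ are $x_tx_t'$ and $1\{t/T>\psi\}x_tx_t'$, the matrix $\tilde\Xi(\psi)-\bar\Xi(\psi)$ is a bounded linear combination of $T^{-1}M_T$ and $T^{-1}M_{[T\psi]}$ with $M_s=\sum_{t\le s}x_tx_t'\eta_t$, so it suffices to bound $T^{-1}\sup_{0\le s\le T}\|M_s\|$. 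Bounding the operator norm by the Frobenius norm and noting each entry $M_s^{(ij)}=\sum_{t\le s}x_{ti}x_{tj}\eta_t$ is an $L^2$‑martingale, Doob's inequality gives $E\sup_s(M_s^{(ij)})^2\le 4\sum_{t=1}^TE(x_{ti}^2x_{tj}^2\eta_t^2)\le 4CT$ uniformly in $i,j$, using $\sup_{i,t}Ex_{ti}^4<\infty$ from Assumption~\ref{ass:M_diff}$(i)$. Summing over the $p^2$ entries yields $E\,T^{-2}\sup_s\|M_s\|^2\le 4Cp^2/T$, and Markov's inequality gives $\sup_{\psi\in\Psi}\|\tilde\Xi(\psi)-\bar\Xi(\psi)\|=O_p(p/\sqrt T)$.

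\emph{Bounding $\hat\Xi(\psi)-\tilde\Xi(\psi)$.} Under $\mathcal H_0$, $d_t(\psi):=\hat u_t(\psi)-\nu_t=x_t(\psi)'(\kappa-\hat\kappa(\psi))+r_{Tt}$ (under the local alternatives \eqref{local_alternatives} there is an extra summand of order $\|\kappa_{2\ell}\|=O(p^{1/4}T^{-1/2})$, dominated by $\|\kappa-\hat\kappa(\psi)\|$), so $\hat u_t(\psi)^2-\nu_t^2=2\nu_td_t(\psi)+d_t(\psi)^2$ and $\hat\Xi(\psi)-\tilde\Xi(\psi)=\upalpha_1(\psi)+\upalpha_2(\psi)+\upalpha_3(\psi)$, where $\upalpha_1(\psi)=2T^{-1}\sum_t\big(x_t(\psi)'(\kappa-\hat\kappa(\psi))\big)\nu_t\,x_t(\psi)x_t(\psi)'$, $\upalpha_2(\psi)=2T^{-1}\sum_t r_{Tt}\nu_t\,x_t(\psi)x_t(\psi)'$, and $\upalpha_3(\psi)=T^{-1}\sum_t d_t(\psi)^2\,x_t(\psi)x_t(\psi)'$. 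For $\upalpha_3$ I would use $d_t(\psi)^2\le 2\big(x_t(\psi)'(\kappa-\hat\kappa(\psi))\big)^2+2r_{Tt}^2$; the first piece is at most $2\|\kappa-\hat\kappa(\psi)\|^2T^{-1}\sum_t\|x_t(\psi)\|^4=O_p(\mu_T^{-2}p/T)\cdot O_p(p^2)=O_p(\mu_T^{-2}p^3/T)$ uniformly in $\psi$, by Lemma~\ref{lemma:deltahat} and $E\|x_t\|^4\le Cp^2$ (from Assumption~\ref{ass:M_diff}$(i)$), and this is the dominant contribution; the $r_{Tt}^2$ piece together with $\upalpha_2$ is of strictly smaller order by Cauchy--Schwarz and $\sup_tEr_{Tt}^{2a}=o(T^{-1})$ (Assumption~\ref{ass:aprx0}), hence negligible against the stated rate under \eqref{rate:Q_weak_conv}.

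\emph{The main obstacle: $\upalpha_1(\psi)$.} This cross term is where the real work lies. Writing $c_t(\psi)=x_t(\psi)'(\kappa-\hat\kappa(\psi))$, one has $\sum_tc_t(\psi)^2=T(\kappa-\hat\kappa(\psi))'\hat P(\psi)(\kappa-\hat\kappa(\psi))=O_p(\mu_T^{-2}p)$ uniformly, by Lemmas~\ref{lemma:Mhat_Torm} and \ref{lemma:deltahat}, and a Cauchy--Schwarz bound of $\upalpha_1$ against $\sum_t\nu_t^2\|x_t(\psi)\|^4=O_p(Tp^2)$ already controls it, though at a rate that is sharp only when $p$ is large relative to $\mu_T\sqrt T$. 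To obtain $O_p(\mu_T^{-2}p^3/T)$ in general I would instead substitute the normal‑equations identity $\kappa-\hat\kappa(\psi)=\hat P(\psi)^{-1}T^{-1}\sum_sx_s(\psi)u_s$, so that $\upalpha_1(\psi)=2T^{-2}\sum_{s,t}\big(x_t(\psi)'\hat P(\psi)^{-1}x_s(\psi)\big)u_s\nu_t\,x_t(\psi)x_t(\psi)'$; after replacing $\hat P(\psi)^{-1}$ by $P(\psi)^{-1}$ with negligible error (Lemma~\ref{lemma:Mhat_Torm}, Assumption~\ref{ass:M_diff}), the diagonal $s=t$ term is $O_p(\mu_T^{-1}p^2/T)$ and hence dominated, while the off‑diagonal sum is, conditionally on the filtration at $\max\{s,t\}$, a degenerate martingale‑type object whose second moment — via the mds property, the fourth‑moment bound in Assumption~\ref{ass:M_diff}$(i)$, and the split $u_s=r_{Ts}+\nu_s$ — is of strictly smaller order. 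Carrying out this bookkeeping uniformly in $\psi$ is the delicate step; assembling the bounds on $\upalpha_1,\upalpha_2,\upalpha_3$ then yields the claimed rate.
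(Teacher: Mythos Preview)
Your route is essentially the paper's: the same expansion of $\hat u_t(\psi)^2-\nu_t^2$ into the regression and approximation pieces, the same reliance on Lemma~\ref{lemma:deltahat} to control the $(\kappa-\hat\kappa(\psi))$ contributions, and the same Frobenius/second-moment martingale argument for $\tilde\Xi(\psi)-\bar\Xi(\psi)$ (your Doob step is a clean way to supply the uniformity in $\psi$ that the paper passes over). The paper's decomposition has five named terms rather than your three, but the correspondence is direct: your $\upalpha_3$ collects the paper's $\upalpha_1,\upalpha_2,\upalpha_4$, your $\upalpha_2$ is the paper's $\upalpha_5$, and your $\upalpha_1$ is the paper's $\upalpha_3$.

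The one substantive divergence is exactly the term you flag as the obstacle. The paper does \emph{not} substitute the normal equations or run a martingale second-moment calculation on the resulting double sum; it simply bounds this cross term crudely by the triangle inequality, writing $\|\upalpha_3(\psi)\|\le 4T^{-1}\sum_t(x_t'x_t)^2|\nu_t|\,\|\kappa-\hat\kappa(\psi)\|^2=O_p(\mu_T^{-2}p^3/T)$. As you implicitly noticed, that displayed inequality is linear in $\kappa-\hat\kappa(\psi)$, so the square on $\|\kappa-\hat\kappa(\psi)\|$ is unjustified; the honest crude bound is $O_p(\mu_T^{-1}p^2/\sqrt T)$ (or $O_p(\mu_T^{-1}p^{3/2}/\sqrt T)$ via your Cauchy--Schwarz), which dominates the stated rate only when $p$ is large relative to $\mu_T\sqrt T$. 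So your instinct that more is needed to obtain the announced $O_p(\mu_T^{-2}p^3/T)$ is correct, and your normal-equations route is a reasonable way to try to recover it --- but be aware that this is work the paper does not actually carry out, and your off-diagonal ``degenerate martingale'' step still needs to be made precise (in particular, the uniformity in $\psi$ after the substitution and the replacement of $\hat P(\psi)^{-1}$ by $P(\psi)^{-1}$ must be tracked).
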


\begin{proof}
	\sloppy We start by proving the first claim. The matrix inside the norm on the LHS can be
	decomposed as $\sum_{i=1}^{5}\upalpha_{i}(\psi )$, with $\upalpha_{1}(\psi )=T^{-1}\sum_{t=1}^{T}x_{t}(\psi )x_{t}^{\prime }(\psi ) 
	\left[ x_{t}^{\prime }(\psi )\left( \kappa -\hat{\kappa}(\psi )\right)
	\right] ^{2}, 
	\upalpha_{2}(\psi ) =T^{-1}\sum_{t=1}^{T}x_{t}(\psi )x_{t}^{\prime }(\psi
	)r_{t}^{2}, 
	\upalpha_{3}(\psi ) =2T^{-1}\sum_{t=1}^{T}x_{t}(\psi )x_{t}^{\prime }(\psi ) 
	\left[ x_{t}^{\prime }(\psi )\left( \kappa -\hat{\kappa}(\psi )\right) 
	\right] \nu _{t}, 
	\upalpha_{4}(\psi ) =2T^{-1}\sum_{t=1}^{T}x_{t}(\psi )x_{t}^{\prime }(\psi ) 
	\left[ x_{t}^{\prime }(\psi )\left( \kappa -\hat{\kappa}(\psi )\right) 
	\right] r_{t},
	\upalpha_{5}(\psi ) =2T^{-1}\sum_{t=1}^{T}x_{t}(\psi )x_{t}^{\prime }(\psi
	)r_{t}\nu _{t}.$ Recall Lemma (\ref{lemma:deltahat}) for $\sup_{\psi \in \Psi }\left\Vert
	\kappa -\hat{\kappa}(\psi )\right\Vert =O_{p}\left(\mu_T^{-1} \sqrt{p/T}\right) .$
	Now, since the maximum eigenvalue of a non-negative definite symmetric
	matrix is less than equal to the trace, $\left\Vert \upalpha_{1}(\psi )\right\Vert \leq T^{-1}\sum_{t=1}^{T}\left\Vert
	x_{t}(\psi )\right\Vert ^{4}\left\Vert \kappa -\hat{\kappa}
	(\psi )\right\Vert ^2  = O_{p}\left(\mu_T^{-2} p^{2}\right) O_{p}\left( p/T\right) ,$
	uniformly in $\psi $, by the fact that $\sup_{t,j}Ex_{tj}^{4}<\infty $ and Lemma
	(\ref{lemma:deltahat}). In a similar fashion, $
		E\left\Vert \upalpha_{2}(\psi )\right\Vert \leq
		2ET^{-1}\sum_{t=1}^{T}x_{t}^{\prime }x_{t}r_{t}^{2}\leq 2\left( E\left(
		x_{t}^{\prime }x_{t}\right) ^{2}Er_{t}^{4}\right) ^{1/2}=O\left( \mu_T^{-2}p/\sqrt{T}%
		\right) .$
	Similarly and using the fact that $E\left( \left\vert
	\nu _{t}\right\vert |x_{t}\right) \leq \sqrt{E\left( \nu
		_{t}^{2}|x_{t}\right) }=O\left( 1\right) ,$ we obtain $\left\Vert \upalpha_{3}(\psi )\right\Vert \leq 4T^{-1}\sum_{t=1}^{T}\left(
		x_{t}^{\prime }x_{t}\right) ^{2}\left\vert \nu _{t}\right\vert
		\left\Vert \kappa -\hat{\kappa}(\psi )\right\Vert ^{2}=O_{p}\left(
		\mu_T^{-2}p^{3}/T\right)$,
		$\left\Vert \upalpha_{4}(\psi )\right\Vert \leq 4\left\Vert \kappa -\hat{\kappa}(\psi )\right\Vert \left(
		T^{-1}\sum_{t=1}^{T}\left( x_{t}^{\prime }x_{t}\right) ^{2}\right)
		^{3/4}\left( T^{-1}\sum_{t=1}^{T}r_{t}^{4}\right) ^{1/4} 
		= O_p\left( \sqrt{{p}/{T}}p^{3/2}{\mu_T^{-1}}{T^{-1/4}}\right)$
		and $\left\Vert \upalpha_{5}(\psi )\right\Vert  =2T^{-1}\sum_{t=1}^{T}\left(
		x_{t}^{\prime }x_{t}\right) \left\vert r_{t}\nu _{t}\right\vert
		=O_{p}\left( p/\sqrt{T}\right) ,$ all uniformly in $\psi\in\Psi $. Thus the first claim is established.
	
	To show the second claim, let $x_{it}$, $i=1,\ldots ,p$, be a typical
	element of $x_{t}$. Then any element of $\tilde{\Xi}(\psi )-\bar{\Xi}%
	(\psi )$ is of the form $T^{-1}\sum_{t=1}^{T}x_{it}(\psi )x_{jt}(\psi
	)\left( \nu _{t}^{2}-\sigma _{t}^{2}\right) $, $i,j=1,\ldots ,p$,
	and $\nu _{t}^{2}-\sigma _{t}^{2}$ is an MDS by construction. Thus,
	it has mean zero and variance $
	T^{-2}\sum_{t=1}^{T}Ex_{it}^{2}(\psi )x_{jt}^{2}(\psi )E\left( \left(
	\nu _{t}^{2}-\sigma _{t}^{2}\right) ^{2}|\mathfrak{G}_{t-1}\right)
	=O_{p}\left( T^{-1}\right)$, 
	by Assumption \ref{ass:errors} and the boundedness of $Ex_{it}^{4}$. Thus, 
	$
	E\left\Vert \tilde{\Xi}(\psi )-\bar{\Xi}(\psi )\right\Vert
	^{2}=O\left( p^{2}/T\right) ,
	$ and the claim follows by Markov's inequality.
\end{proof}
\end{appendix}
\begin{acks}[Acknowledgments]
Corresponding author: Seo. Gupta is also affiliated with Department of Economics, University of Essex, Wivenhoe Park, Colchester, CO4 3SQ, UK.
\end{acks}


\bibliographystyle{imsart-nameyear} 
\bibliography{AGmasterref}       

\end{document}